\definecolor{mred}{RGB}{215,25,28}
\definecolor{mdarkblue}{RGB}{44,123,182}
\definecolor{mlightblue}{RGB}{171,217,233}
\definecolor{morange}{RGB}{253,174,97}
\DeclareMathOperator{\intSol}{Sol.int}
\DeclareMathOperator{\sens}{sens}
\DeclareMathOperator{\dist}{dist}
\DeclareMathOperator{\subDet}{subDet}
\newcommand{\jnew}{{\ensuremath{j_{\mathrm{new}}}}}
\newcommand{\Opt}{\operatorname{\text{\textsc{opt}}}}
\newcommand{\Alg}{\operatorname{\text{\textsc{alg}}}}
\newcommand{\pr}{{\textsc{CCS}}}
\newcommand{\jobs}{\mathcal{J}}
\newcommand{\machs}{\mathcal{M}}
\newcommand{\eps}{\varepsilon}
\newcommand{\phases}{\xi}
\DeclarePairedDelimiter\parenthesis{(}{)}
\DeclarePairedDelimiter\floor{\lfloor}{\rfloor}
\DeclarePairedDelimiter\ceil{\lceil}{\rceil}
\DeclarePairedDelimiter\set{\lbrace}{\rbrace}
\DeclarePairedDelimiterX\sett[2]{\lbrace}{\rbrace}{ #1 \,\delimsize| \,\mathopen{} #2 }
\title{Cardinality Constrained Scheduling in Online Models}
\titlerunning{Cardinality Constrained Scheduling}
\author{Leah Epstein}{Department of Mathematics, University of Haifa, Haifa, Israel}{lea@math.haifa.ac.il}{}{}
\author{Alexandra Lassota}{Chair of Discrete Optimization, EPFL, Lausanne, Switzerland}{alexandra.lassota@epfl.ch}{}{A. Lassota was supported by the Swiss National Science Foundation within the project \emph{Lattice algorithms and Integer Programming}~(200021\_185030/1).}
\author{Asaf Levin}{Faculty of Industrial Engineering and Management, The Technion, Haifa, Israel}{levinas@ie.technion.ac.il}{}{A. Levin was also partially supported by a grant from ISF - Israeli Science Foundation (grant number 308/18).}
\author{Marten Maack}{Heinz Nixdorf Institute \& Department of Computer Science, Paderborn University, Paderborn, Germany}{marten.maack@hni.uni-paderborn.de}{https://orcid.org/0000-0001-7918-6642}{Partially supported by the German Research Foundation (DFG) within the Collaborative Research Centre “On-The-Fly Computing“ under the project number 160364472 --- SFB 901/3.}
\author{Lars Rohwedder}{School of Business and Economics, Maastricht University, Maastricht, Netherlands}{l.rohwedder@maastrichtuniversity.nl}{https://orcid.org/0000-0002-9434-4589}{}
\authorrunning{Epstein et al.}
\keywords{Cardinality Constrained Scheduling, Makespan Minimization, Online Algorithms, Lower Bounds, Pure Online, Migration}
\begin{document}

\maketitle

\begin{abstract}
Makespan minimization on parallel identical machines is a classical and intensively studied problem in
scheduling, and
a classic example for online algorithm analysis with Graham's famous list scheduling algorithm
dating back to the 1960s.
In this problem, jobs arrive over a list and upon an arrival, the algorithm needs to assign the job to a machine.
The goal is to minimize the makespan, that is, the maximum machine load.
In this paper, we consider the variant with an additional cardinality constraint:
The algorithm may assign at most $k$ jobs to each machine where $k$ is part of the input.
While the offline (strongly NP-hard) variant of cardinality constrained scheduling
is well understood and an EPTAS exists here,
no non-trivial results are known for the online variant.
We fill this gap by making a comprehensive study of various different online models.
First, we show that there is a constant competitive algorithm for the problem and further, present a lower bound of $2$ on the competitive ratio of any online algorithm.
Motivated by the lower bound, we consider a semi-online variant where upon arrival of a job of size $p$, we are allowed to migrate jobs of total size at most a constant times $p$. This constant is called the migration factor of the algorithm. Algorithms with small migration factors are a common approach to bridge the performance of online algorithms and offline algorithms. One can obtain algorithms with a constant migration factor by rounding the size of each incoming job and then applying an ordinal algorithm to the resulting rounded instance.
With this in mind, we also consider the framework of ordinal algorithms and characterize
the competitive ratio that can be achieved using the aforementioned approaches.
More specifically, we show that in both cases, one can get a competitive ratio that is strictly lower than
$2$, which is the bound from the standard online setting.
On the other hand, we prove that no PTAS is possible.
\end{abstract}

\newpage

\section{Introduction}
Scheduling jobs on identical parallel machines is a well-studied problem.
Such problems were in particular investigated extensively in online settings, where the algorithm has to make decisions before the whole instance is revealed.
Graham's List Scheduling from the 1960's~\cite{Gr66} is a textbook algorithm by now and an early example of an online algorithm (although the
notion of competitive analysis was not formalized at that time).
In this work, we study a generalization that considers an additional cardinality constraint on the number of
jobs allowed on a machine.

\subparagraph{The Cardinality Constrained Scheduling problem.}
We are given a set $\jobs$ of $n$ jobs, a set $\machs$ of $m$ identical parallel machines and a positive integer $k$.
Each job $j$ has a job size $p_j$, which is also known as the processing time of the job. A feasible solution is a non-preemptive schedule (each job has to be assigned as a whole) satisfying the condition that for each machine $i$, the number of jobs assigned to $i$ is at most $k$.  Our goal is to minimize the makespan, that is, the
maximum completion time of any job.
In the context of makespan minimization, one does not need to explicitly consider the time axis and instead,
a non-preemptive schedule can be defined as a partition of the job set to $m$ machines, that is, a function $\sigma: \jobs \rightarrow \machs$.
The {\em load} of machine $i$ in schedule $\sigma$ is the total size of jobs assigned to $i$, that is, $\sum_{j\in \sigma^{-1}(i)} p_j$.
The objective is to minimize the maximum load of a machine. It is easy to see that given $\sigma$, one can construct a schedule with makespan
equal to the maximum load. Summarizing, the goal for the cardinality constrained scheduling problem is to find a schedule $\sigma: \jobs \rightarrow \machs$ such that $\max_{i\in\machs} |\sigma^{-1}(i)| \leq k$ while minimizing the makespan $C_{\max}(\sigma) = \max_{i\in\machs}\sum_{j\in \sigma^{-1}(i)} p_j$.

The cardinality constraint arises naturally in settings where one needs to balance not only the loads of the machines, but also the number
of jobs. Suppose, for example, one wants to distribute passengers to airplanes for the same trip, but different times. The passengers' luggage weight
may vary and jet fuel usage is very sensitive to excess weight. Thus, the goal is to minimize the maximum loaded airplane (assuming
for simplicity that this dominates the fuel cost).
Extensions of the original problem to multiple dimensions are well-known and studied in both offline and online settings,
see for example the vector scheduling problem in~\cite{DBLP:journals/algorithmica/BansalOVZ16}.
However, in contrast to this problem, the second ``dimension'' in our problem is a hard constraint, which makes it much more difficult to handle.

In the offline setting, the job set is given beforehand and the goal is to find a feasible solution of minimum cost. We refer to \cite{LJ+16,DIMM06,DM01,He03,Ka21,KK13} for previous studies of the offline setting of the problem.
Since the problem is NP-hard in the strong sense, the best possible approximation result is an efficient polynomial time approximation scheme (EPTAS), that is, an algorithm that returns a feasible solution (if one exists) of cost at most $(1+\eps)$ times the optimal cost and the time complexity is upper bounded by the product of a computable function in $\eps$ and a polynomial in the (binary) encoding length of the input.
Such an algorithm was given in~\cite{LJ+16}.
Surprisingly, there exists no previous work on the online setting of this problem.

\subparagraph{Computational models studied in this work.}
In the online setting, the input is given as a sequence of jobs.  After a job is released, the algorithm learns the properties of the job (that is, the job size) and decides on the assignment of this job.  This assignment decision is irrevocable and the algorithm is forced to maintain the feasibility of the solution after the assignment of each job (as long as the input has a feasible solution).  
Once the job assignment is decided, the adversary constructing the input sequence learns the algorithm's assignment decision and chooses the size of the next job or stops the input sequence.
The {\em competitive ratio} of the online algorithm is a valid upper bound on the ratio between the cost of the solution returned by the algorithm and the optimal cost of an offline algorithm that sees the entire input sequence in advance (and may run in exponential time).

The model of ordinal algorithms is different.  Here, the algorithm needs to decide an assignment of $n$ jobs to $m$ machines without seeing the sizes of the jobs. The only information that the algorithm can access is
how these job sizes relate to each other, that is, the jobs are given as a list sorted non-increasingly by their sizes.
If the algorithm has decided upon the assignment $\sigma$, it means that the $i$-th largest job in the input sequence is assigned to machine $\sigma(i)$, and this applies for all $i$.
We say that an ordinal algorithm has {\em rate} $\alpha$ if for every input that satisfies the ordinal assumptions, the cost of the solution constructed by the assignment of the algorithm is at most $\alpha$ times the optimal cost for the same input.

Further, we study the model of algorithms with constant migration factor (also known as {\em robust algorithms}) similar to the online setting.
But, unlike in online algorithms, once a job~$j$ is released, it is also allowed to modify the schedule of a subset of jobs of total size at most $\beta \cdot p_j$ where $\beta$ is the migration factor.  We require that $\beta$ is a constant.  Usually, one cannot maintain an optimal solution using a robust algorithm. Thus, we use robust approximation algorithms.  We will use the terms competitive ratio and approximation ratio interchangeably, since some of our models are intermediate between online algorithms and offline algorithms. 
We say that a polynomial time algorithm that treats the input as a sequence is a robust $\alpha$-approximation algorithm if it has a constant migration factor and in every sequence of jobs, the resulting solution has cost of at most $\alpha$ times the optimal cost. 
Similarly, a robust PTAS is a family of algorithms containing a robust $(1+\eps)$-competitive algorithm for all $\eps>0$.
It is called robust EPTAS (or robust FPTAS respectively) if its running time is upper bounded by some computable function of (or a polynomial in) $\frac{1}{\eps}$ times a polynomial in the binary encoding length of the input.

\subparagraph{Results and outline of the paper.}

We present new results for all three of the models mentioned above.
An overview can be found in~\cref{cc:table:results}.
In the pure online case, we first prove a lower bound of $2$ on the competitive ratio of any (deterministic) algorithm.
A natural idea for an online algorithm is to create a balanced schedule, i.e., a schedule in which the property is maintained that any two machines receive approximately the same number of jobs.
This should limit the adversary's options to exploit the cardinality constraint.
However, we show that such an approach fails by establishing a lower bound of $m$ for the competitive ratio of algorithms maintaining the property that the number of jobs placed on any two machines differs by at most $o(\log(k))$. 
Another simple approach is to use variants of Greedy algorithms such as
the list scheduling algorithm, which always assigns the next job to the
machine with the lowest load. One would need to stop considering a machine
once it has received $k$ jobs.
However, this approach is also deemed to fail, since it may create a large
inbalance in the number of jobs assigned to the machines. If for example
one machine has only one job and all others are full (which could happen
using list scheduling), then the competitive ratio can be $k-1$
(when the next $k-1$ jobs are huge compared to the previous ones).

We utilize these insights in the design of an intricate online algorithm with constant competitive ratio, namely $120$.
This algorithm avoids both lower bounds by allowing a certain inbalance
in the number of jobs, which is then gradually reduced as more and more jobs
arrive. These results as well as a tight $\frac{1+\sqrt{5}}{2}$-competitive online algorithm for the special case $m=k=2$ are presented in \cref{sec:online}.

Next, we consider the mentioned relaxed online settings starting with ordinal algorithms in \cref{sec:ordinal}.
There is a known lower bound of $\frac{3}{2}$ regarding ordinal algorithm for the makespan minimization problem \cite{liu96} which applies to the \pr\ problem as well.
We present an ordinal algorithm with rate $\frac{81}{41}$ for \pr\ which is based on spreading out the $m$ largest jobs over all machines and then filling the machines gradually with a repeating overlapping pattern.
This gives an improvement over the rate $2$, which can be
achieved using a very simple round robin strategy.

In \cref{sec:migration}, we turn our attention to robust algorithms.
First, we show that an ordinal algorithm with rate at most $\alpha$ can be turned into a robust $((1+\epsilon)\alpha)$-approximation with migration factor $\frac{1+\epsilon}{\epsilon}$.
Together with our ordinal algorithm, this shows a separation between
the strict online setting
(having a lower bound of $2$) and the setting with migration.
On the other hand, we present a lower bound of roughly $1.05$ for the ratio of robust algorithms for \pr.
Hence, we cannot hope for a PTAS with a constant migration factor.
However, the lower bound only works for cases with $m\geq 3$ and $k$ part of the input, and we are able to present a robust EPTAS or FPTAS for the case with constant $k$ or $m=2, k > 1/\epsilon^2$ respectively.

We conclude this work in Section~\ref{sec:ClCS} by showing that the results of this paper cannot be extended to a generalization of \pr\ called Class Constrained Scheduling by showing a non-constant lower bound on the competitive ratio of robust algorithms for that problem. 
	\begin{table}
	\caption{An overview of the main results of this paper.}
	\centering
		\begin{tabular}{ll}
			\toprule
			Computational Model & Result \\ \midrule
			Online algorithms & $120$-competitive algorithm, lower bound of $2$ \\
			& Lower bound of $m$ for balanced algorithms\\
			& Tight $\frac{1+\sqrt{5}}{2}$-competitive algorithm for $m=k=2$\\\midrule			
			Ordinal algorithms & Algoritm with rate $\frac{81}{41}$\\\midrule
			Robust algorithms & Robust $((1+\epsilon)\cdot\frac{81}{41})$-approximation with $\frac{1+\epsilon}{\epsilon}$ migration factor \\
			& Lower bound of $\approx 1.05$ for constant migration, $m\geq 3$, unbounded $k$\\
			& Robust FPTAS for $m=2, k > 1/\epsilon^2$, and robust EPTAS for constant $k$\\
			\bottomrule
		\end{tabular}
		\label{cc:table:results}
	\end{table}

\subparagraph{Related work.}
The standard problem of makespan minimization on identical machines is obtained from the \pr\ problem by deleting the constraint saying that the number of jobs assigned to each machine is at most $k$.  At first glance, it seems that letting $k$ grow to infinity in \pr\ would lead to similar results to the ones known for makespan minimization on identical machines (without cardinality constraints). However, we show that this is not the case and the corresponding possible competitive ratios in our problem are significantly higher than the one achievable for the problem without cardinality constraints.  This is the case for the study of online algorithms as well as for robust algorithms.

The possible competitive ratio of the online algorithm for makespan minimization on identical machines is approximately $1.92$ \cite{Albers99,Fleischer2000} whereas the best lower bound for that problem is $1.88$ by Rudin \cite{RudinIII2001}.  
For small constant number of machines, it is known that there are better algorithms, for example, two machines List Scheduling (LS) \cite{Gr66} has a competitive ratio of $\frac{3}{2}$.  We establish a lower bound of $2-\frac 1k$ on the competitive ratio for \pr\ that shows that the possible competitive ratios for \pr\ are strictly higher than the ones achievable for the problem without cardinality constraints both in the regime of a small fixed number of machines and in the general case (in both of these scenarios, we establish a lower bound of $2$ when $k$ grows unboundedly).  Makespan minimization was also studied in terms of ordinal algorithms \cite{liu96} and it is known that there is an ordinal algorithm for this problem on identical machines with constant rate. 
In particular, for large numbers of machines $m$, there is an algorithm of rate at most $\frac 53$ and no algorithm has rate smaller than $\frac 32$.  The model of robust algorithms was introduced in \cite{SSS04} for makespan minimization on identical machines, where it is shown that there is a robust polynomial time approximation scheme for this problem.  Namely, for every $\eps>0$, there is a $(1+\eps)$-approximation algorithm whose migration factor is upper bounded by some function of $\eps$.

Cardinality Constrained Bin Packing (CCBP)  \cite{BCKK04,BBDEL_CCBP,Epstein05,KSS75} is the variant of \pr\ where the maximum job size is at most $1$, the makespan is forced to be at most $1$ in every feasible solution, but the algorithm is allowed to buy machines.  
The goal is to minimize the number of machines bought by the algorithm.
The best possible competitive ratio for CCBP is $2$ with respect to the absolute competitive ratio as well as with respect to the asymptotic competitive ratio  \cite{BCKK04,BBDEL_CCBP,BDE}.  
Regarding robust algorithms, it was shown in \cite{EL13} that for every fixed value of $k$ such that $k\geq 3$, there is no asymptotic approximation scheme for CCBP with constant migration factor.  Observe the difference with our results for \pr\ where for fixed constant value of $k$, we establish the existence of an approximation scheme for \pr\ with constant migration factor.

Ordinal algorithms were studied for other scheduling problems as well, see e.g. \cite{eps18survey,he02,liu96bp,tan01,tan05}, and robust algorithms were designed and analyzed for various scheduling problems and other packing problems (see e.g. \cite{BEJLMR19,BJK20,EL09,EL13,EL14,EL19,f+2018,GSV18,JK19,JKKL17,SV16}).

\subparagraph{Notation.} Throughout the paper, $\log$ refers to a logarithm with base $2$.  For a job subset $J$, we let $p(J)=\sum_{j\in J} p_j$.  For a positive integer $x$, we let $[x]=\{ 1,2,\ldots ,x\}$. Without loss of generality, we assume $\machs = [m]$.  When we consider a specific algorithm (online, ordinal, or an algorithm with constant migration), we let $\Alg$ denote the cost of the solution constructed by the algorithm, and we let $\Opt$ denote the optimal offline cost for the same instance.

\section{Pure online algorithms\label{sec:online}}
In this section, we study the competitive ratios of online algorithm for \pr.  We start by providing our lower bounds and then turn our attention to present a constant competitive algorithm for \pr.
Finally, we consider the special case $m=k=2$.

\subsection{Lower bound}

We show that when considering the online problem, there is no (deterministic) algorithm that has a competitive ratio smaller than $2$.

\begin{theorem}\label{pure2}
No online algorithm for \pr\ has a competitive ratio strictly smaller than $2$, and for a fixed value of $k$, no algorithm has a competitive ratio strictly smaller than $2-\frac 1k$.
\end{theorem}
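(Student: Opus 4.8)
The plan is to construct an adversarial job sequence that forces any deterministic online algorithm into a bad position. The core idea: first release a batch of tiny jobs, then release a batch of huge jobs. If the algorithm spreads the tiny jobs evenly, then after the tiny phase every machine has roughly the same number of tiny jobs, which leaves little room on each machine for the huge jobs — forcing the algorithm to pile several huge jobs on some machine while the optimum could have reserved empty slots for them. Conversely, if the algorithm keeps some machines nearly empty (few tiny jobs) so as to have room for huge jobs later, then it must have overloaded the remaining machines with tiny jobs, and the adversary simply stops — again yielding a ratio close to $2$ against an optimum that balances the tiny jobs perfectly. The adversary's choice (continue with huge jobs, or stop) is made after seeing the algorithm's assignment of the tiny jobs, which is exactly the power the online model grants.

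First I would fix the instance: take $n = mk$ jobs in total so that in any feasible schedule every machine receives exactly $k$ jobs. Release the first $m(k-1)$ jobs of size $1$ (the ``tiny'' jobs). Let $a_i$ denote the number of tiny jobs the algorithm has placed on machine $i$; since $\sum_i a_i = m(k-1)$, some machine $i^\star$ has $a_{i^\star} \ge k-1$. Now branch on the adversary's decision. If the algorithm has left the job counts fairly balanced — more precisely, if no machine has fewer than some threshold of tiny jobs — then the adversary releases $m$ huge jobs of size $M$ (with $M$ chosen large, say $M \ge m(k-1)$). The algorithm has at most $\sum_i (k - a_i) = m$ free slots total, but these free slots are distributed so that at most, say, one machine can absorb two or more huge jobs only if it had been kept light; I would quantify this to show some machine ends with load at least $2M$ minus lower-order terms, while $\Opt$ puts one huge job on each machine together with the tiny jobs evenly, achieving makespan $M + (k-1)$. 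Taking $M \to \infty$ drives the ratio to $2$. In the other branch, if the algorithm kept some machine light (few tiny jobs), then by the pigeonhole count some other machine carries at least $k-1+$ extra tiny jobs beyond the average; the adversary stops, and since all sizes are $1$ we get $\Alg \ge$ that count while $\Opt = k-1$, again approaching ratio $2$ as $k \to \infty$ and giving exactly $2 - \tfrac1k$ for fixed $k$ with the right bookkeeping.

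The main obstacle is calibrating the two branches so that the \emph{same} threshold on the tiny-job distribution makes both cases yield a ratio approaching $2$ (resp. $\ge 2 - 1/k$ for fixed $k$): push the threshold one way and the ``stop'' branch becomes weak, push it the other way and the ``huge jobs'' branch becomes weak. I expect the clean choice is to let the adversary release huge jobs precisely when every machine has at most $k-1$ tiny jobs with room for at least one more — and in fact one can iterate: release huge jobs in rounds, each round of size scaled up by a large factor relative to the previous, so that whatever the algorithm does it eventually must double up huge jobs of the current (dominant) scale on some machine. Handling this iteration carefully, and verifying that $\Opt$ can always be kept at essentially the largest single job size plus negligible slack, is the delicate part; the rest is counting with the constraint $\sum_i a_i = m(k-1)$ and the pigeonhole principle.
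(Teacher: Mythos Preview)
Your two branches are inverted, and neither works as stated.

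In the ``balanced'' branch you release $m$ huge jobs. But if the tiny jobs are perfectly balanced---$a_i = k-1$ for every $i$, which is forced once $\sum_i a_i = m(k-1)$ and no $a_i < k-1$---then every machine has exactly one free slot, so each machine receives exactly one huge job and the algorithm's makespan is $M + (k-1) = \Opt$. Ratio~$1$, not~$2$. The $m$-huge-jobs trick only bites when some machine has at most $k-2$ tiny jobs (hence at least two free slots), i.e., precisely in the \emph{imbalanced} case.

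In the ``light machine'' branch you stop after the tiny jobs. But every job has size~$1$ and the cardinality constraint caps each machine at $k$ jobs, so $\Alg \le k$ while $\Opt = k-1$. The ratio is at most $k/(k-1) \to 1$, not $2$. Pigeonhole cannot push any machine above $k$ unit jobs.

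The correct dichotomy swaps your two responses and replaces the ``stop'' move by something else entirely. If the algorithm is perfectly balanced ($a_i = k-1$ for all $i$), the adversary releases a \emph{single} job of size $k$; it lands on a machine already carrying $k-1$ unit jobs, so $\Alg = 2k-1$, while $\Opt$ isolates the big job on one machine and spreads the $m(k-1)$ unit jobs over the remaining $m-1$ machines---feasible exactly when $m \ge k$, an assumption you also need to state---giving $\Opt = k$ and ratio $2 - 1/k$. If instead some machine has at most $k-2$ tiny jobs, then release $m$ huge jobs of size $N$; now that machine must absorb at least two of them, so $\Alg \ge 2N$ while $\Opt = N + (k-1)$, and the ratio tends to $2$ as $N \to \infty$. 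No iteration is needed.
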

\begin{proof}
We assume that $m\geq k$.
The input consists of two phases.  In the first phase, $m\times (k-1)$ jobs of size $1$ arrive.  Then the adversary examines the output of the algorithm.  If the algorithm has assigned exactly $k-1$ jobs to each machine, then the input continues with one big job of size $k$ that is the last job of the input.  In this case, we have that the cost of the algorithm is $\Alg = (k-1)+ k $, whereas the optimal offline cost is $\Opt = k$ as the offline solution could place at most $k$ unit sized jobs on $m-1$ machines and schedule the one big job on the remaining machine.  Observe that since $m\geq k$, we have $(m-1)\cdot k \geq m\cdot (k-1)$, so indeed there is such a schedule that assigns all unit sized jobs.

Next, consider the case that there is at least one machine that received at most $k-2$ jobs in phase 1.
Then the input continues with $m$ jobs of size $N$  where $N$ is some huge number. Then, at the end of these $m$ additional jobs, every machine is assigned exactly $k$ jobs, and by the condition of this case, we have $\Alg \geq 2N$. However, in this input, we can schedule on each machine exactly $k-1$ unit sized jobs together with one job of size $N$, and thus  $\Opt = N + k$.

In the first case, we have a ratio of $2 - 1/k$, and in the second case, the competitive ratio is at least $2N/(N+k)$.  Thus, when $k$ is a fixed constant, by letting $N$ grow unbounded, we get a lower bound of $2-\frac 1k$.  When $N=k$ and this common value grows unbounded, we get that the competitive ratio is at least $2$.
\end{proof}

We sometimes use the intuition of every machine having $k$ slots, each of which may contain exactly one job or may be empty.
Note that the ratio in the second case gets worse if a larger number of slots remain free on some machine, since the total number of free slots remaining after the arrival of the first phase is $m$.
This gives us the intuition that an algorithm should try to balance the number of jobs each machine receives.
However, this possible strategy cannot guarantee a constant competitive algorithm as the next proposition shows.
\begin{proposition}\label{prop2}
Let $t\geq 1$ be an integer number that may depend on $k$ such that $t=o(\log k)$.  Let $\Alg$ be an algorithm that maintains the invariant that the number of jobs placed on any two machines may differ by at most $t$.
Then the competitive ratio of $\Alg$ is at least $m$.
\end{proposition}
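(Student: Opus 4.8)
I would amplify the construction behind \cref{pure2} into an adaptive instance with roughly $\log k$ rounds of job arrivals, exploiting that a $t$-balanced algorithm with $t=o(\log k)$ has too little slack to survive all of them. It is enough to do this for $m$ a small constant and $k$ large (for $k$ too small relative to $m$ there is no valid $t$ and the claim is vacuous).

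Recall the mechanism of \cref{pure2}: once the adversary has filled the machines with tiny jobs so that only $\Theta(m)$ slots remain free in total, a $t$-balanced algorithm is trapped. If it is perfectly packed (one free slot per machine), a single huge job already forces a ratio close to $2$; otherwise some machine carries at least two free slots, and $m$ huge jobs filling exactly the free slots pile up there, again forcing a ratio close to $2$. The point is that the balance invariant forbids the algorithm from simultaneously filling tightly and keeping a spread-out reserve of free slots, whereas the optimum, not being balanced, can always keep the huge jobs one per machine. The hypothesis $t=o(\log k)$ is what makes the trap unavoidable even for $t\ge 2$ once we iterate it.

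To reach ratio $m$, the plan is to iterate this trap in $L=\Theta(\log k)$ rounds, where round $\ell$ uses a job size $s_\ell$ chosen so large that $s_{\ell+1}$ dwarfs the entire load produced in rounds $1,\dots,\ell$ (e.g.\ $s_{\ell+1}>mk\cdot s_\ell$). In round $\ell$ the adversary first sends a calibrated batch of size-$s_\ell$ jobs that brings the machines up to an almost-full configuration with only a controlled number of open slots; since most machines are then full and the balance invariant forbids concentrating free slots, the algorithm is forced to place the remaining (large) size-$s_\ell$ jobs on a shrinking set of ``lagging'' machines. I would maintain, by induction on $\ell$, an invariant of the shape: \emph{either $\Alg$ has already built a machine of load at least $m\cdot\Opt$, or after round $\ell$ there is a nonempty set of machines each carrying load at least $\lambda_\ell$ from rounds $\le \ell$, with $\lambda_\ell$ growing geometrically in the number of rounds that have already ``bitten''}. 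In the last round the adversary fills the residual capacity of the surviving lagging machine(s) with huge jobs. In parallel, one exhibits a single count-unbalanced offline schedule (in the spirit of the analysis in \cref{pure2}) that keeps the large jobs of every round spread roughly evenly over the $m$ machines and tucks all small jobs onto machines with spare cardinality, so that $\Opt$ is, up to lower-order terms, the maximum over $\ell$ of $(\text{number of round-}\ell\text{ large jobs})\cdot s_\ell/m$; by the geometric growth this is dominated by the last round and ends up a factor $\approx m$ below $\Alg$'s final load.

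The main obstacle is making the funnelling rigorous against a clever $t$-balanced algorithm: such an algorithm will try to distribute its $t$ units of count-slack evenly over the rounds so as to keep a reserve of free slots in every round, and one must show that $t$ units cannot hedge against $\Theta(\log k)$ rounds — a counting argument of exactly this flavour is where $t=o(\log k)$ is used. The remaining steps — fixing the number of jobs per round so that feasibility and the cardinality bound are respected, verifying the offline schedule, and checking that the $\lambda_\ell$ telescope up to $m\cdot\Opt$ — are of the routine kind the statement asks me to skip, and together they give $\Alg\ge m\cdot\Opt$ on the constructed instance, i.e.\ competitive ratio at least $m$.
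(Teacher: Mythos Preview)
Your plan has a genuine gap: the mechanism you describe does not reach competitive ratio $m$, and the iteration structure is internally inconsistent.

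First, the argument of \cref{pure2} that you want to amplify yields a ratio of (at most) $2$: the adversary fills the machines almost completely with tiny jobs and then sends huge jobs, forcing some machine to carry two huge jobs while the optimum keeps one per machine. Iterating this in $\Theta(\log k)$ rounds with geometrically growing sizes $s_\ell$ does not multiply ratios; in each round the offline optimum can also spread the new huge jobs one per machine, so both $\Alg$ and $\Opt$ scale by roughly the same factor, and you are still at ratio $\approx 2$. Your inductive invariant ``$\lambda_\ell$ grows geometrically'' is true, but $\Opt$ grows at the same geometric rate, so the ratio never climbs past a constant. You have not identified any mechanism by which the algorithm's worst machine accumulates an $m$-fold share of the total load; ``funnelling'' as you describe it would need the lagging machines to receive \emph{all} huge jobs of a round, not just two, and the balance invariant does not force that. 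Second, your rounds are inconsistent: you say each round ``brings the machines up to an almost-full configuration'', but if round $1$ already leaves only $O(m)$ free slots, there is no room for $\Theta(\log k)$ further rounds.

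The paper's proof uses a completely different and much more direct idea. It does not fill machines and then send huge jobs. Instead, the adversary runs $k$ rounds; within round $r$ it releases jobs of sizes $1,N,N^2,N^3,\dots$ (a geometric sequence) and stops the round the moment machine $1$ receives a job. The balance invariant guarantees machine $1$ must receive a job within at most $2mt$ steps of each round. The key point is that machine $1$ always receives the \emph{last and largest} job of its round, which by the geometric growth carries a $(1-\tfrac1N)$-fraction of that round's total size. Summing over all $k$ rounds, machine $1$ carries a $(1-\tfrac1N)$-fraction of the \emph{entire} instance's load $X$, while the offline round-robin schedule on sorted jobs achieves makespan at most $X/m + p_{\max}$ with $p_{\max}\le N^{2mt-1}$. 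Since $t=o(\log k)$, the term $N^{2mt-1}$ is negligible compared to $X\ge k$ for large $k$, and the ratio tends to $m\cdot\tfrac{N-1}{N}$, which can be made arbitrarily close to $m$. The role of $t=o(\log k)$ here is not to count hedging over rounds but simply to keep $p_{\max}$ sub-linear in $k$.
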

\begin{proof}
Assume by contradiction that there is a value of $m$ for which $\Alg$ has a competitive ratio $\rho(m) < m$. We select one such value of $m$ and see it as a constant.
Let $N$ be a large positive number such that $\frac{N-1}{N} = 1 -\frac 1N > \frac{\rho(m)}{m}$, and note that since $\rho(m)<m$, there exists such $N$ for which this inequality holds.
Let $k \geq 2$ be an integer such that $m \cdot  \frac{k \cdot \frac{N-1}{N}}{k+m\cdot N^{2mt -1}} > \rho(m)$, that is, $\frac{k \cdot \frac{N-1}{N}}{k+m\cdot N^{2mt -1}} > \frac{\rho(m)}{m}$.
Observe that when $k$ grows unbounded then $\frac{k \cdot \frac{N-1}{N}}{k+m\cdot N^{2mt-1}}$ tends to $\frac{N-1}{N}$ using the assumption that $t=o(\log k)$. By the condition on $N$, this is larger than the right hand side. Thus,  there exists an integer $k_0$ such that for all $k\geq k_0$ the required inequality holds, so we indeed can pick such a value of $k$.

Then, we consider the following input sequence.
The sequence ends once $k$ jobs are assigned to machine number $1$.
Further, the input sequence is split into $k$ rounds where
a round ends whenever a job is scheduled on machine number $1$. The job sizes within a round are chosen as follows.
The adversary presents jobs of size $1,N,N^2,N^3,\ldots$ (a geometric sequence of sizes that are different integer powers of $N$, starting with a job of size $1$), until the first time during this round when the current job is assigned to machine $1$ (the round ends when machine $1$ receives a job).  Observe that by the invariants of the algorithm, this event must happens after no more than $2 m\cdot t$ jobs of the round were released (this holds even if before the assignment the number of jobs of machine $1$ is much smaller than that of other machines and it becomes much larger).  Afterwards, a new round starts.  Observe that if the job assigned to $1$ is the $\ell$-th job of the round, then its size is $N^{\ell-1}$, whereas the total size of all jobs of this round is $\frac{N^\ell - 1}{N-1} < \frac{N^{\ell}}{N-1}$ that is smaller than $1+\frac{1}{N-1}$ times the size of the job assigned to $1$.  This applies for every round and therefore, the load of machine $1$ in the solution of $\Alg$ (to the entire instance) is larger than $\frac{N-1}{N}$ times the total size of jobs of the instance.

Denote by $X$ the total size of jobs of the instance, then we have $\Alg \geq X \cdot \frac{N-1}{N}$.
Let $p_{\max}$ be the maximum size of a job in the input sequence, then by the construction we conclude that $p_{\max} \leq N^{2mt-1}$, and we argue that there is an offline solution with cost at most $\frac{X}{m}+p_{\max}$.  In order to exhibit such an offline solution, consider the solution obtained by first sorting the jobs in a non-increasing order of their sizes, and then allocate the jobs of this sorted list in a round-robin fashion. Formally, the $i$-th job of the sorted list is allocated to machine of index $1+(i-1)\mod m$.  This round-robin allocation of jobs of non-increasing sizes has the property that if we consider the machine attaining the makespan, then by deleting the maximum sized job assigned to that machine, we get a load strictly smaller than the load of any other machine.  So the load of each machine in this round-robin solution is at most $\frac{X}{m}+p_{\max}$, and furthermore, the number of jobs assigned to each machine is exactly $k$.

Thus, the competitive ratio of $\Alg$ is at least $$\frac{X \cdot \frac{N-1}{N}}{\frac{X}{m}+p_{\max}} \geq \frac{X \cdot \frac{N-1}{N}}{\frac{X}{m}+N^{2mt -1}} \geq  \frac{k \cdot \frac{N-1}{N}}{\frac{k}{m}+N^{2mt -1}} = m \cdot  \frac{k \cdot \frac{N-1}{N}}{k+m\cdot N^{2mt -1}} > \rho(m) \ \ ,$$
  where the first inequality holds as $p_{\max} \leq N^{2mt -1}$, the second inequality holds as the function of $X$ defined as $\frac{X \cdot \frac{N-1}{N}}{\frac{X}{m}+N^{2mt-1}}$ is monotone increasing and $X\geq k$ using the fact that the first job of each round is a non-zero sized job and there are exactly $k$ rounds, and the last inequality holds by the condition on $k$. So, we get a contradiction to the assumption that the competitive ratio of $\Alg$ is smaller than $m$.
\end{proof}

Observe that obtaining an online algorithm with a competitive ratio of $\min\{ m,k\}$ is trivial, as any feasible solution has a cost that is at most $\min\{ m,k\}$ times the optimal cost. Thus, this is the competitive ratio of scheduling the jobs in a round-robin manner. Hence, the lower bound of Proposition~\ref{prop2} for this class of algorithms means that in order to establish small competitive algorithms, we need to exhibit an algorithm that does not belong to this class. In fact, in our algorithm, we have situations where there is a pair of machines with cardinalities that differ by $\Theta (\log k)$.

\subsection{A competitive algorithm for \pr}

Next, we present an algorithm for \pr\
with a constant competitive ratio.
For simplicity we assume that every job's size is a power of two, that is,
$p_j = 2^i$ for some (not necessarily positive) integer $i$.
More precisely, every incoming job is rounded down accordingly. Then the resulting makespan
(and the competitive ratio) will only increase by a factor of $2$ when considering the correct sizes.

\subparagraph{The general idea of the algorithm.}
We start by briefly discussing the main idea of the algorithm.
For simplicity of presentation, assume that we know the value $p^{\infty}_{\max}$, which is the maximum
size of any job by the end of the instance.

We group jobs by size. Group $G_i$ contains the jobs $j$ with $p_j = p^{\infty}_{\max} / 2^{i}$
for $i=0,\dotsc,\lfloor \log k \rfloor$. Further, group $G_{\infty}$ contains all smaller jobs,
that is, jobs of sizes below $p^{\infty}_{\max}/ 2^{\lfloor\log k \rfloor + 1} \le p^{\infty}_{\max} / k$
(recall that the jobs are rounded to powers of $2$).
Consider the following approach: Each group is scheduled independently using a round-robin strategy.
For each group, the first job of this size is assigned to machine $1$, the next one to machine $2$, etc.
Once every machine has one job of $G_i$, we continue with machine $1$ again for this group. This is done for all values of $i$ including $\infty$, see Figure~\ref{fig:log-competitive} for an illustration.
This method approximately balances both the loads of the machines and their cardinalities:
There are still differences between machine loads due to two reasons. 
First, each group may have one additional job (of the group) assigned to some machines compared to the other machines (this happens when the number of jobs of the group is not divisible by $m$). Second, the group $G_{\infty}$ may have jobs of very different sizes. All these sizes are small with respect to the maximum job size, and they are smaller by a factor of at least $k$ (so the total size of $k$ such jobs, which is the maximum per machine, is still at most $p^{\infty}_{\max}$).  Thus, the load of every machine is at most the average load plus an additional additive error term that is  at most
\begin{equation*}
  \sum_{i=0}^{\lfloor \log k  \rfloor} \frac{p^{\infty}_{\max}}{2^{i}} + k \cdot \frac{p^{\infty}_{\max}}{2^{\log k}}
  \le 3 p^{\infty}_{\max} .
\end{equation*}
We use $\Opt$ to denote the optimal makespan for the entire input. As $p^{\infty}_{\max}$ forms a lower bound on $\Opt$, and the average load is also a valid lower bound on $\Opt$, the makespan is never larger than four times the optimal makespan.
An issue arises once the cardinality on some machines arrives at $k$.
In that case it is no longer feasible to schedule all groups independently.
On the other hand, every pair of cardinalities (for two machines) differs by at most $\lfloor \log k  \rfloor + 2$.
Thus, if this difficulty occurs, then on each machine, there is only space for $O(\log k )$ more jobs (this is the number of remaining slots). If we assign the remaining jobs
arbitrarily, the difference in loads can increase only by $O(\log k ) \cdot p^{\infty}_{\max}$.
This implies we get a $O(\log k )$-competitive algorithm assuming we know the value $p^{\infty}_{\max}$.
\begin{figure}
  \centering
  \begin{tikzpicture}
    \draw[thick] (0.0-0.1, 3) -- (0-0.1, 0-0.1) -- (0.5+0.1, 0-0.1) -- (0.5+0.1, 3);
    \draw[thick] (1.0-0.1, 3) -- (1-0.1, 0-0.1) -- (1.5+0.1, 0-0.1) -- (1.5+0.1, 3);
    \draw[thick] (2.0-0.1, 3) -- (2-0.1, 0-0.1) -- (2.5+0.1, 0-0.1) -- (2.5+0.1, 3);
    \draw[thick] (3.0-0.1, 3) -- (3-0.1, 0-0.1) -- (3.5+0.1, 0-0.1) -- (3.5+0.1, 3);

    \draw[thick] (0, 0) rectangle (0.5, 0.5);
    \draw[thick] (1, 0) rectangle (1.5, 0.5);
    \draw[thick] (2, 0) rectangle (2.5, 0.5);
    \draw[thick] (3, 0) rectangle (3.5, 0.5);
    \draw[thick] (0, 0.5) rectangle (0.5, 1);
    \draw[thick] (1, 0.5) rectangle (1.5, 1);

    \draw[thick] (0, 0.0+1.1) rectangle (0.5, 0.25+1.1);
    \draw[thick] (1, 0.0+1.1) rectangle (1.5, 0.25+1.1);
    \draw[thick] (2, 0.0+1.1) rectangle (2.5, 0.25+1.1);
    \draw[thick] (3, 0.0+1.1) rectangle (3.5, 0.25+1.1);
    \draw[thick] (0, 0.25+1.1) rectangle (0.5, 0.5+1.1);
    \draw[thick] (1, 0.25+1.1) rectangle (1.5, 0.5+1.1);
    \draw[thick] (2, 0.25+1.1) rectangle (2.5, 0.5+1.1);
    \draw[thick] (3, 0.25+1.1) rectangle (3.5, 0.5+1.1);
    \draw[thick] (0, 0.5+1.1) rectangle (0.5, 0.75+1.1);

    \draw[thick] (0, 0.0+1.95) rectangle (0.5, 0.125+1.95);
    \draw[thick] (1, 0.0+1.95) rectangle (1.5, 0.125+1.95);
    \draw[thick] (2, 0.0+1.95) rectangle (2.5, 0.125+1.95);
    \draw[thick] (3, 0.0+1.95) rectangle (3.5, 0.125+1.95);
    \draw[thick] (0, 0.125+1.95) rectangle (0.5, 0.25+1.95);
    \draw[thick] (1, 0.125+1.95) rectangle (1.5, 0.25+1.95);
    \draw[thick] (2, 0.125+1.95) rectangle (2.5, 0.25+1.95);
    \draw[thick] (3, 0.125+1.95) rectangle (3.5, 0.25+1.95);
    \draw[thick] (0, 0.25+1.95) rectangle (0.5, 0.375+1.95);
    \draw[thick] (1, 0.25+1.95) rectangle (1.5, 0.375+1.95);
    \draw[thick] (2, 0.25+1.95) rectangle (2.5, 0.375+1.95);
    \draw[thick] (3, 0.25+1.95) rectangle (3.5, 0.375+1.95);
  \end{tikzpicture}
  \caption{Example schedule of the round-robin based algorithm}
  \label{fig:log-competitive}
\end{figure}
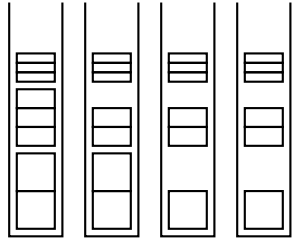

Indeed, the algorithm above is even $8$-competitive (including the factor of $2$ due to rounding)
as long as no machine's cardinality is full, that is, as long as no machine has $k$ jobs assigned to by the algorithm.
On the other hand, we showed in Proposition \ref{prop2} that no competitive algorithm can maintain a constant difference in the cardinalities of the job sets of any two machines, or even a difference of $o(\log k)$.
Nevertheless, we manage to obtain a constant competitive algorithm by modifying the approach stated above further.
Towards this we gradually decrease the number of groups $G_i$ as the machines get filled more and more,
so that by the time some machine is full (i.e., has been assigned $k$ jobs) there are only a constant number of groups and, in particular,
the cardinalities of the job sets assigned to any two machines differ only by a constant.
We will also remove the assumption of knowing $p^{\infty}_{\max}$ in advance.

\subparagraph{The algorithm.}
We first need to introduce some formal definitions.
We think of every machine as having $k$ slots, each of which may contain one job or it can be empty.
We form $k$ rows of these slots, where every row contains one slot of each machine.  A row is {\em full} if each slot of the row has a job, and it is {\em not full} otherwise, i.e., at least one slot of the row is empty.  A row is {\em free} if it has no job at all, i.e., all its slots are empty.
In the following,    let $p_{\max}$ denote  the maximum job size
of a job seen so far, and let $p^{\infty}_{\max}$ denote the maximum job size by the end of the instance.
We keep track of $p_{\max}$ in an online fashion in the sense that whenever a new job is released we check if we need to update (possibly increase) these values.

We form a partition of the jobs (released so far) into the {\em groups} $G_0,\dotsc,G_\ell, G_\infty$, where $\ell = \lfloor 2 \log(k) \rfloor$
and $G_i$, $i=0,\dotsc,\ell$, contains all jobs $j$ of size $p_j = p_{\max} / 2^i$.
Group $G_\infty$ contains all other (smaller) jobs.

The algorithm is defined recursively and in the recursive calls, it may remove full rows from the
existing schedule. The removal of the rows is only with respect to the rules of assigning
future jobs to the machines (in the sense that the jobs that were assigned to slots of these rows are still assigned to the corresponding machines and are not actually removed).
When rows are removed, and the number of rows is decreased, the value of $k$ will be decreased and the value of $\ell$ will be updated accordingly.
For a fixed value of $\ell$, we can maintain the partition into groups in an online fashion. The value of $\ell$ may become smaller, in which case some groups are merged into the group of small jobs. Since the jobs of each group except for $G_\infty$ share a common size, groups that are not merged remain unchanged.

To cope with the dynamic grouping where large jobs may later become small over time,
we change the previous algorithm in the following way.
Instead of keeping one row that is currently being filled for each group, we keep two.
One of these two rows may also contain jobs from $G_\infty$.
To make this more precise, we now formally state the structural invariants of the algorithm's schedule.

\subparagraph{The invariants of the algorithm.}
Consider the following structure in a schedule.
For each $i=0,\dotsc,\ell$ there are exactly two rows $r_i, r'_i$
containing elements of the group $G_i$. Row $r_i$ contains only elements
of $G_i$, whereas row $r'_i$ contains elements of $G_i$ and of $G_\infty$.
Moreover, of each pair of rows corresponding to a common group, at least one is not full.
Finally, there are $\lceil k / 2 \rceil - 2 (\ell + 1)$ rows containing
only elements from $G_\infty$ and none of them is full.
The remaining $\lfloor k/2 \rfloor$ rows are free.

The structure can be built trivially in the beginning when all rows are empty and
$2(\ell + 1) = 2\lfloor 2\log(k) \rfloor + 2 \le \lceil k/2 \rceil$, which holds for all $k \ge 49$.
In the case where $k$ is initially smaller, the algorithm will output an arbitrary feasible schedule,
which is $48$-competitive.

The definition of this structure may give the (false) impression that the algorithm tries to
keep the rows not full. This is not the case.
In its recursive calls, the algorithm removes certain rows from the instance when they become full
and the invariants above can also be read as: When two rows $r_i, r'_i$
(or one row belonging to $G_\infty$) become full, they need to be removed from the instance
and new rows (e.g., from the empty ones) need to be allocated to take their place.
When the algorithm removes a pair of rows (two rows corresponding to a common group become full),
it also decreases $k$ by $2$.

We will argue inductively that given such a structure, we can assign the remaining
jobs using the recursive algorithm in a way that maintains the invariants, and so that each
machine ends up with a total load (including the jobs already in the schedule) of at most
\begin{equation}\label{ldbound}
\frac{2}{m} \cdot p(J) + 8 \sum_{i=\log(p_{\max})}^{\log(p_{\max}^{\infty})} 2^i + \left(42 - \frac{1}{k-1} \right) p^\infty_{\max}.
\end{equation}
Notice that the second term is the sum of all job sizes (that is, all powers of $2$)
between $p_{\max}$ and $p^{\infty}_{\max}$ (multiplied with $8$).
Since the average load $p(J)/m$ and $1/2\cdot \sum_{i=\log(p_{\max})}^{\log(p_{\max}^{\infty})} 2^{i} \le p^{\infty}_{\max}$ form lower bounds on $\Opt$, this constitutes a $60$-competitive algorithm ($120$-competitive
when taking into account the initial rounding).
For $k \in \{48, 49\}$ it is trivial to see that there is an algorithm which (given the structure above)
assigns all remaining jobs online while maintaining the bound (\ref{ldbound}) on the loads.
This is because (\ref{ldbound}) is greater than
$49 p^{\infty}_{\max}$ and therefore any feasible schedule satisfies the bound. This proves the base case of our inductive argument.

\subparagraph{The algorithm for scheduling the next job while maintaining the invariants and satisfying the load bound (\ref{ldbound}).}
Let $h$ denote the number of free slots in our current schedule, that is, $k\cdot m$ minus the number of jobs
in the schedule.
Our induction is over $k$ and $h$:
If $k\in \{48, 49\}$, we observed that we can guarantee the makespan bound~\eqref{ldbound};
the base case $h = 0$ does not have to be considered, since this contradicts
the presumed structure on the current schedule (the induction will always end in $k \in \{48, 49\}$).
Assume now that $k \ge 50$ and that we have an algorithm that for all $k' \in \{k, k-1, k-2\}$ and $h' < h$
can continue the schedule with the specified structure while guaranteeing the bound~\eqref{ldbound}.
For all $k \ge 50$ it holds that $2 (\ell + 1) = 2 + 2 \lfloor 2 \log(k) \rfloor < \lceil k / 2 \rceil$.
This implies that the number of rows dedicated to $G_\infty$, 
$\lceil k / 2 \rceil - 2(\ell + 1)$, is strictly positive and the number of free rows,
$\lfloor k / 2 \rfloor$, is at least $25$, which will be important for our induction step.
Suppose some job $\jnew$ arrives.

First consider the case that $p_\jnew \le p_{\max}$ (referring to the value of $p_{\max}$ before
the arrival of $\jnew$)
and $\jnew \notin G_\infty$. Let $G_i$ be the group with $\jnew \in G_i$.
We assign $\jnew$ to an arbitrary empty slot
in $r_i$ or $r'_i$.
If one of the rows remains not full, we have maintained the structure and use
the induction hypothesis to prove that we can construct the remaining schedule
with the desired guarantee.
If on the other hand this makes both rows $r_i$ and $r'_i$ full, we first
remove the two full rows and then we are going to use the induction hypothesis as described below:
We set $k' := k - 2$.
This reduces the index of the last group to $\ell' = \lfloor 2\log(k')\rfloor$.
However, notice that $\ell \ge \ell' \ge \ell - 1$ (so it is possible that one group is merged into the group of small jobs).
\begin{description}
\item[Case 1: $\ell' = \ell$.] We remove one row dedicated to $G_\infty$ and pair it
with an empty row to form new rows $r'_i$ and $r_i$ respectively (we recall that the numbers of such rows are positive before the removal).
The number of rows dedicated to $G_\infty$ is now
$\lfloor k / 2 \rfloor - 2(\ell + 1) - 1 = \lfloor k'/2\rfloor - 2(\ell' + 1)$.
Hence, the structure is repaired and we can use the induction hypothesis.

\item[Case 2: $\ell' = \ell - 1$ and $i = \ell$.] Then group $G_i$ disappears,
the number of rows dedicated to $G_\infty$ is still
$\lfloor k/2 \rfloor - 2(\ell + 1) = \lfloor k' / 2 \rfloor - 2(\ell' + 1) - 1$.
To repair the structure, we add one empty row to the group $G_\infty$.

\item[Case 3: $\ell' = \ell - 1$ and $i < \ell$.] Then group $G_\ell$ is merged into $G_{\infty}$, and it creates two new rows for $G_\infty$ that were corresponding previously to $G_{\ell}$. We create new rows $r_i$ and $r'_i$ corresponding to $G_i$
by taking one of these rows previously corresponding to $G_{\ell}$ (the complete one if there is such a row) as $r'_i$
and an empty one as $r_i$.
This means the number of rows dedicated to $G_\infty$ is
$\lfloor k/2 \rfloor - 2(\ell + 1) + 1 = \lfloor k' / 2 \rfloor - 2 (\ell' + 1)$ and no row dedicated to $G_{\infty}$ is full (because full rows of $G_{\infty}$ will always be removed, and the only one that was perhaps temporarily added to the set of its rows was moved to the set of another group).
\end{description}
Let $J_C$ denote the jobs in the two full rows that we have just removed. By induction hypothesis,
we obtain a schedule for $J\setminus J_C$ where the load of each machine is at most
\begin{equation*}
\frac{2}{m} \cdot p(J\setminus J_C) + 8 \sum_{i=\log(p_{\max})}^{\log(p_{\max}^{\infty})} 2^{i} + \left(42 - \frac{1}{k' - 1}\right) p^\infty_{\max} .
\end{equation*}
Notice that the total size of the two jobs of $J_C$ on each machine is at least $p_{\max} / 2^i$ and
at most
$2 \cdot p_{\max} / 2^i \le 2/m \cdot p(J_C)$.
Thus, the total load on each machine is at most
\begin{multline*}
\frac{2}{m} p(J_C) + \frac{2}{m} p(J\setminus J_C) + 8 \sum_{i=\log(p_{\max})}^{\log(p_{\max}^{\infty})} 2^{i} + \left(42 - \frac{1}{k' - 1} \right) p^\infty_{\max} \\
\le \frac{2}{m} p(J) + 8 \sum_{i=\log(p_{\max})}^{\log(p_{\max}^{\infty})} 2^{i} + \left(42 - \frac{1}{k-1} \right) p^\infty_{\max} .
\end{multline*}
Now consider the case that $\jnew \in G_\infty$. We
add $\jnew$ to an arbitrary row dedicated to $G_\infty$. If the row remains
not full, we can directly use the induction hypothesis as the structure is maintained.
Otherwise, we remove the row and set $k' = k - 1$
(accordingly, $\ell' = \lfloor 2 \log(k') \rfloor$).
The number of rows dedicated to $G_\infty$ has reduced to
\begin{equation*}
  \lfloor k/2 \rfloor - 2 (\ell + 1) - 1
  \le \lfloor k'/2 \rfloor - 2(\ell + 1)
  \le \lfloor k'/2 \rfloor - 2(\ell' + 1) .
\end{equation*}
That means, we potentially have too few rows in $G_\infty$, but not too many.
On the other hand,
\begin{equation*}
  \lfloor k/2 \rfloor - 2 (\ell + 1) - 1
  \ge \lfloor k'/2 \rfloor - 2(\ell' + 2) - 1
  = \lfloor k'/2 \rfloor - 2(\ell' + 1) - 3 .
\end{equation*}
So $G_\infty$ is missing at most three rows.
We fill up these missing row with empty ones (recall there are at least $25$ empty rows)
and we have maintained the required structure.
Let again $J_C$ denote the jobs in the full row.  
Using the induction hypothesis and that
each job in $J_C$ is of size less than $p_{\max} / k^2$ (by using the previous values of $\ell$, this is the upper bound on the sizes of jobs of $G_{\infty}$),
we get a schedule with maximum load at most
\begin{align*}
&\frac{2}{m} p(J\setminus J_C) + 8 \sum_{i=\log(p_{\max})}^{\log(p_{\max}^{\infty})} 2^{i} + \left(42 - \frac{1}{k' - 1}\right) p^\infty_{\max} + \frac{1}{k^2} p_{\max} \\
&\le \frac{2}{m} p(J) + 8 \sum_{i=\log(p_{\max})}^{\log(p_{\max}^{\infty})} 2^{i} + \left(42 - \frac{1}{k-2} + \frac{1}{k^2} \right) p^\infty_{\max} \\
&\le \frac{2}{m} p(J) + 8 \sum_{i=\log(p_{\max})}^{\log(p_{\max}^{\infty})} 2^{i} + \left(42 - \frac{1}{k-1}\right) p^\infty_{\max} .
\end{align*}
Finally consider the case that $p_{\jnew} > p_{\max}$.
This means the new maximal job size is $p'_{\max} = p_{\jnew}$.
From the job set $J^0$ of the jobs in our current schedule (not including jobs of removed rows)
and excluding $\jnew$,
we construct a new job instance
${J^0}'$ where we increase the size of
the jobs in $G_i$, $i=0,\dotsc,\ell$, from $p_{\max}  / 2^i$ to $p'_{\max}  / 2^i$ and we consider
the same schedule for ${J^0}'$, which satisfies our required structure.
Let $J'$ be the union of jobs ${J^0}'$, $\jnew$, and all remaining jobs that have not arrived yet.
If we can assign all remaining jobs in this bigger instance $J'$ with
some bound on the loads of the machines, then we can in particular
obtain the same bound on the maximum load for our original instance $J$.
Notice that since in the current schedule there are only two rows containing jobs of each group $G_i$, $i<\infty$, there can be at most $2m$ jobs in total of each such group.
Thus,
\begin{equation*}
p(J') \le p(J) + 2m\sum_{i=0}^\ell \frac{p'_{\max}}{2^i}
\le p(J) +  8m \cdot \frac{p'_{\max}}{2} .
\end{equation*}
In the previous two cases, we have already proved that (for the given numbers $h$ and $k$) we
can schedule the remaining jobs of $J'$ with a bounded makespan,
since we are in the case that $p_{\jnew} \le p'_{\max}$. More precisely,
we obtain with the previous arguments a schedule for $J'$ (and in particular for $J$)
with a maximum load of at most
\begin{equation*}
\frac{2}{m} p(J') + 8 \sum_{i=\log(p'_{\max})}^{\log(p_{\max}^{\infty})} 2^{i} + \left(42 - \frac{1}{k-1}\right) p^\infty_{\max}
\le \frac{2}{m} p(J) + 8 \sum_{i=\log(p_{\max})}^{\log(p_{\max}^{\infty})} 2^{i} + \left(42 - \frac{1}{k-1} \right) p^\infty_{\max} ,
\end{equation*}
where the inequality holds since $\log(p'_{\max}) \geq \log(p_{\max})+1$ due to the rounding.
This concludes the proof.
\begin{theorem}
  For cardinality constrained scheduling there is a $120$-competitive online algorithm.
\end{theorem}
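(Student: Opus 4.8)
The plan is to build the algorithm around a round-robin distribution of jobs across machines, but with a \emph{dynamically shrinking} number of size classes so that the imbalance in cardinalities stays bounded by the time any machine fills up. First I would round every job size down to a power of two, which costs only a factor of $2$ in the final ratio. Then I would view each machine as a column of $k$ slots and group the slots into $k$ rows (one slot per machine per row). The state of the algorithm is captured by a structural invariant: for each size class $G_i$ with $i=0,\dots,\ell$, where $\ell=\lfloor 2\log k\rfloor$ and $G_i$ holds the jobs of size $p_{\max}/2^i$ (relative to the current maximum), there are exactly two active rows $r_i, r_i'$ of which at least one is not full ($r_i'$ may also carry small jobs from $G_\infty$); additionally there are $\lceil k/2\rceil - 2(\ell+1)$ non-full rows reserved for $G_\infty$ alone and $\lfloor k/2\rfloor$ free rows. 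This invariant is establishable from scratch once $k$ is large enough (concretely $k\ge 49$, since then $2(\ell+1)\le\lceil k/2\rceil$), and for smaller $k$ an arbitrary feasible schedule is already $48$-competitive.

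The heart of the proof is an inductive argument: I would show by induction on the pair $(k,h)$, where $h$ is the current number of empty slots, that from any schedule obeying the invariant the remaining jobs can be assigned online so that every machine finishes with load at most the quantity in~\eqref{ldbound}, namely $\frac{2}{m}p(J) + 8\sum_{i=\log p_{\max}}^{\log p^\infty_{\max}} 2^i + (42-\tfrac{1}{k-1})p^\infty_{\max}$. The base case $k\in\{48,49\}$ is immediate because that bound already exceeds $49\,p^\infty_{\max}$ (the summation term alone contributes at least $8\,p^\infty_{\max}$), so any feasible schedule qualifies. For the step I would case-split on an incoming job $\jnew$. If $p_\jnew\le p_{\max}$ and $\jnew\in G_i$, place it into $r_i$ or $r_i'$; while one of them stays non-full the invariant survives and the induction hypothesis applies directly, and if both fill up we delete the two rows, set $k'=k-2$, and repair the invariant through one of three sub-cases according to whether $\ell$ dropped and whether the affected group was the last one --- each sub-case just moves a reserved $G_\infty$ row and an empty row into the vacated positions. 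If $\jnew\in G_\infty$, place it in a reserved $G_\infty$ row; if that row fills, delete it, set $k'=k-1$, and top up the (at most three) missing $G_\infty$ rows from the free-row pool, which is nonempty (in fact has at least $25$ rows) whenever $k\ge 50$. Finally, if $p_\jnew>p_{\max}$, I would conceptually inflate the sizes of all currently placed jobs in $G_0,\dots,G_\ell$ up to the new maximum, note this raises $p(J)$ by at most $8m\cdot p'_{\max}/2$ since each such group has at most $2m$ jobs, and then reduce to the already-handled case $p_\jnew\le p'_{\max}$; the bound even improves because $\log p'_{\max}\ge\log p_{\max}+1$ shortens the summation term.

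In every case I would verify~\eqref{ldbound} by charging the jobs removed in a full-row deletion against the $\frac{2}{m}p(\cdot)$ term: a removed big job on a machine has size at most $2p_{\max}/2^i\le\frac{2}{m}p(J_C)$ for the removed set $J_C$, so the two terms absorb it while the change from $-\tfrac{1}{k'-1}$ to $-\tfrac{1}{k-1}$ leaves slack; for a $G_\infty$ deletion each removed job has size below $p_{\max}/k^2$, and $-\tfrac{1}{k-2}+\tfrac{1}{k^2}\le-\tfrac{1}{k-1}$ closes the estimate. Since $p(J)/m\le\Opt$, $p^\infty_{\max}\le\Opt$, and $\tfrac12\sum_{i=\log p_{\max}}^{\log p^\infty_{\max}} 2^i\le p^\infty_{\max}\le\Opt$, the bound~\eqref{ldbound} is at most $60\,\Opt$, and together with the initial rounding factor of $2$ this gives the claimed $120$-competitive algorithm. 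I expect the main obstacle to be the bookkeeping of the structural invariant through the row deletions in the regrouping sub-cases --- in particular, making sure the counts of reserved $G_\infty$ rows and of free rows always land in exactly the range the invariant prescribes as $\ell$ oscillates between $\ell$ and $\ell-1$ --- rather than the load arithmetic, which is a routine telescoping once the charging scheme is in place.
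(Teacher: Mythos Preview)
Your proposal is correct and follows essentially the same approach as the paper's proof: the same rounding to powers of two, the same structural invariant with two rows per size class and $\ell=\lfloor 2\log k\rfloor$, the same induction on $(k,h)$ with base case $k\in\{48,49\}$, the same three-way case split on $\jnew$ (including the three sub-cases when two $G_i$-rows fill and the inflation trick when $p_{\jnew}>p_{\max}$), and the same load bound~\eqref{ldbound} yielding $60$ and hence $120$ after unrounding. The charging arguments and the arithmetic you sketch ($2p_{\max}/2^i\le\tfrac{2}{m}p(J_C)$, the inequality $-\tfrac{1}{k-2}+\tfrac{1}{k^2}\le-\tfrac{1}{k-1}$, the $8m\cdot p'_{\max}/2$ increase) all match the paper exactly.
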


\subsection{The case $\boldsymbol{m=k=2}$}
We provide an online algorithm for the special case $m=k=2$. Note that for this case, there are at most four jobs, and we assume that there are exactly four jobs (by adding jobs of size zero). We use the definition $\varphi=\frac{1+\sqrt{5}}2$, where $\varphi-1=\frac 1{\varphi}$ and $\varphi^2=\varphi+1$. While this is a very small case, we can show tight bounds of $\varphi$ for it, while the bound of Theorem \ref{pure2} for this case is $1.5$.
An optimal solution will assign the largest job with the smallest job to one machine, and the two other jobs to the other machine.  

The input consists of jobs $1,2,3,4$ of sizes $p_1$, $p_2$, $p_3$, and $p_4$.
The algorithm is defined as follows. Assign jobs $1$ and $2$ to different machines. Assume without loss of generality that $p_1 \geq p_2$ (otherwise, the roles of jobs $1$ and $2$ are swapped). If $p_3 \leq \frac{p_1}{\varphi}$, assign job $3$ together with job $1$, and otherwise, assign it with job $2$. Assign job $4$ to the machine that has one job.

\begin{theorem}
The competitive ratio of the algorithm is at most $\varphi$, and this is the best possible competitive ratio.
\end{theorem}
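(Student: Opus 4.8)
The plan is to prove the two directions separately. Throughout, write $q_1\ge q_2\ge q_3\ge q_4$ for the four input sizes sorted non‑increasingly; by the characterization of the optimum quoted above, $\Opt=\max\{q_1+q_4,\,q_2+q_3\}$, and in particular $\Opt\ge q_1=p_{\max}$ and $\Opt\ge\tfrac12(p_1+p_2+p_3+p_4)$. These three facts about $\Opt$ are all the arithmetic I would use.

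\textbf{Upper bound ($\Alg\le\varphi\cdot\Opt$).} Assume w.l.o.g.\ $p_1\ge p_2$, so that the algorithm outputs one of the two schedules $\{p_1,p_3\},\{p_2,p_4\}$ (when $p_3\le p_1/\varphi$) or $\{p_1,p_4\},\{p_2,p_3\}$ (when $p_3> p_1/\varphi$). In the first case the threshold immediately handles the machine containing jobs $1$ and $3$: since $\varphi=1+\tfrac1\varphi$, we get $p_1+p_3\le p_1\bigl(1+\tfrac1\varphi\bigr)=\varphi p_1\le\varphi\,\Opt$. It then remains to bound the other load $p_2+p_4$, which I would do by a finite case distinction: if $p_4\le p_1$ then $p_1$ is the largest job and $q_2+q_3=(p_2+p_3+p_4)-\min\{p_2,p_3,p_4\}\ge p_2+p_4$, so $p_2+p_4\le\Opt$; if $p_4>p_1$ and $p_2\le p_3$ then $\Opt\ge q_1+q_4=p_4+p_2$; and if $p_4>p_1$ and $p_2>p_3$ then $\Opt\ge q_2+q_3\ge p_1+p_2\ge 2p_2$ together with $\Opt\ge p_4$ give $p_2+p_4\le\tfrac32\Opt\le\varphi\,\Opt$. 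The second case is handled in the same spirit: each of the two loads $p_1+p_4$ and $p_2+p_3$ is shown to be at most $\varphi\,\Opt$ by a case distinction on the positions of $p_3$ and $p_4$ in the sorted order, now using the inequality $p_3>p_1/\varphi$ in place of $p_3\le p_1/\varphi$.

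\textbf{Lower bound.} I would use an adaptive adversary. Present two jobs. If the algorithm assigns them to the same machine, both remaining jobs are forced onto the other machine; releasing two huge equal jobs then gives $\Alg=2N$ while $\Opt\le N+\max\{p_1,p_2\}$, so the ratio can be pushed arbitrarily close to (hence is at least) $2>\varphi$. Otherwise each machine holds one job; the adversary releases the third job, observes which machine receives it (which also fixes the machine that must later receive the forced fourth job), and then chooses the fourth job so that the four sizes form a scaled copy of the instance $(\varphi,\tfrac1\varphi,1,0)$, on which the forced assignment has makespan $\varphi^2$ while $\Opt=\varphi$, yielding ratio $\varphi$.

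\textbf{Main obstacle.} The delicate point is the lower bound: the sizes of the first three jobs must be picked so that, \emph{whichever} of the two (superficially symmetric) machines the algorithm puts the third job on, the adversary can still complete the instance to ratio $\varphi$; reconciling this with the upper‑bound threshold $p_3\le p_1/\varphi$ is exactly where the constant $\varphi$ is forced and where the argument is most subtle. A secondary difficulty is the case $p_3>p_1/\varphi$ of the upper bound, where one only has a \emph{lower} bound on $p_3$ and so cannot argue as directly as in the other case.
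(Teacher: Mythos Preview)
Your upper bound is essentially the paper's argument, reorganized around the sorted order $q_1\ge q_2\ge q_3\ge q_4$ rather than the paper's three ``forms'' of the optimal partition; since $\Opt=\max\{q_1+q_4,q_2+q_3\}$ these are equivalent bookkeeping devices, and your Case~1 computations are correct. Your Case~2 (``handled in the same spirit'') does go through, but it requires a longer case split than Case~1 --- e.g.\ when $p_1$ is the maximum and $p_3$ the minimum, bounding $p_1+p_4$ needs the chain $\Opt\ge q_2+q_3=p_2+p_4$ together with $p_2\ge p_3>p_1/\varphi$, hence $\varphi\,\Opt>p_1+\varphi p_4\ge p_1+p_4$. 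The paper's three-form presentation collapses several of these sub-cases at once, but the content is the same.

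Your lower bound, however, has a real gap. You commit to a \emph{single} target multiset $\{\varphi,1,1/\varphi,0\}$ (up to scaling), meaning the first three sizes already determine the fourth --- there is no adaptive freedom left after observing where job~3 lands. And for that multiset, only the pairing $\{\varphi,1\},\{1/\varphi,0\}$ achieves ratio $\varphi$; every other pairing gives ratio $1$ or $2-1/\varphi\approx 1.38$. Since jobs $1$ and $2$ are on different machines, you cannot force $\varphi$ and $1$ together regardless of the algorithm's choice for job~3. Concretely: whatever three of $\{\varphi,1,1/\varphi,0\}$ you release first, there is always an algorithm response to job~3 that keeps the ratio strictly below $\varphi$ for the forced completion.

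The paper resolves this by using \emph{two different} completions. With first jobs $M$ and $1$ and third job $(\varphi-1)M=M/\varphi$: if job~3 joins the size-$M$ job, the adversary sends $p_4=1$ (the limiting instance is $(1,1/\varphi,0,0)$, not your template) and the forced load $M+M/\varphi=\varphi M$ against $\Opt=M+1$ gives ratio $\to\varphi$; if job~3 joins the size-$1$ job, the adversary sends $p_4=\varphi M$ (now the instance \emph{is} essentially your template) and the forced load $M+\varphi M=\varphi^2 M$ against $\Opt=\varphi M+1$ again gives ratio $\to\varphi$. The asymmetry between the two branches is exactly what your single-template plan misses.
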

\begin{proof}
The optimal solution has the following cost: $\min\{\max\{p_1+p_2,p_3+p_4\},\{p_1+p_3,p_2+p_4\},\{p_1+p_4,p_2+p_3\}\}$, which we denote by $\lambda$.
In particular, its cost is at least $\max\{p_1,p_2,p_3,p_4\}$. We will refer to the three options for the cost of the optimal solution in its cost as the three forms of the optimal solution.
For the analysis of the upper bound, we distinguish two cases.

\noindent{\bf Case 1: $\boldsymbol{p_3 \leq \frac{p_1}{\varphi}}.$}
In this case, the cost of the algorithm is $\max\{p_1+p_3,p_2+p_4\}$. We have $p_1+p_3 \leq p_1+\frac{p_1}{\varphi}= p_1\cdot (1+\frac{1}{\varphi})=\varphi\cdot p_1 \leq \varphi\cdot \lambda$, and it is left to find an upper bound for $p_2+p_4$.
If the optimal solution has the first form, we have $\lambda \geq p_1+p_2 \geq 2\cdot p_2$, and thus $p_2\leq \frac{\lambda}2$. We get $p_2+p_4 \leq \frac{\lambda}2+\lambda =1.5 \cdot \lambda < \varphi\cdot \lambda$.
If the optimal solution has the second form, we simply have $p_2+p_4 \leq \lambda$. If the optimal solution has the third form, we have $p_2+p_4 \leq p_1+p_4 \leq\lambda$, by $p_2\leq p_1$.

\noindent{\bf Case 2: $\boldsymbol{p_3 > \frac{p_1}{\varphi}}.$} In this case, the cost of the algorithm is $\max\{p_1+p_4,p_2+p_3\}$.
If the optimal solution has the third form, we find that the output is optimal.
If the optimal solution has the first form, we again use $p_2 \leq \frac{\lambda}2$ to get $p_2+p_3 \leq \frac{\lambda}2+\lambda < \varphi\cdot \lambda$. Additionally, we use $\lambda \geq p_3+p_4 > \frac{p_1}{\varphi}+p_4$.
Therefore, $p_1+p_4 \leq p_1+\varphi \cdot p_4 < \varphi\cdot \lambda$. 
If the optimal solution has the second form, we have $p_2+p_3 \leq p_1+p_3 \leq \lambda$. Additionally, by $p_3 > \frac{p_1}{\varphi}$ and $\lambda \geq p_1+p_3$, we have $\varphi \cdot p_1 =p_1+\frac{p_1}{\varphi} < p_1+p_3 \leq \lambda$, and thus $p_1+p_4 \leq \frac{\lambda}{\varphi}+\lambda=\varphi\cdot \lambda$.

\medskip

For the lower bound, let $M>0$ be a large number such that $2M^2 > \varphi \cdot (M+M^2)$. The input starts with two jobs of sizes $M$ and $1$. If the algorithms assigns them to the same machine, there are two further jobs of sizes equal to $M^2$. The algorithm has to assign them to the other machine for a makespan of $2\cdot M^2$. An optimal solution has makespan $M+M^2$, which gives us the required lower bound by the assumption on $M$.
If the first two jobs are placed on different machines, the next job has size $(\varphi-1)\cdot M$. We consider two cases for the assignment of the third job, and the fourth job will have size $1$ or $\varphi\cdot M$. The optimal cost for the first size of the fourth job is $M+1$, and for the second size it is $\varphi\cdot M+1$ (which is found by the simple structure of an optimal solution).

If the third job is assigned with the job of size $M$, the size of the fourth job is $1$, the cost of the algorithm is $M+ (\varphi-1)\cdot M =\varphi\cdot M$, and the competitive ratio is $\frac{\varphi}{1+ 1/M}$, which tends to $\varphi$ for $M$ growing to infinity. Otherwise, the size of the fourth job is $\varphi\cdot M$, the makespan of the algorithm is $M+\varphi\cdot M = (\varphi+1)\cdot M =\varphi^2\cdot M$. The competitive ratio is $\frac{\varphi^2}{\varphi+1/M}$, which also tend to $\varphi$ for $M$ growing to infinity.
\end{proof}

\section{Ordinal algorithms\label{sec:ordinal}}

Remember that in the ordinal setting, all jobs are given at the beginning in an ordered list, sorted non-increasingly by their sizes. 
However, the actual sizes are not included in the input.
A straight-forward approach in this setting is to assign the jobs via round-robin, i.e., the $i$-th largest job is assigned to machine $((i-1) \bmod m)+1$.
Note that this already gives an ordinal algorithm of rate $2$ and applies to both, the makespan minimization problem on identical machines as well as our problem. 
Thus, the goal of this section is to show that by delicately defining a different algorithm, we can get an ordinal algorithm of rate strictly smaller than $2$.

\subparagraph{Preliminaries and easy cases.}

We can always assume that the job sequence has $n=m\cdot k$ jobs.  
If it has more jobs, then we can safely output that there is no feasible solution, and otherwise, we can add $n-m\cdot k$ jobs of size $0$ at the end of the input sequence.
These zero sized jobs do not change the optimal cost, and for every feasible solution of the original instance, there is a corresponding feasible solution of the same cost with the zero sizes jobs (by adding for each machine the number of jobs so that it will have exactly $k$ jobs).

Furthermore, we will assume that $m\geq 2$ and $k\geq 3$.
For one machine ($m=1$), placing the $k$ input job on the single given machine is a trivial ordinal algorithm with rate $1$.
Similarly, if $k=1$, assigning the $i$-th largest of the $m$ input jobs to machine $i$ is again ordinal and optimal.
Lastly, in the case $k=2$, an optimal schedule can be achieved by assigning the first (largest) job to the first machine, the second to the second and so on until each machine received one job.
Afterwards, we change the direction, that is, job $m+1$ is assigned to machine $m$, job $m+2$ to machine $m-1$, and so forth.
This algorithm takes only the relative job sizes into account and is therefore ordinal.
Now consider the case that the maximum load is realized on a machine $i^*$ that receives two jobs with sizes $p\geq p'$.
Due to the assignment pattern, each predecessor machine of $i^*$ receives one job of size at least $p$ and each successor machine two jobs of size at least $p'$.
Hence, in an optimal solution there has to be a machine that receives both a job of size at least $p$ and a job of size at least $p'$ and our solution is therefore already optimal.

\subparagraph{First ideas.}

Observe that assigning the first $m$ jobs to different machines is necessary in any algorithm of rate strictly smaller than $2$, as the ordered input might consist of $m$ jobs of size $1$ and $m(k-1)$ jobs of size $0$.
On the other hand, the mentioned round-robin approach behaves badly if we have one big and many small jobs.
For instance, if we have $m=k$, one job of size $k$, $k(k-1)$ jobs of size $1$, and the remaining jobs of size $0$. 
Then the first machine receives load $2k-1$ in the round-robin approach while there is a trivial solution with objective value $k$. 

Keeping these examples in mind, a first idea for an ordinal algorithm might be to spread out the first $m$ jobs and afterward place fewer jobs on the machines that received the largest jobs.
More concretely we could, for instance, place the first $m$ jobs as described and then alternately take $m$ and $\ceil{m/2}$ jobs (from the input sequence, i.e., in non-increasing order) and place them on all and the last $\ceil{m/2}$ machines, respectively.
This is, in fact, the central idea for the known \cite{liu96} ordinal algorithm with rate $\frac{5}{3}$ for the case without cardinality constraints.
But in our case, we need another strategy after the last $\ceil{m/2}$ machines each received $k$ jobs.
The most obvious idea at this point, would be to apply round robin to the remaining jobs and first $\floor{m/2}$ machines.
However, it is relatively easy to see that we can adapt the bad example for round-robin to the resulting algorithm by simply doubling the number of machines and hence, a more sophisticated approach is needed.

Now, the first step of the algorithm presented here is again to place the $i$-th biggest job to machine $i$ for $i\in[m]$.
Then we use the above approach of placing jobs alternately on a smaller and a larger part of the last $\ceil{m/2}$ machines and apply it repeatedly so that the machines are gradually filled up starting from the last machines.
In the following, this approach is described in detail.

\subparagraph{The algorithm for the general case.}

We define the assignment of the jobs, where we assign the job sequence in a non-decreasing order of their sizes. 
Our assignment rules do not consider the sizes of jobs, just their position in the sorted list so indeed, we define an ordinal algorithm for the problem.

The assignment procedure works in {\em phases}, each of which is composed of {\em rounds}.  
Let $\phases = \floor{\log m}+2$ be the number of phases. 
A round is defined as an interval of consecutive machines $[m_{\ell},m_{r}-1]$, where the two indexes $m_{\ell},m_r\in [m+1]$ are called {\em border machines} (the value $m+1$ is to allow that an interval ends with the last machine).
In a round defined by the interval $[m_{\ell},m_{r}-1]$, we assign the next $m_r-m_{\ell}$ jobs to the machines of this interval where the $i$-th largest job in this set of jobs that we assign in the round is assigned to machine $m_{\ell}+i-1$.

Next, we define a subset of the machines that are border machines of some round of the algorithm. 
We set \[\mu(i) = \floor[\bigg]{\frac{m}{2^{\phases - i}}}+1\] for each $i\in[\phases]$. 
The border machines of the rounds are always from the set $\sett{\mu(b)}{b\in[\phases]}$.
Note that $\mu(1) = 1$, $\mu(2) = 2$ and $\mu(\phases)= m+1$ for any value of $m$. 
Observe that the difference between two consecutive border machines, i.e., $\mu(i+1) - \mu(i)$, grows approximately as a geometric sequence.
We briefly consider some examples:
\begin{itemize}
\item If $m=\set{2,3}$, we have $\phases = 3$ and the three borders are $1$, $2$, and $3$ or $4$ respectively.
\item If $m\in\set{4,5,6,7}$, we have $\phases = 4$, and the third border is $3$ for $m=4,5$ and $4$ for $m=6,7$.
\item If $m = 2^q$ for some integer $q\geq 1$, we have $q+2$ phases and  $\set{1}\cup\sett{1+2^{i}}{i\in\set{0,\dots, q}} $ is the set of border machines.
\end{itemize}

We conclude that in order to define the assignment, we need to define the intervals of the rounds of every phase.
Like in the last chapter, we use the intuition that each machine has $k$ slots to be filled by jobs. 
Our algorithm works as follows: 

\begin{enumerate}
\item In the first phase, there is only one round with borders $\mu(1) = 1$ and $\mu(\phases) = m+1$.

\item In the second phase, we repeat the following until all the slots between $\mu(\phases  -1)$ and $\mu(\phases)$ are filled:
One round with borders $\mu(\phases-2)$ and $\mu(\phases)$, followed by two rounds with borders $\mu(\phases  -1)$ and $\mu(\phases)$ (or less rounds if each machine of this last interval has exactly $k$ jobs).

\item In phase $s\in\set{3,\dots,\phases-1}$, there are alternating rounds with borders $\mu(\phases-s)$, $\mu(\phases-s +2)$ and $\mu(\phases  -s+1)$, $\mu(\phases-s+2)$ respectively. There are as many rounds as are needed to fill all the slots of the interval between $\mu(\phases  -s+1)$ and $\mu(\phases-s+2)$.

\item In the last phase, each round has borders $\mu(1) = 1$ and $\mu(2) = 2$, so in each such round, we assign one job to machine $1$ (and no other jobs to other machines). The number of rounds of this phase is so that the resulting number of jobs assigned to machine $1$ in all phases is exactly $k$.
\end{enumerate}

\subparagraph{Preliminaries for the analysis.}
 Let $i^*$ be the machine with lowest index that receives maximum load by the algorithm and $s^*$ be the phase in which $i^*$ receives its last job. Since $k>1$, we know that $s^* >1$. If $i^*$ receives its last job in the last phase, that is, if $s^* = \phases$, we know that $i^* = 1$.
Otherwise, $i^*$ has to be the middle border machine of phase $s^*$, that is, $i^* = \mu(\phases - s^* + 1)$.
To see this, note that due to the assignment rule, the load on the machines between $\mu(\phases - s^* + 1)$ and $\mu(\phases - s^* + 2)$ is non-increasing, and all other machines (i.e., the machines not in this interval) receive their last job in another phase (using $k>2$).

Let $r\in[k]$ and $p_r$ be the size of the $r$-th job $i^*$ receives.
We can assume that each job has a size from $\sett{p_r}{r\in[k]}\cup\set{0}$.
More precisely, we modify the input sequence in a way that maintains the ordinal settings, the size of jobs assigned to $i^*$ using our rules are the same, and the size of every other job is not increased (so the rate of the assignment on this modified instance is not smaller). In this modified instance, each job that appears in the list before the first job on $i^*$ has size $p_1$, each job that appears after the last job on $i^*$ has size $0$, and for each $r\in[k-1]$ each job that appears in the sorted list of jobs after the $r$-th but before the $(r+1)$-th job on $i^*$ has size $p_{r+1}$.
Note that the load on $i^*$ is not changed by these assumptions and the makespan in the optimal schedule is at most as big as before.
We set $P = \sum_{j= 2}^{k} p_j$ and therefore have:
\[\Alg = \sum_{r\in[k]} p_r = p_1 + P  \ . \]

When considering the borders of the phases, we will use the following simple observation:
\begin{remark}\label{rem:exp_and_mod}
For each non-negative integers $x,y$, we have $\floor[\big]{\frac{x}{2^y}} = 2\floor[\big]{\frac{x}{2^{y+1}}} + \floor[\big]{\frac{x \bmod 2^{y+1}}{2^y}}$.
\end{remark}
\begin{proof}
We have:
\begin{align*}
\floor[\Big]{\frac{x}{2^y}} &
= \floor[\bigg]{\frac{\floor[\big]{\frac{x}{2^{y+1}}}2^{y+1} + x \bmod 2^{y+1}}{2^y}} \\
& = \floor[\bigg]{\floor[\Big]{\frac{x}{2^{y+1}}}2 + \frac{x \bmod 2^{y+1}}{2^y}}
= 2\floor[\Big]{\frac{x}{2^{y+1}}} + \floor[\Big]{\frac{x \bmod 2^{y+1}}{2^y}} \ .
\end{align*}
\end{proof}

\subparagraph{Analyzing  the number of jobs assigned in each phase to some fixed machine.}

Next, we want to analyze how many jobs each machine receives in which phase.
It is obvious that each machine receives one job in the first phase.
Moreover, the machines from $\floor[\big]{\frac{m}{2}}+1$ to $m$ receive all their remaining jobs in the second phase.
All the remaining machines receive their remaining jobs (the ones not assigned in the first phase) in two succeeding phases.

For each $s\in\set{2,\dots,\phases-1}$ (and given $k$) let $\iota(s,k)$ be the number of jobs the first border machine of phase $s$ that is machine $\mu(\phases-s)$ (and all the following up to the next border) receives in phase $s$.
If $k$ is considered fix in a given context, we write $\iota(s)$ instead of $\iota(s,k)$.
\begin{lemma}\label{lem:iota_pattern1}
For each $s\in\set{2,\dots,\phases-1}$, we have $\iota(s) = \floor[\big]{\frac{k-1}{3}}$ if both $(k-1)\bmod 3 = 1$ and $s \bmod 2 = 1$ holds, and $\iota(s) = \ceil[\big]{\frac{k-1}{3}}$ otherwise.
\end{lemma}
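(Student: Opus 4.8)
The quantity $\iota(s)$ counts how many jobs the border machine $\mu(\phases-s)$ receives in phase $s$; from the algorithm description, phase $s$ (for $3 \le s \le \phases - 1$, and analogously phase $2$) alternates "long" rounds on the interval $[\mu(\phases-s), \mu(\phases-s+2))$ with pairs of "short" rounds on $[\mu(\phases-s+1), \mu(\phases-s+2))$, continuing until the short interval is completely filled. The first border machine $\mu(\phases-s)$ lies in the long interval but \emph{not} the short one, so it receives exactly one job per long round and nothing in the short rounds. Hence $\iota(s)$ equals the number of long rounds executed in phase $s$. The plan is therefore: (i) determine how many slots per machine the short interval must absorb in phase $s$; (ii) observe that each "cycle" of one long round followed by two short rounds fills exactly $3$ slots per machine of the short interval (one from the long round, two from the short rounds); (iii) read off $\iota(s)$ as $\lceil (\text{slots to fill})/3\rceil$ up to a parity correction coming from how a partial final cycle is handled.

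**The slot-counting step.** First I would establish that every machine in the half-open interval $[\mu(\phases-s+1), \mu(\phases-s+2))$ enters phase $s$ having received exactly one job (in phase $1$) and no others, so it needs $k-1$ more slots filled, and phase $s$ is exactly the phase that fills them. This follows by induction on $s$ using the structure of the phases: phase $s-1$ fills the machines of $[\mu(\phases-s+2), \mu(\phases-s+3))$ (for $s \ge 3$; the base case $s=2$ is the explicit claim that machines $\lfloor m/2\rfloor+1,\dots,m$ get all remaining jobs in phase $2$), and the machines of $[\mu(\phases-s+1),\mu(\phases-s+2))$ are untouched between phase $1$ and phase $s$ except as the "short interval" of phase $s-1$'s long rounds — wait, that is not right, so the cleaner route is: the long rounds of phase $s$ cover $[\mu(\phases-s),\mu(\phases-s+2))$ and the short rounds cover $[\mu(\phases-s+1),\mu(\phases-s+2))$, and no later phase touches these machines, while phase $s$ runs precisely until the short interval is full. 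So the machines of the short interval receive all of their last $k-1$ jobs in phase $s$.

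**Extracting the formula.** Within phase $s$, group the rounds into blocks of three: one long round then two short rounds. Each full block contributes one job to every machine of the short interval from the long round plus two from the short rounds, i.e. $3$ jobs; it contributes exactly one job to $\mu(\phases-s)$ (from the long round). The phase stops as soon as the short-interval machines have $k-1$ jobs. Writing $k-1 = 3q + t$ with $t \in \{0,1,2\}$: if $t=0$ we need exactly $q$ blocks, giving $\iota(s)=q=\lfloor(k-1)/3\rfloor=\lceil(k-1)/3\rceil$; if $t=2$ we need $q$ full blocks plus one more long round and one more short round (the long round adds the $(3q+1)$-th job, the first short round the $(3q+2)$-th), so $\iota(s)=q+1=\lceil(k-1)/3\rceil$. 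The delicate case is $t=1$: after $q$ full blocks the short interval needs exactly one more job; a single long round would add one to the short interval (finishing it) \emph{and} one to $\mu(\phases-s)$, giving $\iota(s)=q+1=\lceil(k-1)/3\rceil$; but the algorithm instead, depending on parity, may schedule only short rounds here or may append the long round, and since a "round" on the short interval alone does not touch $\mu(\phases-s)$, in the parity case $s \bmod 2 = 1$ it uses a short round (or the cycle structure is phase-shifted) so $\mu(\phases-s)$ gets only $q$ jobs, $\iota(s)=q=\lfloor(k-1)/3\rfloor$, whereas for $s$ even the long round is used and $\iota(s)=q+1$.

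**Main obstacle.** The routine parts are the slot-counting induction and the arithmetic of $k-1 \bmod 3$. The real obstacle is pinning down \emph{exactly} what the algorithm does in the ragged final portion of a phase when $(k-1)\bmod 3 = 1$, and why this depends on the parity of $s$ rather than, say, on $k$ or on the previous phase. I expect one needs to track carefully, across consecutive phases, whether a phase ends "on" a long round or "on" a short round, because the algorithm's "alternating" prescription means the type of the last round of phase $s$ is forced by where phase $s$ started in its long/short alternation, and that in turn is determined inductively — producing exactly the $s \bmod 2$ dependence. I would handle this by proving a companion invariant (e.g. "phase $s$ ends with a short round iff $s$ is odd and $(k-1) \bmod 3 = 1$", or a statement about the alternation parity carried into phase $s+1$) alongside Lemma~\ref{lem:iota_pattern1} by simultaneous induction on $s$, with the two explicitly described phases ($s=2$ and the phase-$\phases$ handling of machine $1$) as base cases.
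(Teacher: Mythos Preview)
Your argument rests on two structural misreadings of the algorithm.

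First, only phase~$2$ uses the pattern ``one long round, then two short rounds''. For every intermediate phase $s\in\{3,\dots,\phases-1\}$ the algorithm alternates one long round on $[\mu(\phases-s),\mu(\phases-s+2))$ with \emph{one} short round on $[\mu(\phases-s+1),\mu(\phases-s+2))$, not two. So your ``blocks of three'' giving $3$ jobs per cycle to the short interval are wrong for $s\ge 3$; each cycle there gives only $2$ jobs.

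Second, the machines of the short interval $[\mu(\phases-s+1),\mu(\phases-s+2))$ do \emph{not} enter phase~$s$ with only the single phase-$1$ job. They form precisely the \emph{left} part of the long interval of phase~$s-1$, so they already received $\iota(s-1)$ jobs there (you nearly noticed this in your slot-counting paragraph before dismissing it). Hence phase~$s$ must fill $k-1-\iota(s-1)$ slots in the short interval, and with the $1{:}1$ alternation this gives the recurrence
\[
\iota(s)=\Bigl\lceil \tfrac{k-1-\iota(s-1)}{2}\Bigr\rceil \qquad (s\ge 3),\qquad \iota(2)=\Bigl\lceil \tfrac{k-1}{3}\Bigr\rceil.
\]
The paper proves the lemma by a straightforward induction on~$s$ using this recurrence, splitting on $r:=(k-1)\bmod 3$. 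The parity of~$s$ enters not through any ``does the phase end on a long or short round'' bookkeeping, but simply because when $r=1$ the recurrence oscillates: if $\iota(s-1)=q+1$ then $\iota(s)=\lceil 2q/2\rceil=q$, and if $\iota(s-1)=q$ then $\iota(s)=\lceil(2q+1)/2\rceil=q+1$. For $r\in\{0,2\}$ the recurrence has the constant fixed point $\lceil(k-1)/3\rceil$. Your speculated companion invariant about last-round types is unnecessary once the correct recurrence is in hand.
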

\begin{proof}
Let $q= \floor[\big]{\frac{k-1}{3}}$ and $r= (k-1)\bmod 3$, that is, $k-1 = 3q + r$.  We prove the claim by induction over all $s\geq 2$.
By definition, we have $\iota(2) = \ceil[\big]{\frac{k-1}{3}}$ and $\iota(s') = \ceil[\big]{\frac{k-1 - \iota(s'-1)}{2}}$ for $s' \in \set{3,\dots,\phases-1}$.
Hence, the proof for the case $s=2$ is complete and we consider the case $s>2$ where we assume that the claim holds for $s-1$.
We distinguish the cases $r=0,1,2$.

Assume that $r=0$.
By induction we know $\iota(s-1) = q$, and hence $\iota(s) = \ceil[\big]{\frac{3q - q}{2}} = q = \frac{k-1}{3} = \ceil[\big]{\frac{k-1}{3}}$.

Assume that $r=1$.
By induction we know $\iota(s-1) = q$ if $s$ is even and $\iota(s-1) = q + 1$ if $s$ is odd.
This yields if $s$ is even that $\iota(s) = \ceil[\big]{\frac{3q + 1 - q}{2}} = q+1 = \ceil[\big]{\frac{k-1}{3}}$ and otherwise $\iota(s) = \ceil[\big]{\frac{3q + 1 - q - 1}{2}} = q = \floor[\big]{\frac{k-1}{3}}$.

In the last remaining case, we assume that $r=2$.
By induction we know $\iota(s-1) = q + 1$, and hence $\iota(s) = \ceil[\big]{\frac{3q + 2 - q - 1}{2}} = q + 1 = \ceil[\big]{\frac{k-1}{3}}$.
\end{proof}
Hence, we also know how many jobs the first border machine of $s$ receives in phase $s+1$, namely $(k-1) - \iota(s,k)$.
We are interested in the following, whether this number is even or odd.
\begin{corollary}\label{cor:even_odd}
Let $s\in\set{2,\dots,\phases-1}$, the machines $\mu(\xi - s)$ to $\mu(\phases  -s+1) - 1$ receive an even number of jobs in phase $s+1$ if and only if $(k-1)\bmod 3 = 0$ or both $(k-1)\bmod 3 = 1$ and $s \bmod 2 = 0$.
\end{corollary}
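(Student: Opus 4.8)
The plan is to read off the corollary from \cref{lem:iota_pattern1} together with the structure of the phases described above. The first step is to pin down how many jobs a machine in the interval $[\mu(\phases-s),\mu(\phases-s+1)-1]$ receives in phase $s+1$. Such a machine lies strictly to the left of the middle border $\mu(\phases-s+1)$ of phase $s$, so (exactly as in the preliminaries used to identify $i^\ast$) the only phases in which it is assigned jobs are phase $1$, phase $s$, and phase $s+1$: it gets one job in phase $1$ and $\iota(s,k)$ jobs in phase $s$ by definition of $\iota$. Hence it receives precisely $k-1-\iota(s,k)$ jobs in phase $s+1$, and it remains to determine the parity of $k-1-\iota(s,k)$.

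For this I would write $k-1=3q+r$ with $r\in\{0,1,2\}$ and substitute the value of $\iota(s,k)$ supplied by \cref{lem:iota_pattern1}. If $r=0$ then $\iota(s,k)=q$, so $k-1-\iota(s,k)=2q$ is even. If $r=1$: for even $s$ the lemma gives $\iota(s,k)=q+1$, whence $k-1-\iota(s,k)=2q$ is even, while for odd $s$ it gives $\iota(s,k)=q$, whence $k-1-\iota(s,k)=2q+1$ is odd. If $r=2$ then $\iota(s,k)=q+1$, so $k-1-\iota(s,k)=2q+1$ is odd. Putting the cases together, $k-1-\iota(s,k)$ is even if and only if $r=0$, or $r=1$ and $s$ is even, which is exactly the claimed characterization.

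There is no real obstacle here; the statement is a direct corollary. The only point deserving a sentence of justification is the first step, i.e.\ that a machine in $[\mu(\phases-s),\mu(\phases-s+1)-1]$ truly gets no jobs outside phases $1$, $s$, and $s+1$, so that $k-1-\iota(s,k)$ is genuinely the phase-$(s+1)$ count rather than merely an upper bound. This is immediate from comparing borders: for $2\le s'<s$ one has $\mu(\phases-s')\ge\mu(\phases-s+1)$, so phase $s'$ touches only machines of index $\ge\mu(\phases-s+1)$, and for $s'>s+1$ one has $\mu(\phases-s'+2)\le\mu(\phases-s)$, so phase $s'$ touches only machines of index $<\mu(\phases-s)$; in both cases our machine is untouched, and phase $1$ contributes exactly one job.
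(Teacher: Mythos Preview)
Your argument is correct and follows exactly the same route as the paper: the paper observes just before the corollary that the number of jobs received in phase $s+1$ equals $(k-1)-\iota(s,k)$, and its proof then performs precisely the case analysis on $r=(k-1)\bmod 3$ and the parity of $s$ that you carry out. Your additional paragraph justifying why phases outside $\{1,s,s+1\}$ do not touch these machines is a welcome elaboration but not a departure from the paper's approach.
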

\begin{proof}
We use the notation of the proof of the last lemma and in each case we compute the value of  $(k-1) - \iota(s,k)$ as a linear function of $q$.
\begin{itemize}
\item If $r=0$, we have $(k-1) - \iota(s) = 3q - q = 2q$ and this is an even number.
\item If $r=1$ and $s \bmod 2 = 0$, we have $(k-1) - \iota(s) = (3q + 1) - (q + 1) = 2q$ and this is an even number.
\item If $r=1$ and $s \bmod 2 = 1$, we have $(k-1) - \iota(s) = (3q + 1) - q = 2q + 1$ and this is an odd number.
\item If $r=2$, we have $(k-1) - \iota(s) = (3q + 2) - (q+1) = 2q + 1$ and this is an odd number.
\end{itemize}
\end{proof}

In the following, we distinguish between the values of $s^*$.
The two extreme cases (second phase and last phase) are easier and we will establish better bounds for them.
The harder cases are the intermediate cases.

\subparagraph{Last Job in Second Phase.}
We consider the case $s^*= 2$.
Remember that $P = \sum_{j= 2}^{k} p_j$.
Our goal is a proof of the form
\[ \Alg = p_1 + P = (1 - \alpha)p_1 + (\alpha p_1 + P) \leq  (1 - \alpha)p_1 + \beta \cdot \mathtt{Load}/m \leq (1+ \beta - \alpha)\Opt ,\]
where $\mathtt{Load} =\sum_j p_j$ is the total size of jobs in the instance with suitable parameters $\alpha, \beta$ and using the obvious bound $\Opt \geq \max\set{p_1, \mathtt{Load}/m}$.
Observe that the last sequence of inequalities hold for all $\alpha$ in $[0,1]$ with the suitable $\beta$, so our goal would be to choose $\alpha,\beta$ for which the first inequality holds and the resulting rate of $1+\beta - \alpha$ is as small as possible.

In this case, the borders of the rounds are $\floor[\big]{\frac{m}{4}}+1$, $\floor[\big]{\frac{m}{2}}+1$ and $m+1$, and we have $i^* = \floor[\big]{\frac{m}{2}}+1$.
We get the following machine loads (see also \cref{fig:pattern1_s=2}):
\begin{figure}
\centering
\begin{tikzpicture}
\pgfmathsetmacro{\MachW}{0.5}
\pgfmathsetmacro{\MachH}{0.2}
\pgfmathsetmacro{\MachGap}{2.5*\MachW}
\pgfmathsetmacro{\JobH}{0.3}
\pgfmathsetmacro{\MachLabelShift}{-2.4*\MachH}

\foreach \a/\b/\c/\d in {	0/0/0/$1$, 
							1/1/0.6*\MachW/$\floor[\big]{\frac{m}{4}} + 1$,
							2/2/0.6*\MachW/$\floor[\big]{\frac{m}{2}} + 1$,
							3/2/0/,
							4/3/0/$m$
						}
{
	\draw[thick] 	(\a*\MachW + \b*\MachGap,-\MachH) -- 
					(\a*\MachW + \b*\MachGap,0) -- 
					(\a*\MachW + \b*\MachGap + \MachW,0) node[midway,yshift= \MachLabelShift cm, xshift = \c cm]{\d} -- 
					(\a*\MachW + \b*\MachGap + \MachW,-\MachH);
}
\foreach \a/\b in {1/0,2/1,4/2}{
	\draw[thick] (\a*\MachW + \b*\MachGap,0) -- (\a*\MachW + \b*\MachGap + \MachGap,0) node[midway, yshift = -0.7*\MachH cm ] {$\dots$};
}
\foreach \a/\b/\aa/\bb/\xs/\xxs/\h/\c/\l in 	
{
	0/0/2/2/0.5/0.5/1/black/$p_1$,
	3/2/4/3/0.5/0.5/1/black/$p_2$,
	1/1/2/2/0.5/0.5/2/black/$p_2$,
	3/2/4/3/0.5/0.5/2/black/$p_3$,
	2/2/2/2/0.5/0.5/3/black/$p_3$,
	3/2/4/3/0.5/0.5/3/black/$p_4$,
	2/2/2/2/0.5/0.5/4/black/$p_4$,
	3/2/4/3/0.5/0.5/4/black/$p_5$,
	1/1/2/2/0.5/0.5/6/black/$p_{\tilde{k}}$,
	3/2/4/3/0.85/0.85/6/mdarkblue/$p_{\tilde{k}+1}$,
	2/2/2/2/0.3/0.3/7/mdarkblue/$p_{\tilde{k} + 1}$,
	3/2/4/3/0.85/0.85/7/mred/$p_{\tilde{k}+2}$,
	2/2/2/2/0.3/0.3/8/mred/$p_{\tilde{k} + 2}$
}
{
	\draw[color = \c] (\a*\MachW + \xs*\MachW +\b*\MachGap,\h*\JobH) node[rectangle, fill = white, inner sep=0pt, text = \c] {\l} -- (\aa*\MachW + \bb*\MachGap + \xxs*\MachW,\h*\JobH) node[rectangle, fill = white, inner sep=0pt, text = \c] {\l};
}
\foreach \a/\b/\h in 
{
	3.5/2.5/5
}
{
	\node at (\a*\MachW + \b*\MachGap,\h*\JobH + 0.3*\JobH) {\vdots};
}

\end{tikzpicture}
\caption{The load pattern in the case $s^* = 2$. 
We set $\tilde{k} = k - ((k+1) \bmod 3)$. 
The last (red) and second to last job (blue) on the machines $\floor{\frac{m}{2}} + 1$ to $m$ are only placed if $\tilde{k} \leq k - 1$ or $\tilde{k}= k - 2$, respectively.} 
\label{fig:pattern1_s=2}
\end{figure}
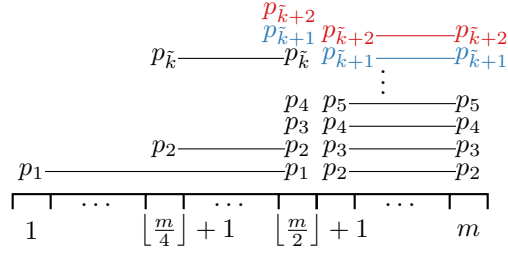
\begin{itemize}
\item Machines $1$ to $\floor[\big]{\frac{m}{2}}+1$ receive a job of size $p_1$.

\item Machines $\floor[\big]{\frac{m}{4}}+1$ to $\floor[\big]{\frac{m}{2}}$ additionally each receive the sequence $(p_2, p_5, \dots, p_{k - ((k+1)\bmod 3)})$. These jobs have an overall size of at least $P/3$. The other jobs assigned to these machines are zero sized.

\item Machines $\floor[\big]{\frac{m}{2}}+1$ to $m$ (additionally) receive the sequence $(p_2, p_3, \dots, p_{k})$ with overall size $P$ for each such machine.
\end{itemize}
Hence, we have:
\begin{align*}
\mathtt{Load} 	& \geq \bigg(\floor[\Big]{\frac{m}{2}} + 1\bigg)p_1 + \bigg(\floor[\Big]{\frac{m}{2}} - \floor[\Big]{\frac{m}{4}}  \bigg) \cdot \frac P3 + \bigg(m - \floor[\Big]{\frac{m}{2}}  \bigg) \cdot  P\\
				& = \bigg(\floor[\Big]{\frac{m}{2}} + 1\bigg)p_1 + \bigg(m - \frac{2}{3}\floor[\Big]{\frac{m}{2}} - \frac{1}{3}\floor[\Big]{\frac{m}{4}}  \bigg) P\\
				& =  \bigg(\floor[\Big]{\frac{m}{2}} + 1\bigg)p_1 + \bigg(m - \frac{5}{3}\floor[\Big]{\frac{m}{4}} - \frac{2}{3}\floor[\Big]{\frac{m \bmod 4}{2}}  \bigg) P
\end{align*}
In the last step, we used \cref{rem:exp_and_mod} for $x=m$, and $y=1$.
Furthermore, observe:
\begin{align*}
&\frac{1}{m}\parenthesis[\bigg]{m - \frac{5}{3}\floor[\Big]{\frac{m}{4}} - \frac{2}{3}\floor[\Big]{\frac{m \bmod 4}{2}} }
=  \frac{ \frac{7}{3}\floor[\big]{\frac{m}{4}} + (m \bmod 4) - \frac{2}{3}\floor[\big]{\frac{m \bmod 4}{2}} }{ 4\floor[\big]{\frac{m}{4}} + (m \bmod 4) }\\
= & \frac{7}{12} \frac{ \floor[\big]{\frac{m}{4}} + \frac{1}{4}\cdot (m \bmod 4) + \frac{5}{28}(m \bmod 4) - \frac{2}{7}\floor[\big]{\frac{m \bmod 4}{2}} }{ \floor[\big]{\frac{m}{4}} + \frac{1}{4}\cdot (m \bmod 4) }\geq \frac{7}{12}
\end{align*}
Hence, we have $$\frac{\mathtt{Load}}{m} \geq \frac{1}{2}\cdot p_1 + \frac{7}{12}\cdot P$$ and we can bound $\Alg$ as follows:
\[ \Alg = p_1 + P = \frac{1}{7}\cdot p_1 + \frac{12}{7}\cdot \Big(\frac{1}{2}p_1 + \frac{7}{12} P\Big) \leq \frac{13}{7}\cdot \Opt  \]

\subparagraph{Last Job in Intermediate Phase.}

We consider the case $s^*\in\set{3,\dots,\phases-1}$ and distinguish two subcases, namely, whether $s^*=3$ or not.
Note that the third phase may also be the last if $m\in\set{2,3}$ but here we assume that $s^*$ is not the last phase, so as long as we consider the intermediate phase, we assume that $m\geq 4$.

Let $\kappa = 2^{s^*-3}$, and thus $\kappa = 1$ if $s^* = 3$.
The borders of the rounds in phase $s^*$ are $\floor[\big]{\frac{m}{8\kappa}}+1$, $\floor[\big]{\frac{m}{4\kappa}}+1$, and $\floor[\big]{\frac{m}{2\kappa}}+1$.
Note that machine $i^* = \floor[\big]{\frac{m}{4\kappa}}+1$ receives jobs in three phases, namely phase $1$, $s^*-1$ and $s^*$.
Let $\iota^*$ be the number of jobs $i^*$ received at the end of phase $s^* - 1$, that is, $\iota^* = 1 + \iota(s^*-1,k)$ (see \cref{lem:iota_pattern1} and above).
Making use of \cref{lem:iota_pattern1}, we get the following load distribution (see also \cref{fig:pattern1_s=3} and \cref{fig:pattern1_s=4...xi-1})
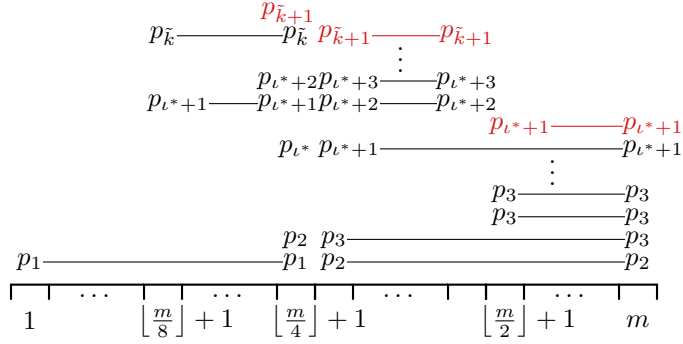
\begin{figure}[h]
\centering
\begin{tikzpicture}
\pgfmathsetmacro{\MachW}{0.5}
\pgfmathsetmacro{\MachH}{0.2}
\pgfmathsetmacro{\MachGap}{2.5*\MachW}
\pgfmathsetmacro{\JobH}{0.3}
\pgfmathsetmacro{\MachLabelShift}{-2.4*\MachH}

\foreach \a/\b/\c/\d in {	0/0/0/$1$, 
							1/1/0.6*\MachW/$\floor[\big]{\frac{m}{8}} + 1$,
							2/2/0.6*\MachW/$\floor[\big]{\frac{m}{4}} + 1$,
							3/2/0/,
							4/3/0.6*\MachW/,
							5/3/0.6*\MachW/$\floor[\big]{\frac{m}{2}} + 1$,
							6/4/0/$m$
						}
{
	\draw[thick] 	(\a*\MachW + \b*\MachGap,-\MachH) -- 
					(\a*\MachW + \b*\MachGap,0) -- 
					(\a*\MachW + \b*\MachGap + \MachW,0) node[midway,yshift= \MachLabelShift cm, xshift = \c cm]{\d} -- 
					(\a*\MachW + \b*\MachGap + \MachW,-\MachH);
}
\foreach \a/\b in {1/0,2/1,4/2,6/3}{
	\draw[thick] (\a*\MachW + \b*\MachGap,0) -- (\a*\MachW + \b*\MachGap + \MachGap,0) node[midway, yshift = -0.7*\MachH cm ] {$\dots$};
}
\foreach \a/\b/\aa/\bb/\xs/\xxs/\h/\c/\l in 	
{
	0/0/2/2/0.5/0.5/1/black/$p_1$,
	3/2/6/4/0.5/0.5/1/black/$p_2$,
	2/2/2/2/0.5/0.5/2/black/$p_2$,
	3/2/6/4/0.5/0.5/2/black/$p_3$,
	5/3/6/4/0.5/0.5/3/black/$p_3$,
	5/3/6/4/0.5/0.5/4/black/$p_3$,
	2/2/2/2/0.5/0.5/6/black/$p_{\iota^*}$,
	3/2/6/4/0.9/0.9/6/black/$p_{\iota^* + 1}$,
	5/3/6/4/0.9/0.9/7/mred/$p_{\iota^* + 1}$,
	1/1/2/2/0.9/0.3/8/black/$p_{\iota^* + 1}$,
	3/2/4/3/0.9/0.5/8/black/$p_{\iota^* + 2}$,
	2/2/2/2/0.3/0.3/9/black/$p_{\iota^* + 2}$,
	3/2/4/3/0.9/0.5/9/black/$p_{\iota^* + 3}$,
	1/1/2/2/0.5/0.5/11/black/$p_{\tilde{k}}$,
	3/2/4/3/0.8/0.5/11/mred/$p_{\tilde{k}+1}$,
	2/2/2/2/0.3/0.3/12/mred/$p_{\tilde{k}+1}$
}
{
	\draw[color = \c] (\a*\MachW + \xs*\MachW +\b*\MachGap,\h*\JobH) node[rectangle, fill = white, inner sep=0pt, text = \c] {\l} -- (\aa*\MachW + \bb*\MachGap + \xxs*\MachW,\h*\JobH) node[rectangle, fill = white, inner sep=0pt, text = \c] {\l};
}
\foreach \a/\b/\h in 
{
	5.5/3.5/5,
	3.5/2.7/10
}
{
	\node at (\a*\MachW + \b*\MachGap,\h*\JobH + 0.3*\JobH) {\vdots};
}
\end{tikzpicture}
\caption{The load pattern in the case $s^*=3$. 
There is some ambiguity marked in red.
The sixth row on machines $\floor[\big]{\frac{m}{2}}+1$ to $m$ represents $(k+1) \bmod 3$ many rows.
Moreover, we have $\tilde{k} = k-1$ if $(k-1)\bmod 3 =0$ or both $(k-1)\bmod 3 =1$ and $s^* \bmod 2 = 1$ hold.
Otherwise, $\tilde{k} = k$ and the last row on machines $\floor[\big]{\frac{m}{4}}+1$ to $\floor[\big]{\frac{m}{2}}$ is not placed.
} 
\label{fig:pattern1_s=3}
\end{figure}
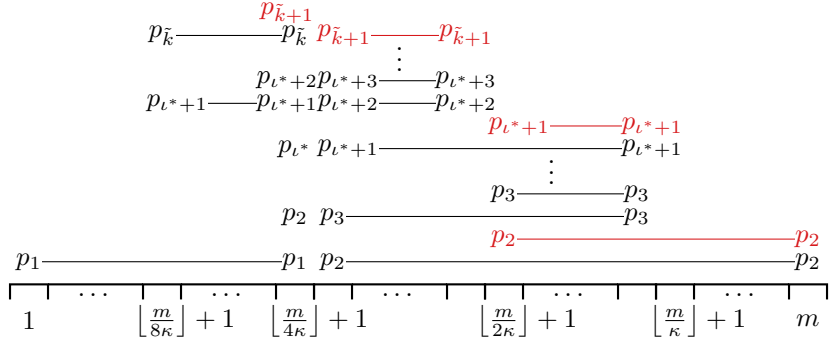
\begin{figure}[h]
\centering
\begin{tikzpicture}
\pgfmathsetmacro{\MachW}{0.5}
\pgfmathsetmacro{\MachH}{0.2}
\pgfmathsetmacro{\MachGap}{2.5*\MachW}
\pgfmathsetmacro{\JobH}{0.3}
\pgfmathsetmacro{\MachLabelShift}{-2.4*\MachH}

\foreach \a/\b/\c/\d in {	0/0/0/$1$, 
							1/1/0.6*\MachW/$\floor[\big]{\frac{m}{8\kappa}} + 1$,
							2/2/0.6*\MachW/$\floor[\big]{\frac{m}{4\kappa}} + 1$,
							3/2/0/,
							4/3/0.6*\MachW/,
							5/3/0.6*\MachW/$\floor[\big]{\frac{m}{2\kappa}} + 1$,
							6/4/0.6*\MachW/,
							7/4/0.6*\MachW/$\floor[\big]{\frac{m}{\kappa}} + 1$,
							8/5/0/$m$
						}
{
	\draw[thick] 	(\a*\MachW + \b*\MachGap,-\MachH) -- 
					(\a*\MachW + \b*\MachGap,0) -- 
					(\a*\MachW + \b*\MachGap + \MachW,0) node[midway,yshift= \MachLabelShift cm, xshift = \c cm]{\d} -- 
					(\a*\MachW + \b*\MachGap + \MachW,-\MachH);
}
\foreach \a/\b in {1/0,2/1,4/2,6/3,8/4}{
	\draw[thick] (\a*\MachW + \b*\MachGap,0) -- (\a*\MachW + \b*\MachGap + \MachGap,0) node[midway, yshift = -0.7*\MachH cm ] {$\dots$};
}
\foreach \a/\b/\aa/\bb/\xs/\xxs/\h/\c/\l in 	
{
	0/0/2/2/0.5/0.5/1/black/$p_1$,
	3/2/8/5/0.5/0.5/1/black/$p_2$,
	5/3/8/5/0.5/0.5/2/mred/$p_2$,
	2/2/2/2/0.5/0.5/3/black/$p_2$,
	3/2/6/4/0.5/0.5/3/black/$p_3$,
	5/3/6/4/0.5/0.5/4/black/$p_3$,
	2/2/2/2/0.5/0.5/6/black/$p_{\iota^*}$,
	3/2/6/4/0.9/0.9/6/black/$p_{\iota^* + 1}$,
	5/3/6/4/0.9/0.9/7/mred/$p_{\iota^* + 1}$,
	1/1/2/2/0.9/0.3/8/black/$p_{\iota^* + 1}$,
	3/2/4/3/0.9/0.5/8/black/$p_{\iota^* + 2}$,
	2/2/2/2/0.3/0.3/9/black/$p_{\iota^* + 2}$,
	3/2/4/3/0.9/0.5/9/black/$p_{\iota^* + 3}$,
	1/1/2/2/0.5/0.5/11/black/$p_{\tilde{k}}$,
	3/2/4/3/0.8/0.5/11/mred/$p_{\tilde{k}+1}$,
	2/2/2/2/0.3/0.3/12/mred/$p_{\tilde{k}+1}$
}
{
	\draw[color = \c] (\a*\MachW + \xs*\MachW +\b*\MachGap,\h*\JobH) node[rectangle, fill = white, inner sep=0pt, text = \c] {\l} -- (\aa*\MachW + \bb*\MachGap + \xxs*\MachW,\h*\JobH) node[rectangle, fill = white, inner sep=0pt, text = \c] {\l};
}
\foreach \a/\b/\h in 
{
	5.5/3.5/5,
	3.5/2.7/10
}
{
	\node at (\a*\MachW + \b*\MachGap,\h*\JobH + 0.3*\JobH) {\vdots};
}
\end{tikzpicture}
\caption{The load pattern in the case $s^*\in\set{4,\dots,\phases-1}$. 
There is some ambiguity marked in red.
Namely, the second row jobs on machines $\floor[\big]{\frac{m}{2\kappa}}+1$ to $m$ represent multiple placements of jobs of size $p_2$.
The sixth row on these machines may be present or not depending on $s^*$ and $k$.
Moreover, we have $\tilde{k} = k-1$ if $(k-1)\bmod 3 =0$ or both $(k-1)\bmod 3 =1$ and $s^* \bmod 2 = 1$ hold.
Otherwise, $\tilde{k} = k$ and $p_{\tilde{k}+1} = 0$.
} 
\label{fig:pattern1_s=4...xi-1}
\end{figure}
\begin{itemize}
\item Machines $1$ to $\floor[\big]{\frac{m}{4\kappa}}+1$ receive a job of size $p_1$.  Furthermore, all other jobs on machines at most $\floor[\big]{\frac{m}{8\kappa}}$ have zero size.

\item Machines $\floor[\big]{\frac{m}{8\kappa}}+1$ to $\floor[\big]{\frac{m}{4\kappa}}$ (additionally) receive the sequence $(p_{\iota^* +1 }, p_{\iota^* +3 }, \dots, p_{\tilde{k}})$ with $\tilde{k} = k-1$ if $(k-1)\bmod 3 =0$ or both $(k-1)\bmod 3 =1$ and $s^* \bmod 2 = 1$ hold (see \cref{cor:even_odd}), and $\tilde{k} = k$ otherwise.  The other jobs on these machines are of size $0$.

\item Each machine between $\floor[\big]{\frac{m}{4\kappa}}+1$ and $\floor[\big]{\frac{m}{2\kappa}}$ (additionally) receive jobs of total size $p_2+ p_3 + \dots + p_{k}$.  As for machines $\floor[\big]{\frac{m}{2\kappa}}+1$ to $m$, we distinguish between the two cases $s^*=3$ and $s^*>3$.

\item In the case $s^*=3$, $\floor[\big]{\frac{m}{\kappa}} = m$ so $i^*=\floor[\big]{\frac{m}{4}}+1$.  Therefore, until $i^*$ receives the $\iota^*+1$ job, these machines get three jobs of each size $i^*$ gets (where in the last size the number of rounds depend on the value of $k\bmod 3$).  Thus, machines $\floor[\big]{\frac{m}{2}}+1$ to $m$ receive the sequence $(p_2, p_3, \dots, p_{\iota^*}, p_{\iota^* + 1})$ with multiplicities $(1,3,\dots,3, 1+ ((k+1) \bmod 3))$.
Note that in the special case $k\in \set{3,4}$, we have $\iota^* = 2$ and the machines receive one job of size $p_2$ and $k-2$ jobs of size $p_3$.

\item In the case $s^*>3$, the machines $\floor[\big]{\frac{m}{2\kappa}}+1$ to $\floor[\big]{\frac{m}{\kappa}}$ receive the sequence $(p_2, p_3, \dots, p_{\iota^*}, p_{\iota^* + 1})$  with multiplicities $(1 + \iota(s^*-2,k),2,\dots,2, x)$ where $x=2$ if $(k-1)\bmod 3 = 0$ or both $(k-1)\bmod 3 = 1$ and $s^*\bmod 2 = 0$, and $x=1$ otherwise (see \cref{cor:even_odd}).  The difference from the case $s^*=3$ is because for $s^*>3$ each round in which $i^*$ gets a job is followed by only one round in which $i^*$ does not get a job (but these machines get additional job in this round) and this holds until these machines are filled with $k$ jobs each. Machines $\floor[\big]{\frac{m}{\kappa}}+1$ to $m$ receive $k$ jobs each of which of size $p_2$ as all of these were assigned to those machines before $i^*$ got its second job.
\end{itemize}

To make use of the assignment rules, we need an additional lower bound on $\Opt$.
To do so, we show that some of the smallest jobs placed on $i^*$ are in some sense also present in optimal solutions.
For this, we set $\psi(s,k) = \floor[\big]{\frac{\iota(s,k) - 2}{3}}$ for each $s\in\set{2,\dots,\phases-1}$ and write $\psi(s)$ instead of $\psi(s,k)$ if $k$ is fixed in the given context.
\begin{lemma}\label{lem:migration_pattern1_UB}
We have $\Opt\geq p_1 + \sum_{j=0}^{\psi(s^*)-1}p_{k-j}$.
\end{lemma}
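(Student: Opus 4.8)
The plan is to prove that every feasible schedule has a machine whose load is at least $p_1 + \sum_{j=0}^{\psi(s^*)-1}p_{k-j}$, i.e.\ a machine that simultaneously carries a job of size $p_1$ and enough additional load to account for the $\psi(s^*)$ smallest jobs the algorithm places on $i^*$. I will first exploit the instance modification made before the analysis: afterwards every job has size in $\{p_1,\dots,p_k\}\cup\{0\}$. Since $i^*=\mu(\phases-s^*+1)=\lfloor m/(4\kappa)\rfloor+1$ and phase $1$ assigns the $i^*$-th largest job to machine $i^*$, the $i^*$ largest jobs of the modified instance all have size exactly $p_1$; write $n_1=i^*$ for their number, noting $n_1\ge 2$ since $s^*\le\phases-1$ forces $\mu(\phases-s^*+1)>1$. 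If $\psi(s^*)\le 0$ the claim reduces to $\Opt\ge p_1$, which is immediate, so assume $\psi(s^*)\ge 1$; this bounds $k$ below and in particular gives $k-\psi(s^*)+1>\iota^*+1$, so that jobs of size at least $p_{k-\psi(s^*)+1}$ only occur on the leftmost $\lfloor m/(2\kappa)\rfloor$ machines.

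Next I would read off the multiplicities of the relevant sizes directly from the explicit phase-$s^*$ load distribution established above (treating $s^*=3$ and $s^*\in\{4,\dots,\phases-1\}$ as the two cases of that analysis). The machines $i^*=\lfloor m/(4\kappa)\rfloor+1,\dots,\lfloor m/(2\kappa)\rfloor$ each carry one job of every size $p_2,\dots,p_k$ (and $i^*$ additionally the size-$p_1$ job), while machines $1,\dots,i^*$ all carry a size-$p_1$ job; using $\lfloor m/(2\kappa)\rfloor=2\lfloor m/(4\kappa)\rfloor+\lfloor (m\bmod 4\kappa)/(2\kappa)\rfloor$ from \cref{rem:exp_and_mod}, this yields for every $r$ with $k-\psi(s^*)+1\le r\le k$ the bound $N_r\ge (n_1-1)r+1$ on the number $N_r$ of jobs of size at least $p_r$; equivalently, there are at least $n_1$ copies of $p_1$ and at least $n_1-1$ copies of each of $p_2,\dots,p_k$.

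Finally I would argue by contradiction. Suppose some feasible schedule had all loads strictly below $C:=p_1+\sum_{j=0}^{\psi(s^*)-1}p_{k-j}$. Then a machine carrying a size-$p_1$ job carries at most $\psi(s^*)$ jobs of size at least $p_{k-\psi(s^*)+1}$ in total (one more would push its load to at least $p_1+\psi(s^*)\,p_{k-\psi(s^*)+1}\ge C$), so in particular it holds at most $\psi(s^*)$ of the $p_1$-jobs and there are at least $\lceil n_1/\psi(s^*)\rceil$ such machines; a machine carrying no size-$p_1$ job carries at most $\psi(s^*)-1+\lceil p_1/p_{k-\psi(s^*)+1}\rceil$ jobs of size at least $p_{k-\psi(s^*)+1}$. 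Plugging these per-machine bounds, the cardinality bound of $k$ jobs per machine, and the count $N_{k-\psi(s^*)+1}\ge (n_1-1)(k-\psi(s^*)+1)+1$ into the identity $N_{k-\psi(s^*)+1}=\sum_\ell (\text{number of jobs of size}\ge p_{k-\psi(s^*)+1}\text{ on machine }\ell)$ should yield the desired contradiction.

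I expect this last step to be the main obstacle: the rough inequalities above only close comfortably when $\kappa=2^{s^*-3}$ is small, because a machine without a $p_1$-job can a priori absorb many jobs of size $\ge p_{k-\psi(s^*)+1}$ once that size is small relative to $p_1$. Making the argument work will require using the large multiplicities from the second step (which scale like $m/2^{s^*}$, comparably to $n_1$), the exact value $\psi(s^*)=\lfloor (\iota(s^*)-2)/3\rfloor$ together with \cref{lem:iota_pattern1}, and very likely a separate treatment of the regime where $p_{k-\psi(s^*)+1}$ is so small that $C$ is essentially $p_1$ (in which case $\Opt\ge p_1$ alone suffices) — that is where I would concentrate the careful bookkeeping.
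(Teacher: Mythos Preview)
Your approach has a genuine gap, and you correctly identify where it lies. The per-machine load argument cannot close: for a machine carrying no $p_1$-job, the only upper bound you can extract from the load constraint on the number of jobs of size at least $p_{k-\psi(s^*)+1}$ involves the ratio $p_1/p_{k-\psi(s^*)+1}$, which is completely unbounded in the modified instance. The cardinality constraint gives you at most $k$ such jobs per machine, but there are $m - \lceil n_1/\psi(s^*)\rceil \approx m(1 - 1/(4\kappa\psi(s^*)))$ such machines, and they can therefore absorb close to $mk$ jobs of this size; your lower bound $N_{k-\psi(s^*)+1}\ge (n_1-1)(k-\psi(s^*)+1)+1$ is only of order $mk/(4\kappa)$, so no contradiction results. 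Separating off the case ``$p_{k-\psi(s^*)+1}$ small'' does not help, since the obstruction is present whenever $p_1/p_{k-\psi(s^*)+1}$ exceeds roughly $k$, long before the sum $\sum p_{k-j}$ becomes negligible.

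The paper's proof avoids this entirely by an averaging argument that uses the cardinality constraint directly rather than the load constraint. Fix $m'\approx n_1$ and note that since all $m'$ largest jobs have size $p_1$, they occupy at most $m'$ machines in any feasible schedule; choose $m'$ machines containing them. Because $n=mk$, every machine holds exactly $k$ jobs, so these $m'$ machines hold $(k-1)m'$ jobs besides the $m'$ largest, and the total size of these is at least the total size $\check p$ of the $(k-1)m'$ \emph{smallest} jobs in the instance. Hence $\Opt\ge (m'p_1+\check p)/m'$. The remaining work is a careful count, using the explicit load pattern and the exact value of $\psi(s^*)$, showing that the $(k-1)m'$ smallest jobs already include at least $m'$ jobs of each of the sizes $p_k,p_{k-1},\dots,p_{k-\psi(s^*)+1}$, i.e.\ $\check p\ge m'\sum_{j=0}^{\psi(s^*)-1}p_{k-j}$. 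The point is that this argument never needs to control how many small jobs a single non-$p_1$ machine can absorb; it only needs the global counts of jobs by size, which are exactly what the load-pattern analysis provides.
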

\begin{proof}
Let $m'\leq m$, $\hat{p}$ be the total size of the $m'$ biggest jobs, and $\check{p}$ be the total size of the $(k-1)m'$ smallest jobs in the instance.
We want to utilize the (obvious) lower bound $\Opt\geq \frac{\hat{p} + \check{p}}{m'}$.  This is indeed a valid lower bound on $\Opt$ as there are $m'$ machines that are assigned  the $m'$ biggest jobs (in the optimal solution) and these machines process $(k-1)m'$ additional jobs of total size not smaller than $\check{p}$.
In particular, we will choose $m'$ such that $m' \in\set{ \floor[\big]{\frac{m}{4\kappa}},\floor[\big]{\frac{m}{4\kappa}}+1}$.
This yields $\hat{p} = p_1\cdot m'$, and we will show in the following $\check{p} \geq m'\sum_{j=0}^{\psi(s^*)-1}p_{k-j}$, thereby concluding the proof.

We consider the $(k-1)m'$ smallest jobs.
For this, we partition the jobs in the instance into $k+1$ groups.
The first group includes the first job that was placed on $i^*$ and all the jobs placed before (these have size $p_1$), for $\ell\in\set{2,\dots,k}$ group $l$ includes the $\ell$-th job placed on machine $i^*$ and all the jobs placed before that and do not belong to a preceding group (these have size $p_{\ell}$), and finally group $k+1$ contains the jobs placed after $i^*$ received its last job (these have size $0$).
Note that we can compute exactly how many jobs belong to each group as we do next.

We first determine the number of jobs $d$ of group $k+1$.
These are the jobs placed in the last round of phase $s^*$ on the machines $\floor[\big]{\frac{m}{4\kappa}}+2$ to $\floor[\big]{\frac{m}{2\kappa}}$ on the one hand, and the jobs placed in succeeding phases on the preceding machines. Therefore,
\[d = \parenthesis[\bigg]{\floor[\Big]{\frac{m}{2\kappa}} - \floor[\Big]{\frac{m}{4\kappa}} -1} + (k-1)\floor[\Big]{\frac{m}{4\kappa}} - \parenthesis[\bigg]{\floor[\Big]{\frac{m}{4\kappa}} - \floor[\Big]{\frac{m}{8\kappa}}}\iota(s^*)\]

Next, we consider the groups $l$ with $l\in\set{\iota^*+1,\dots, k}$.
Let $\hat{t} := (\floor[\big]{\frac{m}{2\kappa}}- \floor[\big]{\frac{m}{8\kappa}})$ and $\check{t} := (\floor[\big]{\frac{m}{2\kappa}}- \floor[\big]{\frac{m}{4\kappa}})$.
There are at least $\hat{t}$ jobs in group $\iota^*+1$ and after that alternately exactly $\check{t}$ and $\hat{t}$ many.
Hence, to show $\check{p} \geq m'\sum_{j=0}^{\psi(s^*)-1}p_{k-j}$ it is sufficient to proof that $\check{t}\geq m'$ and $d + \psi(s^*)\hat{t} \leq (k-1)m'$, that is, $\psi(s^*) \leq ((k-1)m' - d)/\hat{t}$.
To show these inequalities, we set $m' = \floor[\big]{\frac{m}{4\kappa}} + \chi_1$ with $\chi_1:= \floor[\big]{\frac{m \bmod 4\kappa}{2\kappa}}$ and note that $\chi_1\in \{ 0,1\}$ as we declared in advance.
This yields $\check{t} = \floor[\big]{\frac{m}{2\kappa}}- \floor[\big]{\frac{m}{4\kappa}} = \floor[\big]{\frac{m}{4\kappa}} + \chi_1 = m'$ (using \cref{rem:exp_and_mod}).
Furthermore, let $\chi_2 = \floor[\big]{\frac{m \bmod 8\kappa}{4\kappa}}$.
Using \cref{rem:exp_and_mod} repeatedly, we get:
\begin{align*}
\frac{(k-1)m' - d}{\hat{t}} & = \frac{ \chi_1(k-1) -  \parenthesis[\bigg]{\floor[\Big]{\frac{m}{2\kappa}} - \floor[\Big]{\frac{m}{4\kappa}} -1} +\parenthesis[\bigg]{\floor[\Big]{\frac{m}{4\kappa}} - \floor[\Big]{\frac{m}{8\kappa}}}\iota(s^*) }{\floor[\Big]{\frac{m}{2\kappa}}- \floor[\Big]{\frac{m}{8\kappa}}}\\
& = \frac{ \chi_1(k-1) -  \parenthesis[\bigg]{2\floor[\Big]{\frac{m}{8\kappa}} + \chi_1 + \chi_2 -1} +\parenthesis[\bigg]{\floor[\Big]{\frac{m}{8\kappa}} +\chi_2 }\iota(s^*) }{3\floor[\Big]{\frac{m}{8\kappa}} + \chi_1 + 2\chi_2}\\
& \geq \frac{ \parenthesis[\bigg]{\floor[\Big]{\frac{m}{8\kappa}} +\chi_2 +\chi_1}(\iota(s^*)-2) }{3\floor[\Big]{\frac{m}{8\kappa}} + \chi_1 + 2\chi_2}\\
& \geq \frac{\iota(s^*)-2}{3} \geq \psi(s^*)
\end{align*}
\end{proof}

The rest of the analysis is carried out separately for the case $s^*>3$ and for the case $s^*=3$.  We start with $s^*>3$ and later provide a similar sequence of arguments for $s^*=3$.

\subparagraph{The rest of the analysis for the case  $\boldsymbol{s^* > 3}$.}
In this case, let $P_1 = \sum_{j=2}^{k-\psi(s^*)}p_j$, $P_2 = \sum_{j=k-\psi(s^*)+1}^{k}p_j$, $\chi_1 = \floor[\big]{\frac{m \bmod 4\kappa}{2\kappa}}$, and $\chi_2 = m \bmod 4\kappa$.
Furthermore, let $\bar{p} = \frac{1}{k - 1 -\psi(s^*)} P_1$ be the average size of the jobs contributing to $P_1$.
Considering the load distribution in the case $s^* > 3$, it is easy to see that the machines from $\floor[\big]{\frac{m}{2\kappa}}+1$ to $m$ each receive $k$ jobs with average size at least $\bar{p}$.
Again using \cref{rem:exp_and_mod} multiple times, we have:
\begin{align*}
\frac{\mathtt{Load}}{m} & \geq \frac{1}{m} \parenthesis[\bigg]{\floor[\Big]{\frac{m}{4\kappa}} +1 } p_1 +
\frac{1}{m}\parenthesis[\bigg]{\floor[\Big]{\frac{m}{2\kappa}} - \floor[\Big]{\frac{m}{4\kappa}}  } P +
 \frac{1}{m} \parenthesis[\bigg]{m - \floor[\Big]{\frac{m}{2\kappa}} } k\bar{p} \\
& \geq \frac{1}{4\kappa} p_1 +
\frac{1}{m}\parenthesis[\bigg]{\floor[\Big]{\frac{m}{4\kappa}} + \chi_1} P +
 \frac{1}{m} \parenthesis[\bigg]{m - 2\floor[\Big]{\frac{m}{4\kappa}} - \chi_1 } k\bar{p} \\
& = \frac{1}{4\kappa} p_1 +
\frac{1}{m}\parenthesis[\bigg]{\frac{m- \chi_2}{4\kappa}  + \chi_1} P +
 \frac{1}{m} \parenthesis[\bigg]{m - \frac{m- \chi_2}{2\kappa} -\chi_1 } k\bar{p} \\
& \geq \frac{1}{4\kappa} p_1 +\frac{1}{4\kappa} P + \frac{2\kappa - 1}{2\kappa} k\bar{p} +
\frac{1}{m}\chi_2P \\
& \geq \frac{1}{4\kappa} p_1 + \frac{1}{4\kappa} P_2 + \frac{1}{4\kappa} P_1 + \frac{2\kappa - 1}{2\kappa} \frac{k}{k - 1 -\psi(s^*)} P_1 \\
& = \frac{1}{4\kappa} p_1 + \frac{1}{4\kappa} P_2 + \parenthesis[\bigg]{\frac{1}{4\kappa}+ \frac{2\kappa - 1}{2\kappa} \frac{k}{k - 1 -\psi(s^*)}} P_1 \\
\end{align*}
We set $x=\frac{1}{4\kappa}+ \frac{2\kappa - 1}{2\kappa} \frac{k}{k - 1 -\psi(s^*)}$ and get:
\begin{align*}
\Alg &= p_1 + P_2 + P_1\\
	 &= p_1 + P_2 + x^{-1}\parenthesis[\bigg]{\frac{1}{4\kappa} p_1 + \frac{1}{4\kappa} P_2 -\frac{1}{4\kappa} p_1 - \frac{1}{4\kappa} P_2  + xP_1}\\
	 &= \parenthesis[\bigg]{1 - \frac{x^{-1}}{4\kappa}}(p_1 + P_2) + x^{-1}\parenthesis[\bigg]{\frac{1}{4\kappa} p_1 + \frac{1}{4\kappa} P_2 + xP_1}\\
	 & \leq \parenthesis[\bigg]{1 - \frac{x^{-1}}{4\kappa}} \Opt + x^{-1} \Opt\\
	 & = \parenthesis[\bigg]{1 + \frac{(4\kappa -1) x^{-1}}{4\kappa}} \Opt	
\end{align*}
Hence, we take a closer look at $x$.
Considering \cref{lem:iota_pattern1}, we can easily see the following:
\begin{claim}\label{claim:migration_psi_bound}
We have $\psi(s,k) \geq \floor[\big]{\frac{k-6}{9}}$ and $k- 1 - \psi(s^*,k) \leq \frac{8k + 6}{9}$.
\end{claim}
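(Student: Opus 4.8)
The plan is to reduce both inequalities to a single auxiliary estimate: that $\iota(s,k) \geq \lfloor k/3 \rfloor$ holds for every phase $s \in \{2,\dots,\phases-1\}$. Once this is available, the claimed lower bound on $\psi(s,k)$ follows by a short two-step floor computation, and the upper bound on $k - 1 - \psi(s^*,k)$ is then immediate.

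First I would prove $\iota(s,k) \geq \lfloor k/3 \rfloor$ using \cref{lem:iota_pattern1}. In every case that lemma gives $\iota(s,k) \geq \lfloor (k-1)/3 \rfloor$, and this already settles the estimate whenever $k \not\equiv 0 \pmod 3$, since then $\lfloor (k-1)/3 \rfloor = \lfloor k/3 \rfloor$. In the remaining case $k \equiv 0 \pmod 3$ we have $(k-1) \bmod 3 = 2$, so the exceptional branch of \cref{lem:iota_pattern1} does not apply and $\iota(s,k) = \lceil (k-1)/3 \rceil = k/3 = \lfloor k/3 \rfloor$. (If one prefers a more pedestrian route, one can instead split on the value of $(k-1) \bmod 3$ together with the parity of $s$, write $\iota(s,k)$ explicitly as a linear function of $k$ in each of the four resulting cases, and verify $\iota(s,k) \geq \lfloor k/3 \rfloor$ directly.)

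Next, I would use the elementary facts $\lfloor k/3 \rfloor - 2 = \lfloor (k-6)/3 \rfloor$ (subtracting an integer inside a floor) and the nested-floor identity $\lfloor \lfloor a/b \rfloor / c \rfloor = \lfloor a/(bc) \rfloor$ for positive integers $b,c$, together with monotonicity of $\lfloor\cdot\rfloor$, to conclude
\[
\psi(s,k) = \left\lfloor \frac{\iota(s,k) - 2}{3} \right\rfloor \;\geq\; \left\lfloor \frac{\lfloor k/3 \rfloor - 2}{3} \right\rfloor = \left\lfloor \frac{\lfloor (k-6)/3 \rfloor}{3} \right\rfloor = \left\lfloor \frac{k-6}{9} \right\rfloor ,
\]
which is the first assertion. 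For the second assertion, I would observe that $\lfloor (k-6)/9 \rfloor \geq (k-6)/9 - 8/9 = (k-14)/9$, hence $\psi(s^*,k) \geq (k-14)/9$ and therefore $k - 1 - \psi(s^*,k) \leq k - 1 - (k-14)/9 = (8k+5)/9 \leq (8k+6)/9$.

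There is no serious obstacle here: the whole argument is floor arithmetic layered on \cref{lem:iota_pattern1}. The one point that requires a moment's attention is the verification of $\iota(s,k) \geq \lfloor k/3 \rfloor$ in the residue class $k \equiv 2 \pmod 3$ with $s$ odd, where $\iota(s,k) = \lfloor (k-1)/3 \rfloor$ is smallest; there the inequality is tight, with $\lfloor (k-1)/3 \rfloor = (k-2)/3 = \lfloor k/3 \rfloor$.
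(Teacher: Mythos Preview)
Your proof is correct. Both your argument and the paper's rest on \cref{lem:iota_pattern1} followed by elementary floor arithmetic, so the approaches are essentially the same. The difference is one of presentation: the paper argues informally that $\psi(s,k)$ first reaches~$1$ at $k\in\{14,15\}$ (depending on the parity of~$s$) and then increments every nine steps, whereas you insert the clean intermediate estimate $\iota(s,k)\ge\lfloor k/3\rfloor$ and invoke the nested-floor identity. Your route is more explicit and even yields the slightly sharper $(8k+5)/9$ for the second inequality; the paper uses the cruder $\lfloor(k-6)/9\rfloor\ge (k-6)/9-1$ to land exactly on $(8k+6)/9$.
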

\begin{claimproof}
Concerning the first claim, note that:
\begin{align*}
\psi(s,k) &= \floor[\Big]{\frac{\iota(s,k) - 2}{3}}\\
\iota(s,k) &=
\begin{dcases}
\floor[\Big]{\frac{k-1}{3}} & \text{ if }(k-1) \bmod 3 = 1 \text{ and }s \bmod 2 = 1\\
\ceil[\Big]{\frac{k-1}{3}} & \text{ otherwise}\\
\end{dcases}
\end{align*}
It is easy to see that the first value of $k$ for which $\psi(s,k) \geq 1$ holds is $15$ if $s \bmod 2 =0$ and otherwise $k=14$, and $\psi(s,k)$ reliably increments every $9$ steps.
Concerning the second claim:
\[k- 1 - \psi(s^*) \leq k-1 - \floor[\Big]{\frac{k-6}{9}} \leq k-1 - \frac{k-6}{9} + 1 =  \frac{8k + 6 }{9}\]
\end{claimproof}
Since $\psi(s^*) \geq 0$, we have $\frac{k}{k - 1 -\psi(s^*)} \geq \frac{k}{k - 1}$ so we conclude the following.
\[x \geq \frac{1}{4\kappa}+ \frac{2\kappa - 1}{2\kappa} \frac{k}{k - 1} = \frac{2k(2\kappa - 1) + (k-1)}{4\kappa(k-1)} = \frac{4k\kappa - k -1}{4\kappa(k-1)}\]
and furthermore, for $k\leq 14$, we have:
\begin{align*}
\frac{(4\kappa -1) x^{-1}}{4\kappa} &\leq \frac{(4\kappa -1) }{4\kappa}\cdot \frac{4\kappa(k-1)}{4k\kappa - k -1}\\
& = \frac{(4\kappa -1)(k-1)}{4k\kappa - k -1}\\
& = \frac{4k\kappa - 4\kappa -k + 1}{4k\kappa - k -1}\\
& = 1 -  \frac{4\kappa - 2}{4k\kappa - k -1}\\
&\leq 1 -  \frac{4\kappa - 2}{56\kappa - 15}\\
&= 1 -  \frac{4\kappa}{56\kappa - 15} + \frac{2}{56\kappa - 15}\\
&\leq 1 -  \frac{1}{14} + \frac{2}{97}
\end{align*}
Here we used $\kappa \geq 2$ for $s^* > 3$.
Hence, we are done in this case.

If, on the other hand, $k\geq 15$, \cref{claim:migration_psi_bound} yields:
\[\frac{k}{k- 1 - \psi(s,k)} \geq \frac{9k}{8k + 6} \geq \frac{9k}{8k + \frac 25 k} = \frac{15}{14}\]
This yields
\[x\geq\frac{1}{4\kappa}+ \frac{2\kappa - 1}{2\kappa} \frac{15}{14} = \frac{15(2\kappa - 1) + 7}{28\kappa} = \frac{30\kappa - 8}{28\kappa} = \frac{15\kappa - 4}{14\kappa}\]
and furthermore:
\begin{align*}
\frac{(4\kappa -1) x^{-1}}{4\kappa} &= \frac{(4\kappa -1) }{4\kappa}\cdot \frac{14\kappa}{15\kappa - 4}\\
& = \frac{28\kappa -7}{30\kappa - 8}\\
& \leq \frac{28\kappa -7}{29\kappa - 7}\\
& \leq \frac{28}{29} \  .
\end{align*}
And this case works as well.

\subparagraph{The rest of the analysis for the case $\boldsymbol{s^* = 3}$.}
Finally, we make similar considerations for the case $s^* = 3$.
In this case, the machines $\floor[\big]{\frac{m}{2\kappa}}+1$ to $\floor[\big]{\frac{m}{\kappa}}$ receive only one job of size $p_2$ and the argument above does not quite work.
We set $P'_1 = \sum_{j=3}^{k-\psi(s^*)}p_j$, $P'_2 = \sum_{j=k-\psi(s^*)+1}^{k}p_j$, $\chi_1 = \floor[\big]{\frac{m \bmod 4}{2}}$, and $\chi_2 = m \bmod 4$.
Furthermore, let $\bar{p} = \frac{1}{k - 2 -\psi(s^*)} P'_1$ be the average size of the jobs contributing to $P'_1$.

Considering the load distribution for $s^* = 3$, it is easy to see that the machines from $\floor[\big]{\frac{m}{2\kappa}}+1$ to $m$ each receive one job of size $p_2$ and $k-1$ jobs with average size at least $\bar{p}$ and we set $P' = p_2 + (k-1)\bar{p}$.
Note that $P' \geq P$.
We have:
\begin{align*}
\frac{\mathtt{Load}}{m} & \geq \frac{1}{m} \parenthesis[\bigg]{\floor[\Big]{\frac{m}{4}} +1 } p_1 +
\frac{1}{m}\parenthesis[\bigg]{\floor[\Big]{\frac{m}{2}} - \floor[\Big]{\frac{m}{4}}  } P +
 \frac{1}{m} \parenthesis[\bigg]{m - \floor[\Big]{\frac{m}{2}} } P' \\
& \geq \frac{1}{4} p_1 +
\frac{1}{m}\parenthesis[\bigg]{\floor[\Big]{\frac{m}{4}} + \chi_1} P +
 \frac{1}{m} \parenthesis[\bigg]{m - 2\floor[\Big]{\frac{m}{4}} - \chi_1 } P' \\
& = \frac{1}{4} p_1 +
\frac{1}{m}\parenthesis[\bigg]{\frac{m - \chi_2}{4} + \chi_1} P +
 \frac{1}{m} \parenthesis[\bigg]{m - \frac{m}{2} + 2\chi_2 - \chi_1 } P' \\
& \geq \frac{1}{4} p_1 +\frac{1}{4} P + \frac{1}{2}P'\\
& = \frac{1}{4} p_1 + \frac{3}{4}p_2 + \frac{1}{4} P'_2 + \parenthesis[\bigg]{\frac{1}{4}+ \frac{1}{2}\cdot \frac{k-1}{k - 2 -\psi(s^*)}} P'_1 \\
\end{align*}
We set $x=\frac{1}{4}+ \frac{1}{2}\cdot \frac{k-1}{k - 2 -\psi(s^*)}$ and take a closer look at this value.
If $k\leq 14$, we have
\[x \geq \frac{1}{4}+ \frac{1}{2}\cdot \frac{k-1}{k - 2}  = \frac{k-2 + 2(k-1)}{4(k-2)} = \frac{3k-4}{4(k-2)} \geq \frac{19}{24}\]
and, if $k\geq 15$, \cref{claim:migration_psi_bound} yields
\[\frac{k-1}{k-2-\psi(s^*)} \geq \frac{9(k-1)}{8k-3} =\frac{9(k-1)}{8(k-1) + 5} \geq \frac{9(k-1)}{(8 + \frac{5}{14})(k-1)} = \frac{14}{13}\]
and furthermore
\[x \geq \frac{1}{4}+ \frac{28}{52} = \frac{41}{52}\]
and using the fact that $\frac{19}{24} > \frac{41}{52}$, we conclude that in both cases $x\geq \frac{41}{52}$.
Hence, we have:
\begin{align*}
\Alg &= p_1 + p_2 + P'_1 + P'_2\\
	 &= p_1 + P_2 + p_2 + x^{-1}\parenthesis[\bigg]{\frac{1}{4} p_1 + \frac{3}{4}p_2 +\frac{1}{4} P'_2 -\frac{1}{4} p_1 - \frac{3}{4}p_2- \frac{1}{4} P'_2  + xP'_1}\\
	 &\leq p_1 + P_2 + p_2 + \frac{52}{41}\parenthesis[\bigg]{\frac{1}{4} p_1 + \frac{3}{4}p_2 +\frac{1}{4} P'_2 -\frac{1}{4} p_1 - \frac{3}{4}p_2- \frac{1}{4} P'_2  + xP'_1}\\
	 &= \frac{28}{41}(p_1 + P'_2) +\frac{2}{41} p_2 +  \frac{52}{41}\parenthesis[\bigg]{\frac{1}{4} p_1 + \frac{3}{4}p_2 + \frac{1}{4} P'_2 + xP'_1}\\
	 &= \frac{28}{41}\Opt +\frac{1}{41}\Opt +  \frac{52}{41}\Opt = \frac{81}{41}\Opt
\end{align*}
Above, we used one further lower bound for $\Opt$, namely, $\Opt\geq 2p_2$ that is a valid lower bound as there are at least $m+1$ jobs of size at least $p_2$.

\subparagraph{Last Job in Last Phase.}

We consider the case $s^* = \phases$.
In the present case, the borders of the rounds are $1$ and $2$.
Furthermore, the third border of the previous round $\mu(3)\in\set{3,4}$ is relevant as well.
Note that $m\in\set{2,3}$ implies $s^*=3$ and this case has to be dealt with as well.
We get the following load distribution (see also \cref{fig:pattern1_s=xi}):
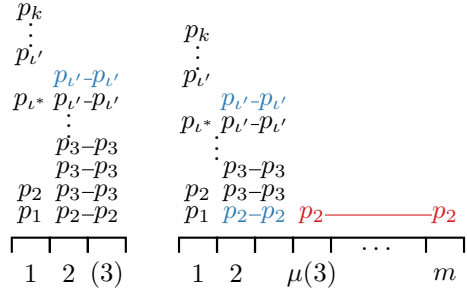
\begin{figure}[h]
\centering
\begin{tikzpicture}
\pgfmathsetmacro{\MachW}{0.5}
\pgfmathsetmacro{\MachH}{0.2}
\pgfmathsetmacro{\MachGap}{2.5*\MachW}
\pgfmathsetmacro{\JobH}{0.3}
\pgfmathsetmacro{\MachLabelShift}{-2.4*\MachH}

\foreach \a/\b/\c/\d in {	0/0/0/$1$, 
							1/0/0/$2$,
							2/0/0/,
							3/0/0/$\mu(3)$,
							4/1/0/$m$
						}
{
	\draw[thick] 	(\a*\MachW + \b*\MachGap,-\MachH) -- 
					(\a*\MachW + \b*\MachGap,0) -- 
					(\a*\MachW + \b*\MachGap + \MachW,0) node[midway,yshift= \MachLabelShift cm, xshift = \c cm]{\d} -- 
					(\a*\MachW + \b*\MachGap + \MachW,-\MachH);
}
\foreach \a/\b in {4/0}{
	\draw[thick] (\a*\MachW + \b*\MachGap,0) -- (\a*\MachW + \b*\MachGap + \MachGap,0) node[midway, yshift = -0.7*\MachH cm ] {$\dots$};
}
\foreach \a/\b/\aa/\bb/\xs/\xxs/\h/\c/\l in 	
{
	0/0/0/0/0.5/0.5/1/black/$p_1$,
	1/0/2/0/0.5/0.5/1/mdarkblue/$p_2$,
	3/0/4/1/0.5/0.5/1/mred/$p_2$,
    0/0/0/0/0.5/0.5/2/black/$p_2$,
    1/0/2/0/0.5/0.5/2/black/$p_3$,
    1/0/2/0/0.5/0.5/3/black/$p_3$,
	0/0/0/0/0.5/0.5/5/black/$p_{\iota^*}$,	
	1/0/2/0/0.5/0.5/5/black/$p_{\iota'}$,	
	1/0/2/0/0.5/0.5/6/mdarkblue/$p_{\iota'}$,	
    0/0/0/0/0.5/0.5/7/black/$p_{\iota'}$,
	0/0/0/0/0.5/0.5/9/black/$p_{k}$
}
{
	\draw[color = \c] (\a*\MachW + \xs*\MachW +\b*\MachGap,\h*\JobH) node[rectangle, fill = white, inner sep=0pt, text = \c] {\l} -- (\aa*\MachW + \bb*\MachGap + \xxs*\MachW,\h*\JobH) node[rectangle, fill = white, inner sep=0pt, text = \c] {\l};
}
\foreach \a/\b/\h in 
{
	1/0/4,
	0.5/0/8
}
{
	\node at (\a*\MachW + \b*\MachGap,\h*\JobH + 0.3*\JobH) {\vdots};
}


\begin{scope}[xshift = -2.2 cm, yshift = 0.0cm]

\foreach \a/\b/\c/\d in {	0/0/0/$1$, 
							1/0/0/$2$,
							2/0/0/$(3)$
						}
{
	\draw[thick] 	(\a*\MachW + \b*\MachGap,-\MachH) -- 
					(\a*\MachW + \b*\MachGap,0) -- 
					(\a*\MachW + \b*\MachGap + \MachW,0) node[midway,yshift= \MachLabelShift cm, xshift = \c cm]{\d} -- 
					(\a*\MachW + \b*\MachGap + \MachW,-\MachH);
}

\foreach \a/\b/\aa/\bb/\xs/\xxs/\h/\c/\l in 	
{
	0/0/0/0/0.5/0.5/1/black/$p_1$,
	1/0/2/0/0.5/0.5/1/black/$p_2$,
    0/0/0/0/0.5/0.5/2/black/$p_2$,
    1/0/2/0/0.5/0.5/2/black/$p_3$,
    1/0/2/0/0.5/0.5/3/black/$p_3$,
    1/0/2/0/0.5/0.5/4/black/$p_3$,
	0/0/0/0/0.5/0.5/6/black/$p_{\iota^*}$,	
	1/0/2/0/0.5/0.5/6/black/$p_{\iota'}$,	
	1/0/2/0/0.5/0.5/7/mdarkblue/$p_{\iota'}$,	
    0/0/0/0/0.5/0.5/8/black/$p_{\iota'}$,
	0/0/0/0/0.5/0.5/10/black/$p_{k}$
}
{
	\draw[color = \c] (\a*\MachW + \xs*\MachW +\b*\MachGap,\h*\JobH) node[rectangle, fill = white, inner sep=0pt, text = \c] {\l} -- (\aa*\MachW + \bb*\MachGap + \xxs*\MachW,\h*\JobH) node[rectangle, fill = white, inner sep=0pt, text = \c] {\l};
}
\foreach \a/\b/\h in 
{
	1.5/0/5,
	0.5/0/9
}
{
	\node at (\a*\MachW + \b*\MachGap,\h*\JobH + 0.3*\JobH) {\vdots};
}

\end{scope}

\end{tikzpicture}
\caption{The load pattern in the case $s^*=\xi$. 
We set $\iota' = \iota^* + 1$.
In the left hand picture, we have $m\in\{2,3\}$ and $s^*=3$, and the sixth row on machines $2$ and $3$ (blue) represents $(k+1) \bmod 3$ many.
Regarding the right hand picture, the first job on the machines $2$ to $\floor[\big]{\frac{2m}{\kappa}}$ (blue) represents several jobs of size $p_2$ and the last job on these machines (also blue) maybe present or not depending on $k$ and $m$.
The jobs on the machines $\mu(3)$ to $m$ (red) represent $k$ jobs of size $p_2$.
} 
\label{fig:pattern1_s=xi}
\end{figure}
\begin{itemize}
\item Machine $1$ receives the sequence $(p_1,\dots,p_k)$.

\item In the case $m\in\set{2,3}$, machines $2$ to $m$ receive the sequence $(p_2, p_3, \dots, p_{\iota^*}, p_{\iota^* + 1})$ with multiplicities $(1,3,\dots,3, 1+ ((k+1) \bmod 3))$.
Note that in the special case $k\in \set{3,4}$, we have $\iota^* = 2$ and the machines receive one job of size $p_2$ and $k-1$ jobs of size $p_3$.

\item If $m>3$, machines $2$ to $\mu(3)-1$ receive the sequence $(p_2,p_3,\dots,p_{\iota^*},p_{\iota^*+1})$ with multiplicities $(1 + \iota(s^*-2,k), 2,\dots,2,x)$ where $x=2$ if $(k-1)\bmod 3 = 0$ or both $(k-1)\bmod 3 = 1$ and $s^*\bmod 2 = 0$, and $x=1$ otherwise (see \cref{cor:even_odd}).

\item Machines $\mu(3)$ to $m$ receive $k$ jobs of size $p_2$.

\end{itemize}
In any case, we get the following lower bound on $\Opt$:
\[\Opt \geq p_1 +\sum_{j=\iota^*+1}^{k}p_j \]
We first consider $m>3$ so $m\geq 4$.
In this case, we have
\[\frac{\mathtt{Load}}{m} \geq \big(2-\frac{1}{m}\big)\sum_{j=2}^{\iota^*}p_j\geq \frac{7}{4}\sum_{j=2}^{\iota^*}p_j\]
which implies
\[\Alg = p_1+ \sum_{j=2}^{\iota^*}p_j +\sum_{j=\iota^*+1}^{k}p_j \leq \frac{11}{7} \Opt\]
completing the analysis of this case.
If, on the other hand, $m\in\set{2,3}$, we have
\[\frac{\mathtt{Load}}{m} \geq p_2 + 2\sum_{j=2}^{\iota^*}p_j\]
which yields
\[\Alg = p_1+ p_2 + \sum_{j=2}^{\iota^*}p_j +\sum_{j=\iota^*+1}^{k}p_j \leq \frac{3}{2} \Opt + \frac{1}{2}p_2 \leq \frac{7}{4} \Opt\]
completing the analysis of this case as well.

We conclude that the rate of our algorithm is at most $\frac{81}{41}$.
\begin{theorem}
There is an ordinal algorithm for cardinality constrained scheduling of rate at most $\frac{81}{41}$.
\end{theorem}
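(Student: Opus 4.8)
The plan is to analyze the ordinal assignment defined above. First I would dispose of the trivial regimes: $m=1$ is solved by placing all jobs on the single machine ($\text{rate }1$), $k=1$ by the sorted one-job-per-machine assignment, and $k=2$ by the ``forward then backward'' sweep, all of which are ordinal and at least as good as $\frac{81}{41}$; so assume $m\ge 2$ and $k\ge 3$, and pad the instance with zero jobs to size $mk$. The algorithm spreads the $m$ largest jobs one per machine in phase $1$ and then, using the border machines $\mu(i)=\lfloor m/2^{\xi-i}\rfloor+1$, fills the machines from the back in phases, phase $s$ consisting of alternating rounds over $[\mu(\xi-s),\mu(\xi-s+2)-1]$ and $[\mu(\xi-s+1),\mu(\xi-s+2)-1]$ (with a degenerate first and last phase). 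Since the rounds only use the ranks of the jobs, this is ordinal, and it remains to bound its rate. I would fix the lowest-indexed machine $i^*$ attaining the makespan and the phase $s^*$ in which it gets its last job, noting $s^*>1$, that $i^*=1$ if $s^*=\xi$, and otherwise $i^*=\mu(\xi-s^*+1)$ (since loads are non-increasing along that interval and every other machine finishes in a different phase). Then I would normalize the instance so that every job has size in $\{p_1,\dots,p_k\}\cup\{0\}$, where $p_r$ is the size of the $r$-th job on $i^*$ (jobs before the first on $i^*$ become $p_1$, jobs after the last become $0$, jobs strictly between the $r$-th and $(r+1)$-st become $p_{r+1}$); this leaves $\Alg=p_1+P$ with $P=\sum_{j=2}^k p_j$ unchanged and does not increase $\Opt$.

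The combinatorial core is to count how many jobs each machine receives in each phase. Writing $\iota(s)$ for the number of jobs the first border machine of phase $s$ receives in phase $s$, the construction gives $\iota(2)=\lceil (k-1)/3\rceil$ and $\iota(s)=\lceil (k-1-\iota(s-1))/2\rceil$; a short induction on the residue $(k-1)\bmod 3$ yields the closed form in Lemma~\ref{lem:iota_pattern1} (namely $\lfloor (k-1)/3\rfloor$ when $(k-1)\bmod 3=1$ and $s$ is odd, and $\lceil (k-1)/3\rceil$ otherwise), and Corollary~\ref{cor:even_odd} records the parity of $(k-1)-\iota(s)$, which determines whether $i^*$'s last job has index $k-1$ or $k$. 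From this I would read off the exact load pattern on each machine interval, repeatedly splitting $\lfloor m/2^y\rfloor$ via Remark~\ref{rem:exp_and_mod}. Besides the two free lower bounds $\Opt\ge p_1$ and $\Opt\ge \mathtt{Load}/m$, one more is needed: choosing $m'\in\{\lfloor m/(4\kappa)\rfloor,\lfloor m/(4\kappa)\rfloor+1\}$ with $\kappa=2^{s^*-3}$ and using $\Opt\ge(\hat p+\check p)/m'$ for the $m'$ largest and $(k-1)m'$ smallest jobs, one groups the jobs according to which job on $i^*$ they were placed before and checks, again via Remark~\ref{rem:exp_and_mod}, that $\check p\ge m'\sum_{j=0}^{\psi(s^*)-1}p_{k-j}$ where $\psi(s)=\lfloor(\iota(s)-2)/3\rfloor$, giving $\Opt\ge p_1+\sum_{j=0}^{\psi(s^*)-1}p_{k-j}$ (Lemma~\ref{lem:migration_pattern1_UB}); the growth estimate $\psi(s,k)\ge\lfloor(k-6)/9\rfloor$ follows directly from the closed form of $\iota$.

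With these ingredients the proof splits into the cases $s^*=2$, $s^*=3$, $s^*\in\{4,\dots,\xi-1\}$, and $s^*=\xi$. In each case I would lower-bound $\mathtt{Load}/m$ by a nonnegative combination $\alpha p_1+\beta P$ (in the intermediate cases by $\alpha p_1+\gamma P_2+\delta P_1$, splitting $P$ at index $k-\psi(s^*)$, and for $s^*=3$ also peeling off $p_2$), then write $\Alg=p_1+P$ as a convex combination of the three lower bounds on $\Opt$ to read off the rate. The two extreme phases are mild: $s^*=2$ gives $\frac{13}{7}$, and $s^*=\xi$ gives $\frac{11}{7}$ for $m\ge4$ and $\frac74$ for $m\in\{2,3\}$ (using $\Opt\ge p_1+\sum_{j=\iota^*+1}^k p_j$). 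The main obstacle is the intermediate case, and within it the subcase $s^*=3$: there the machine segment $[\lfloor m/(2\kappa)\rfloor+1,\lfloor m/\kappa\rfloor]$ carries only a single job of size $p_2$, so the clean $s^*>3$ argument (where those machines hold $k$ jobs of average size $\ge\bar p$) must be repaired by extracting $p_2$ explicitly and invoking the extra bound $\Opt\ge 2p_2$, valid since at least $m+1$ jobs have size $\ge p_2$. A case split on $k\le14$ versus $k\ge15$, using $\psi(s^*)\ge0$ respectively $\psi(s^*)\ge\lfloor(k-6)/9\rfloor$, then shows $x=\tfrac14+\tfrac12\cdot\tfrac{k-1}{k-2-\psi(s^*)}\ge\tfrac{41}{52}$, which turns the convex combination into $\Alg\le\tfrac{28}{41}(p_1+P_2')+\tfrac{1}{41}\cdot 2p_2+\tfrac{52}{41}(\tfrac14 p_1+\tfrac34 p_2+\tfrac14 P_2'+xP_1')\le\tfrac{81}{41}\Opt$. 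Taking the worst of the four cases gives the claimed rate $\frac{81}{41}$.
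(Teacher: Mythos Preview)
Your proposal is correct and follows essentially the same approach as the paper: the same algorithm, the same normalization to sizes $\{p_1,\dots,p_k,0\}$, the same combinatorial lemmas on $\iota(s)$ and $\psi(s)$, the same auxiliary lower bound $\Opt\ge p_1+\sum_{j=0}^{\psi(s^*)-1}p_{k-j}$, and the same four-way case split on $s^*$ with the binding case $s^*=3$ handled via $\Opt\ge 2p_2$ and $x\ge 41/52$ to obtain $81/41$. The only omission is that you do not state the explicit ratio obtained for $s^*>3$, but since you correctly identify $s^*=3$ as the worst case this is harmless.
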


\section{Algorithms with constant migration factors\label{sec:migration}}
In this section, we consider algorithms with constant migration factor.
We start our study by showing a general reduction from ordinal algorithms to algorithms with constant migration factor.
Namely, we prove that if there is an ordinal algorithm of rate at most $\alpha$, then for every $\eps>0$, there is a robust $((1+\eps)\cdot \alpha)$-approximation algorithm whose migration factor is $O(\frac{1}{\eps})$.
Together with the results of \cref{sec:ordinal}, this gives an algorithm of approximation ratio strictly smaller than $2$ and thus improves upon the lower bound on the competitive ratio of pure online algorithms (see \cref{sec:online}).
Then, we show that for every $m\geq 3$, there is no robust PTAS.
This last impossibility result is for values of $k$ that are sufficiently large (and grow unbounded when the input encoding length grows without bound).
Thus, there are two main cases that do not follow this impossibility result.
Namely the case of two machines ($m=2$), and the case of $k$ being a constant term so the migration factor may depend on $k$ as well.
For each of these cases we present a robust PTAS.
In particular, we give a robust FPTAS for the case of two machines and a robust EPTAS for the case of constant $k$.

\subsection{The general upper bound}
Our next goal is to prove the following recipe for transforming an ordinal algorithm into an algorithm with constant migration factor.

\begin{theorem}
Given a polynomial time algorithm $\Alg$ for the ordinal settings of rate at most $\alpha$, there is a robust $((1+\eps)\alpha)$-approximation algorithm whose migration factor is $\frac{1+\eps}{\eps}$.
\end{theorem}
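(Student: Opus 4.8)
\emph{Overall approach.} The plan is to run the ordinal algorithm $\Alg$ on a geometrically rounded copy of the instance and to show that inserting a newly arrived job only triggers a short ``cascade'' of reassignments whose total size is controlled by a geometric series. Since an ordinal algorithm expects exactly $m\cdot k$ jobs, I would always work with a list of $m\cdot k$ slots, padded at the end with jobs of size $0$; if at some point more than $m\cdot k$ jobs of positive size have arrived, the instance is infeasible and we report this. For an arriving job $j$ define its \emph{rounded size} $\bar p_j=(1+\eps)^{\lfloor\log_{1+\eps}p_j\rfloor}$, so $\bar p_j\le p_j<(1+\eps)\bar p_j$. Rounding sizes down can only decrease $\Opt$, and replacing rounded sizes by the original ones increases every machine load by a factor strictly less than $1+\eps$. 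Hence the assignment $\sigma\colon[mk]\to[m]$ that $\Alg$ produces on the rounded list (which depends only on $m$ and $k$, not on the sizes) has, when evaluated on the original sizes, makespan at most $(1+\eps)\cdot\alpha\cdot\Opt$. It therefore suffices to maintain, after every arrival, a schedule that agrees with ``the real job of rank $i$ sits on machine $\sigma(i)$'' for the current sorted list, where ties among jobs of equal rounded size are broken arbitrarily but consistently; this is harmless because jobs of a common rounded size are interchangeable for $\sigma$.

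\emph{The cascade.} Group the current jobs into \emph{classes} $C_1\succ C_2\succ\cdots\succ C_G$ of common rounded size (including a final class of zero-jobs, which is non-empty as long as fewer than $m\cdot k$ positive jobs have arrived), so each $C_b$ occupies a contiguous block of ranks. When a new job $j$ arrives it joins some class $C_a$ (possibly newly created); every class $C_b$ with $b\ge a$ shifts its rank block down by one, while the classes with $b<a$ are untouched, keeping their ranks and hence their machines. I would keep every old job of $C_a$ on its old rank, place $j$ on the now-last rank of $C_a$, which frees the machine of the first old job of $C_{a+1}$; that job moves to the now-last rank of $C_{a+1}$, which frees the first old job of $C_{a+2}$, and so on down to $C_G$, where the displaced job is a zero-job that simply drops off the end of the list. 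Thus exactly one job of each of $C_{a+1},\dots,C_G$ is migrated, and all remaining jobs keep their ranks and their machines.

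\emph{Bounding the migration and wrapping up.} The rounded sizes of $C_a,C_{a+1},\dots$ are distinct powers of $1+\eps$, so the exponent drops by at least one between consecutive classes; consequently the migrated job coming from $C_{a+i}$ has original size $<(1+\eps)\,\bar p_j\,(1+\eps)^{-i}\le p_j(1+\eps)^{1-i}$, using $\bar p_j\le p_j$. Summing the geometric series, the total size of jobs migrated upon the arrival of $j$ is $<p_j\sum_{i\ge1}(1+\eps)^{1-i}=\frac{1+\eps}{\eps}\,p_j$, which gives migration factor $\frac{1+\eps}{\eps}$. Each arrival is handled by (re)computing $\sigma$, which is polynomial because $\Alg$ runs in polynomial time, together with a cascade of at most $m\cdot k$ slot updates, so the procedure runs in polynomial time; combined with the makespan bound above this is a robust $((1+\eps)\alpha)$-approximation. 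The main point that needs care is the cascade argument: one must check that, after a class block shifts by one, keeping the old members of each class in place and displacing only the first job of the next class really restores a schedule of the form ``rank $i$ on machine $\sigma(i)$'', and that the geometric separation of the class sizes is exactly what turns a cascade of length $\Theta(mk)$ into a constant migration factor.
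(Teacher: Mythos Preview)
Your proof is correct and follows essentially the same approach as the paper: round job sizes to powers of $1+\eps$, maintain the ordinal assignment $\sigma$ on the sorted rounded list, and observe that inserting a new job triggers a cascade that migrates at most one job per size class, whose total original size is bounded by a geometric series summing to $\frac{1+\eps}{\eps}\,p_j$. The only cosmetic difference is that you round down whereas the paper rounds up, which shifts where the $(1+\eps)$ factor appears in the accounting but yields the same approximation ratio and the same migration bound.
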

\begin{proof}
Upon the release of a new job $j$, we immediately round up its size $p_j$ to the next integer power of $(1+\eps)$.
Let $p'_j$ be the rounded size of job $j$.
Our algorithm ignores the original sizes of jobs and simply schedules the jobs of this rounded input.
Observe that every feasible solution of the original instance is also a feasible solution of the rounded instance and vice-versa. Furthermore, the cost of a solution in terms of the original instance is at most its cost in terms of the rounded instance, and this last term is again at most $(1+\eps)$ times the cost of the solution with respect to the original instance.
Regarding the migration factor, there may be a multiplicative increase by a factor of $1+\eps$.
Thus, in order to prove the claim, it suffices to  present an $\alpha$-approximation algorithm for the rounded instance whose migration factor is at most $\frac{1}{\eps}$.
Thus, in the remainder of this proof we consider the rounded instance.

Our algorithm maintains a list of the jobs, ordered non-increasingly, that were already released followed by a sequence of jobs of size $0$.
The (already) released jobs are sorted in a non-decreasing order of their sizes.
Based on this ordered list of jobs, we assign the jobs using the ordinal algorithm.
The approximation ratio of the resulting algorithm is at most $\alpha$ (for the rounded input) based on the assumption on the rate of the ordinal algorithm.
Thus, in order to prove the claim it suffices to show that we can maintain this sorted list (and its corresponding schedule) by migrating at most one job of each size that is smaller than the size of the newly arrived job.
To see that, recall that in the rounded instance, all job sizes are integer powers of $1+\eps$, and so the total size of migrated jobs when $j$ is released would be at most $p'_j\cdot \sum_{i=1}^{\infty} \frac{1}{(1+\eps)^i} = p'_j \cdot \frac{1}{\eps}$ as we claimed.

In the rounded instance, we let a \emph{size class} be the set of jobs of a common size, and this appears as a consecutive sublist of jobs in the sorted list.
Observe that we can modify the sorted order of jobs by changing the order of jobs of a common size class (but when reflecting this change to the schedule this may create further migration).
We append the new job $j$ as the smallest job of its size class.
Hence, it will be placed at the position of the largest job of the next smallest non-empty size class.
We can remove and reinsert this job treating it the same way as we did the new job.
Hence, for every non-empty size class whose common size is smaller than $p'_j$ we take its largest job $j'$ and move it to become the smallest of its size class.
As the last step of this procedure, one of the size $0$ dummy jobs is removed from the list.
Observe that when reflected to the schedule, the jobs which were not the largest among their size class were not migrated by this resorting of the jobs.
Furthermore only one job of each such size class is migrated.
The running time of this procedure is linear (for every arriving job) so the claim follows.
\end{proof}
Note that this result can be easily generalized for the dynamic case with job departures (see e.g. \cite{SV16}). 
Since the dynamic case is not within the scope of the present work, this is not discussed further.

\subsection{Lower bound for $\boldsymbol{m\geq 3}$ and non-constant values of $\boldsymbol{k}$}

Let $X=\frac{-3+\sqrt{837}}{2}\approx 12.965476$ be the positive root of the equation
$(18+\frac{X-9}{2})/(X+6)=\frac{X+6}{18}$ so $12<X<14$.  We present a lower bound of $\frac{X+6}{18}$ on the competitive ratio of any robust algorithm for \pr\ on at least three machines (for non-constant values of $k$).  That is, we prove the following result.

\begin{theorem}
Fix a robust algorithm for \pr\ with $m\geq 3$ machines with migration factor $\beta$ that has a constant competitive ratio $\alpha$ for all values of $k$.
 Then,  $\alpha \geq \frac{X+6}{18} \approx 1.05363756$.
\end{theorem}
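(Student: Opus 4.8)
The plan is to construct an adversarial job sequence for three machines (the case $m \geq 3$ follows by padding extra machines with many large jobs that are never migrated, so effectively only three machines matter) in which the robust algorithm, due to its bounded migration factor $\beta$, is eventually forced into a configuration whose makespan is at least $\frac{X+6}{18}$ times the optimum. The adversary will work with a scale where jobs come in a few discrete sizes; the choice of $X$ as the root of $(18+\frac{X-9}{2})/(X+6)=\frac{X+6}{18}$ signals that two different ``regret'' expressions will be balanced against each other, and the adversary will pick whichever branch the algorithm's current schedule makes worse.

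\textbf{Step 1: Setup and the role of $k$.} First I would fix $\eps > 0$ small and choose $k$ large enough (as a function of $\beta$ and $\eps$) that the total size of jobs the algorithm may migrate in response to any single ``small'' job is negligible compared to a machine's load built from $\Theta(k)$ small jobs. Concretely, present a long prefix of tiny unit jobs — enough to nearly fill all three machines to cardinality roughly $k$ — arranged so that the algorithm has essentially balanced the small-job load across the three machines (if it has not, that imbalance is already exploitable and we stop early). Because $\beta$ is constant and there are $\Theta(k)$ unit jobs per machine, the algorithm cannot substantially rearrange this bulk when larger jobs arrive later: migrating away a constant fraction of a machine's unit-load would require a migration budget of order $k$, which a constant number of later jobs of bounded size cannot supply.

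\textbf{Step 2: The decisive large jobs and the branching.} After the bulk of small jobs, I would release a small constant number of large jobs (of size comparable to the accumulated small-load on a machine, say normalized so each machine already carries load close to some value $L$, and the large jobs have size on the order of $L$ as well, calibrated using the constants $9$, $18$, $6$ appearing in the definition of $X$). The key point is that each machine still has only $O(1)$ free slots left (since cardinality is near $k$), so the large jobs must go onto the already-heavily-loaded machines, and migration cannot undo this. Now the adversary observes how the algorithm placed these large jobs and continues in one of two ways: either it stops (yielding a ratio like $(18 + \frac{X-9}{2})/(X+6)$ because one machine is overloaded relative to a balanced optimum that could have spread the large jobs out), or it releases one more large job forcing a second overload (yielding a ratio like $\frac{X+6}{18}$). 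Because $X$ is chosen so these two ratios are equal, whichever branch the algorithm falls into gives the same bound $\frac{X+6}{18}$.

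\textbf{Step 3: Bounding the optimum and assembling.} For each branch I would exhibit an explicit feasible offline schedule: it spreads the few large jobs onto distinct machines and redistributes the unit jobs so that each machine gets exactly $k$ jobs with near-equal load, giving $\Opt$ close to $\frac{1}{3}(\text{total size}) + O(1)$, which I bound from above by the target denominator. The algorithm's makespan, by contrast, is the load of the machine that received too many large jobs, bounded below by the target numerator. Taking $\eps \to 0$ and $k \to \infty$ (with $\beta$ fixed) sends the additive $O(1)$ and $O(\eps)$ slack to zero, yielding $\alpha \geq \frac{X+6}{18} \approx 1.05363756$.

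\textbf{The main obstacle} I anticipate is making the migration argument fully rigorous: I must show that across the \emph{entire} adversarial process — not just in response to a single job — the algorithm cannot, by a long sequence of small migrations accumulated over the arrival of many jobs, gradually shift the bulk small-load off a machine. The clean way to handle this is a potential/amortization argument: the total migration volume up to any point is at most $\beta$ times the total size of jobs released so far, and by making the large jobs' total size a vanishing fraction of the small jobs' total size, the algorithm simply has no budget to rearrange the small bulk after it is committed; the only freedom it retains is where to place the $O(1)$ large jobs into the $O(1)$ remaining slots, which is exactly the finite case analysis the branching handles. Getting the quantitative relationship between $k$, $\beta$, and $\eps$ exactly right so that all error terms are controlled is the fiddly part, but conceptually it is forced.
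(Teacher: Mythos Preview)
Your proposal has a genuine gap stemming from the \emph{order} in which you release jobs. You send the bulk of small unit jobs first and the few large jobs last, arguing that the algorithm cannot rearrange the small bulk when the large jobs arrive. But the migration budget upon arrival of a job of size $p$ is $\beta p$. You say the large jobs have size ``comparable to the accumulated small-load $L$'', and with $\Theta(k)$ unit jobs per machine that means $L = \Theta(k)$; hence each large job grants migration budget $\Theta(\beta k)$, which is more than enough to completely reshuffle the unit jobs. You later try to patch this by ``making the large jobs' total size a vanishing fraction of the small jobs' total size'', but then the large jobs are negligible relative to the existing loads and placing them anywhere cannot produce a $\Theta(1)$ gap in the competitive ratio. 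The two requirements---large enough to matter, small enough not to unlock migration---are contradictory in your ordering.

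The paper's construction inverts the order and thereby sidesteps this tension. It first sends a constant number of \emph{large} jobs (three of size $6$, two of size $9$, and $m-2$ of size $X$). There is a unique schedule of makespan $18$ for these jobs; any other has makespan at least $X+6$, so the algorithm is forced into that unique configuration or the bound already holds. Only then does the adversary release many \emph{tiny} jobs, each of size less than $6/\beta$, so that none of them can trigger migration of any Phase~1 job. The cardinality constraint now bites: machines $3,\dots,m$ have only $(m-2)(k-1)$ free slots, leaving $k-3$ tiny jobs that must land on machines $1$ or $2$; by pigeonhole one of those machines picks up load at least $\frac{X-9}{2}$ on top of its existing load $18$, while an offline solution achieves makespan $X+6$. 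The equation defining $X$ equates the two branches $(X+6)/18$ and $(18+\frac{X-9}{2})/(X+6)$. The mechanism is thus ``large jobs fix the configuration, tiny jobs overflow via cardinality'' rather than ``small jobs build load, large jobs overload''---and only the former survives the migration-budget accounting.
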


\begin{proof}
Assume by contradiction that $\alpha< \frac{X+6}{18}$.
We will consider even values of $k$ that satisfy $k\geq \max \{ \beta+2, 7\}$.

The input sequence has two parts.
The first part consists of $m+3$ jobs, where the first three jobs are of size $6$, the next two jobs are of size $9$ and the remaining $m-2$ jobs are of size $X$.
At this point, there is a feasible solution of makespan $18$ that schedules the three jobs of size $6$ to machine $1$, the pair of jobs of size $9$ to machine $2$, and each of the jobs of size $X$ to a separate machine (without other jobs).
In any other schedule, we have that the makespan is at least $X+6$, so if the algorithm does not create the solution we have just described (up to the indexes of the machines), then the input stops and we get a lower bound of $\frac{X+6}{18}$ as a lower bound on the competitive ratio of the algorithm, contradicting our assumption on $\alpha$.
So assume the algorithm has constructed the solution we have identified.

Next, the input continues with
$(m-3)\cdot (k-1)$ jobs, each of which has size $\frac{6}{k-1}$, and $2(k-2)$ jobs, each of which has size
$\frac{X-9}{k-2}$.
Note that we have
\[\beta\cdot\frac{X-9}{k-2}  < \beta\cdot\frac{5\cdot\frac{k-1}{k-2}}{k-1} \leq \beta\cdot\frac{6}{k-1} < 6 \ , \]
since $X<14$, $k\geq 7$, and $k\geq \beta+2$.
Consider the assignment of these jobs to the machines and note that once the second part of the input starts, the jobs of the first part cannot be migrated.
Each machine among machines $3,4,\ldots ,m$ may receive $k-1$ jobs of the second part.

We are left with $ (m-3)\cdot (k-1) + 2\cdot (k-2) - (m-2)\cdot (k-1)=k-3$ jobs of size at least  $\frac{X-9}{k-2}$.
By the pigeonhole principle and since $k-3$ is odd, one of the machines $1$ or $2$ receives at least $\frac{k-4}{2} + 1$ jobs and thus its load is at least $18+ \frac{k-2}{2} \cdot \frac{X-9}{k-2} = 18 + \frac{X-9}{2}$.

On the other hand, there exists a feasible solution of makespan $X+6$ defined as follows.
Each machine out of machines $4,5,\ldots,m$, is assigned one job of size $X$ and $k-1$ jobs, each of which has size $\frac{6}{k-1}$, and thus its load is $X+6$.
Machine $3$ is assigned one job of size $X$ and one job of size $6$.
Each machine out of machines $1,2$ is assigned one job of size $9$, one job of size $6$ and $k-2$ jobs, each of which has size $\frac{X-9}{k-2}$, so their load is again $X+6$.

Thus, the competitive ratio of the algorithm when $k$ grows without bound is at least $(18+\frac{X-9}{2})/(X+6)=\frac{X+6}{18}$, where the equality holds by the condition on $X$, and we get a contradiction to the assumption on $\alpha$ in this case as well for a sufficiently large value of~$k$.
\end{proof}

\subsection{Robust FPTAS for $\boldsymbol{m=2}$ and non-constant values of $\boldsymbol{k}$}
Here we consider the case of two machines and $k>\frac{1}{\eps^2}$, and exhibit the existence of a robust FPTAS for this case of \pr.
Observe that the makespan scheduling on two machines is NP-hard, so clearly our problem on two machines is also NP-hard (by setting the cardinality bound to be the number of jobs in the instance, making this cardinality constraint meaningless).
Therefore, if we are looking for polynomial time algorithms, an FPTAS is the best result we can hope for (unless $P=NP$).

\subparagraph{Preliminaries - the offline problem on two machines.}
In what follows, we assume that we have an offline FPTAS for the problem on two machines (i.e., an FPTAS that may have an unbounded migration factor).
In order to get such an offline FPTAS from the FPTAS's for the knapsack problem with cardinality constraint \cite{Cap00}, consider an input with $t$ jobs with total size $B$, then we add $2k-t$ zero-sized jobs and we apply the FPTAS for the knapsack problem with cardinality constraint asking for a set of jobs of total size at most $B/2$ that consists of exactly $k$ jobs so as to maximize the total size of jobs of this set (i.e., for every job the size of the job is also its value).
Observe that if the optimal makespan to our problem is of value $O$ (where $O\geq  \frac B2$), then the optimal solution for the knapsack problem with cardinality constraint is $B-O$, and a $(1-\eps)$-approximation algorithm for the latter problem gives a solution for our problem of makespan at most $B-(1-\eps)\cdot (B-O) = (1-\eps)\cdot O +\eps B \leq (1+\eps)\cdot O$. So, we get the required offline FPTAS.

\subparagraph{Initial steps of the algorithm once a new job is revealed.}  The size of each arriving job is immediately rounded up to the next integer power of $1+\eps$.  This rounding hurts the approximation ratio by a multiplicative factor of $1+\eps$ and causes no migration, but if we upper bound the migration factor in the rounded instance by $\beta$, then the migration factor with respect to the original instance is upper bounded by $\beta\cdot (1+\eps)$.  So assume without loss of generality that the size of the jobs are already rounded.

Furthermore, once the new job is revealed to the algorithm we invoke the offline FPTAS for our problem and we denote by $\Opt$ the cost of the returned solution by the offline FPTAS.  We keep the value of $\Opt$  non-decreasing so if the FPTAS has found a better solution with the new job, then the $\Opt$ value is kept without modification, but if the cost of the solution has increased then $\Opt$ refers to the new cost.  By the above analysis of the increase of the approximation ratio due to the rounding, the definition of the offline FPTAS, and using the fact that the optimal cost can only increase, we conclude   that $\Opt$ is at most $(1+\eps)^2$ times the optimal cost of the original instance of our problem.

\subparagraph{The structure of schedules maintained by the algorithm.}
The algorithm maintains as an auxiliary data structure three sorted linked lists of the jobs $L_1,L_2,L_3$ each of which is sorted according to their size from largest to smallest.
It also maintains two pointers $p,p'$ where $p$ points to an element of $L_2$ and $p'$ points to an element of $L_3$.
The number of jobs already scheduled in the previous steps of the algorithm is $j-1$ and we are about to schedule job $j$.
In addition, we have a subset of jobs called {\em marked jobs} denoted as $S$, where at some steps of the algorithm we decide to mark a job and then this marking is permanent.
A necessary condition for marking a job $t$ is that the instance already contains at least $k-2/\eps$ jobs each of which of size at least $\frac{p_t}{\eps^2}$.  Maintaining the data structures causes no migration of jobs but the schedule of the jobs is determined by the pointers and the membership of jobs to the lists and so changing the data structures influences the schedule (and as a result jobs are migrated).  In the following description, jobs and positions of the jobs in $L_1,L_2,L_3$ are exchangeable, and when we say job of position $x$, we mean the job that in the current corresponding linked list appears in the $x$-th position.

The jobs in $L_1$ are scheduled according to an offline FPTAS applied on the sub-instance consisting of (only) these jobs.  Let $M_1$ be the machine whose load is not smaller than the load of $M_2$ when considering only the assignment of the jobs in $L_1$.
The jobs in $L_2$ up to and including position $p$ are scheduled on machine $M_2$.
The remaining jobs of $L_2$ are scheduled in a round-robin fashion starting with machine $M_1$, that is, for every integer number $i=1,2,\ldots $, the job in position $p+2i-1$ is scheduled on $M_1$ and the job in position $p+2i$ is scheduled on machine $M_2$ (if there are such positions in $L_2$).
The jobs in $L_3$ up to and including the job in position $p'$ are scheduled on machine $M_2$.
The remaining jobs of $L_3$ are scheduled on machine $M_1$.  We will require that $p'$ points to the last job in $L_3$ unless machine $M_2$ is assigned exactly $k$ jobs.  Furthermore, we require that all jobs in $L_3$ after position $p'$ are not larger than any other job in $L_2$.

\subparagraph{The motivation for this structure and highlights of the analysis.}
 The jobs in $L_1$ are large jobs (these were the set of large jobs when the last one of those arrived). In particular, we will have that $L_1$ has at most $2/\eps$ jobs.  Whenever a new job is announced as a {\em large job} because its size is at least $\eps \cdot \Opt$, we will add it to $L_1$ and we will be able to migrate all jobs of the instance.  The other jobs are called {\em small jobs}.  In the analysis of the approximation ratio, we will have several cases.
In the first case, when $L_3$ is empty and $L_2$ has no job of position larger than $p$, the makespan is attained by the  jobs of $L_1$, so the approximation ratio follows by the one of the offline FPTAS.  In the second case, there are some jobs scheduled by the round-robin step, and in this case the loads of the machines will be nearly balanced and this will prove the approximation ratio.  In the last case, no job is scheduled by the round-robin assignment rule (that is, $p$ points to the last job of $L_2$) and machine $M_2$ has exactly $k$ jobs.  In this last case, machine $M_1$ gets the smallest jobs in the instance on top of the jobs from $L_1$ assigned to that machine.  Since the optimal solution (for scheduling all jobs) needs to assign all jobs, it cannot assign a significantly smaller subset of small jobs to the machine whose load is larger (when considering only the jobs of $L_1$). In this last case, we will use the assumption that $k>\frac{1}{\eps^2}$ to conclude that the load which the algorithm adds to $M_1$ for jobs  of $L_3$ with positions larger than $p'$ is negligible so the approximation ratio will follow.

Thus, in order to be able to use this type of analysis, we need to maintain the structure upon the release of a new job  using a polynomial time algorithm with constant migration factor.
Our algorithm has several cases, and it applies some procedures for guaranteeing its properties.

\subparagraph{The cases of the algorithm for assigning the next job.}  Next, we elaborate on the different cases of the algorithm depending on the size of the new job that has been revealed.

{\bf{Case 1: the new job $\boldsymbol{j}$ has size at least $\boldsymbol{\eps \Opt}$.}}  We will allow migration of all jobs in the instance.  The total size of the jobs is not larger than $\frac{2}{\eps}$ times the size of the new job, so the migration factor will hold.  In order to guarantee the approximation ratio, we redefine the set of large jobs to be the set of jobs of size at least $\eps \Opt$, and we apply the offline FPTAS to schedule the large jobs.  We insert the remaining jobs one by one using the algorithm for the later cases.  If as a result of the insertion of the small jobs, the makespan is not increased, then the guarantee on the competitive ratio holds by the approximation ratio of the offline FPTAS, and otherwise it holds by the guarantee we will establish in the other cases.  Thus, assume without loss of generality that case 1 does not hold, and the new job is a small job of size smaller than $\eps\Opt$.

Next, we explain the procedures and analyze the claims regarding the migration factor when we will apply these procedures during the cases of the algorithm.

\subparagraph{Pushing job $\boldsymbol{t}$ into the round-robin.} If $t$ is in $L_2$ and in position $p$ then, when we say that we push job $t$ into the round-robin, we mean that the pointer $p$ is decreased by one, and then, we apply the following process where  some equal-sized jobs are changing their positions in $L_2$
(this later process is carried out even when the job $t$ is not in one of the $p$ first positions in $L_2$ but it is always a job in $L_2$).
Denote by $p_t$ the size of job $t$ and by $\pi$ the position of $t$ in $L_2$.  Next, we consider the sublist $L'$ of $L_2$ consisting of all jobs of position not smaller  than $\pi$. For every size $q$ of jobs in this sublist, we let $t_q$ be the first job in the sublist of size $q$.  We apply the following modifications to the sublist $L'$ (and to their position in $L$).  For every size $q$ of jobs in the sublist, we move the position of $t_q$ to be the last job of size $q$ in the sublist.  This concludes the description of the procedure.

Observe that the new list resulting from the procedure is also a sorted list of the jobs.  Furthermore, according to the encoding rules, the schedule corresponding to the new list has the following features.  All jobs of $L_1$ or $L_3$ or job in $L_2$ of position smaller than $\pi$ are assigned to the same machine as in the solution prior to the procedure, and for each size $q$ of jobs in the sublist only job $t_q$ perhaps changes its assigned machine (and all other jobs of this size are not migrated).  In order to verify the last property note that for every other job of size $q$ in the sublist, the number of jobs that appear between $\pi$ and this job in the new list is exactly the number of jobs that used to appear between position $\pi$ and this job in the old list $L_2$ (the job $t$ is added to this set of jobs while job $t_q$ is removed from this set of jobs).  Since job sizes are integer powers of $1+\eps$, the total size of migrated jobs when applying the procedure and pushing a job of size $p_t$ into the round-robin is at most $p_t\cdot \sum_{i=0}^{\infty} \frac{1}{(1+\eps)^i}=p_t\cdot \frac{1+\eps}{\eps}$.

\subparagraph{Rebalancing the loads}  Our algorithm tries to maintain the property that the (absolute value of the) difference in the loads of the two machines is at most $\eps\cdot \Opt$ where the load of a machine is defined (for the purpose of this procedure) as the total size of non-marked jobs assigned to the machine.  This property is maintained as long as  there exists at least one job that is scheduled using the round-robin rule.  Furthermore, the algorithm assumes that all jobs of $L_2$ or $L_3$ are of size smaller than $\eps\Opt$.     There are two cases of this procedure, in the first one machine $M_1$ has a larger load, while in the other case, machine $M_2$ has a larger load.  The procedure is invoked only when the difference in the loads of the machines is larger than $\eps\cdot \Opt$ and in the first case it is assumed that some jobs are assigned using the round-robin step while in the second case, we do not make this assumption.

Assume first that the load of $M_2$ is smaller than the load of $M_1$ by more than $\eps\cdot \Opt$.
Here we have several sub-cases.  In the first sub-case, the size of the new job $j$ is at least ${\eps^3}$ times the size of the last job in $L_2$ and the size of the job of position $p'$ in $L_3$.  In this sub-case we move the last two jobs from $L_2$ to $L_3$ and add those jobs in their position according to their size and then, we check if we need to modify the position $p'$ by one place (because the intermediate number of jobs assigned to $M_2$ is now $k-1$ or $k+1$).  If, as a result of this step, the new value of $p$ points to the last job in $L_2$, then we modify the value of $p$ to be the last job in $L_2$ and in this case, no job would be scheduled using the round-robin rule. So, the procedure will stop (and not apply recursively further times).  Using these modifications two (or one) jobs are moved from the round-robin rule to the  assignment rules of $L_3$, so at most one job is moved from machine $M_1$ to machine $M_2$ and at most one job is moved from machine $M_2$ to machine $M_1$.  If there are two migrated jobs and they have the same size, we implement these changes without migrating jobs by replacing their position in  the lists.
Otherwise, since the sizes of these two jobs are rounded to distinct integer powers of $1+\eps$, the total size of the two migrated jobs is at most $\frac{2}{\eps}$ times the amount in which the difference between the two loads (of the two machines) is decreased.  Observe that after the change in the schedule, we still have the case that machine $M_2$ has a smaller load than the load of $M_1$, but if the difference in the two loads is still larger than $\eps\cdot \Opt$, then we repeat the process as much as we need until the first time where either no job is scheduled according to the round-robin rule or the difference between the two loads is at most $\eps\cdot \Opt$.  The total size of migrated jobs by this recursive call to the modification of the schedule is not larger than $\frac{2}{\eps^4}$ times the size of the new job in all recursive calls that ended to be dealt in this sub-case.

In the second sub-case, the size of the new job $j$ is smaller than $\eps^3$ times the size of the last job in $L_2$ but it is of size at least ${\eps^3}$ times the size of job of position $p'$ in $L_3$.  In this case, we move the two jobs of position $p'$ and position $p'+1$ from $L_3$ to $L_2$ and now $p'$ points to the position $p'-1$, but no job is migrated.
We repeat the rebalancing the loads.

In the last sub-case, the size of the new job is smaller than  $\eps^3$ times the size of any small job assigned to $M_2$ and $M_2$ has $k$ jobs.  Thus, we announce $j$ as a marked job and add it to $S$.  Since the loads of the two machines were in difference of at most $\eps\Opt$ before this iteration of adding $j$ to the schedule, the difference between the two loads after announcing $j$ as a marked job is not larger than that.  Thus, the rebalancing the loads ends in the required state if it was called when the load of $M_2$ is smaller than the load of $M_1$.

Next, assume that the load of $M_2$ is larger than the load of $M_1$ by more than $\eps\cdot \Opt$.  We move the two jobs of positions $p'-1$ and $p'$ from $L_3$ to $L_2$ and then increase $p'$ by one.  This action moves the smallest jobs (among the jobs of $L_3$) that were assigned to $M_2$ from $L_3$ to the round-robin and the largest job assigned to $M_1$ among the jobs of $L_3$ is moved to $M_2$.  This procedure is applied when a new job has size of at least the size of the job previously in position $p'$ in $L_3$ (otherwise, it is impossible to get that the load of $M_2$ is higher) and this is the size of the largest job that may migrate as a result of this step. If the earlier position of $p'$ was to the first job in $L_3$, then we move this job to the list $L_2$ and the new value of $p'$ is to the head of the list (before the first job of the list).  As a result of this operation at most one job is migrated from $M_2$ to $M_1$, and at most one job is migrated from $M_1$ to $M_2$.   If these two moved jobs have the same size, we implement these changes without migrating jobs by replacing their position in  the lists.  Otherwise, since the sizes of these two jobs are rounded to distinct integer powers of $1+\eps$, the total size of the two migrated jobs is at most $\frac{2}{\eps}$ times the amount in which the difference between the two loads is decreased.  Observe that after the change in the schedule, we still have the case that machine $M_1$ has a smaller load than the load of $M_2$, but if the difference in the two loads is still larger than $\eps\cdot \Opt$, then we repeat the process as much as we need until the first time where the difference between the two loads is at most $\eps\cdot \Opt$.  The total size of migrated jobs by this recursive call to the modification of the schedule is not larger than $\frac{4}{\eps}$ times the size of the new job that is revealed as the migrated jobs are not larger than the new revealed job.  This completes the description and analysis of the rebalancing the loads operation.

 The proof of the approximation ratio in the other cases is based on the following lemma.
 \begin{lemma}\label{m2lemma}
 A solution obtained by applying the rebalancing the loads procedure satisfies that its makespan is at most $(1+5\eps)\cdot \Opt$.
 \end{lemma}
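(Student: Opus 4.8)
The plan is to bound the makespan of the schedule encoded by the data structure after a call to the rebalancing procedure by splitting, on each of the two machines, the load into the contribution of the marked jobs and the contribution of the non-marked jobs, and bounding these two parts separately against $\Opt$ (which is at least the optimal makespan, so any bound in terms of the optimal makespan is also a bound in terms of $\Opt$).

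First I would dispose of the marked jobs. Whenever a job $t$ is marked, the instance already contains at least $k - 2/\eps$ jobs of size at least $p_t/\eps^2$, and, as $\Opt$ is kept non-decreasing, this remains true at the end. On $m=2$ machines every feasible schedule assigns these $\ge k-2/\eps$ large jobs to the two machines, each receiving at most $k$ of them, so the optimal makespan is at least $\tfrac12(k-2/\eps)\cdot p_t/\eps^2$, i.e. $p_t \le \tfrac{2\eps^2}{k-2/\eps}\,\Opt$. The instance has at most $2k$ jobs, and since $k > 1/\eps^2$ implies $2/\eps < 2\eps k$, hence $k - 2/\eps > (1-2\eps)k$, the total size of all marked jobs is at most $\tfrac{4\eps^2 k}{k-2/\eps}\,\Opt \le \tfrac{4\eps^2}{1-2\eps}\,\Opt \le 2\eps\,\Opt$ for $\eps \le 1/4$. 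Thus the marked jobs add at most $2\eps\,\Opt$ to the load of either machine, and it remains to show that the non-marked load of each machine is at most $(1+3\eps)\Opt$.

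For the non-marked loads $A_1, A_2$ of $M_1, M_2$ I would distinguish the possible states the structure is in after rebalancing. If at least one job is scheduled by the round-robin rule, the rebalancing procedure guarantees $|A_1 - A_2| \le \eps\Opt$; since $A_1 + A_2$ is at most the total size of all jobs, which on two machines is at most twice the optimal makespan and hence at most $2\Opt$, we get $\max\{A_1,A_2\} \le \Opt + \tfrac\eps2\Opt$. If no job is scheduled by the round-robin rule, then $p$ points to the last job of $L_2$, so all of $L_2$ sits on $M_2$. If, in addition, $M_2$ has fewer than $k$ jobs, then $p'$ points to the last job of $L_3$, so all of $L_3$ also sits on $M_2$ and $M_1$ carries only jobs of $L_1$; hence $A_1$ equals the load of $M_1$ in the offline-FPTAS schedule of $L_1$, which is at most $(1+\eps)$ times the optimal makespan of $L_1$, and the latter is at most the optimal makespan of the whole instance because removing jobs (cardinality $k$ is not binding on $L_1$ as $|L_1|\le 2/\eps \le k$) cannot increase it; so $A_1 \le (1+\eps)\Opt$, and as there is no round-robin job, the rebalancing invariant forces $A_2 \le A_1 + \eps\Opt$, giving $\max\{A_1,A_2\} \le (1+2\eps)\Opt$.

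The remaining and most delicate state is: no job scheduled by the round-robin rule and $M_2$ carrying exactly $k$ jobs. Here $M_1$ holds its $L_1$-jobs together with the jobs of $L_3$ lying after position $p'$; by the maintained structure the latter are the globally smallest jobs and are not larger than any job of $L_2$, hence not larger than any job on $M_2$. The crucial claim --- and the point where the hypothesis $k > 1/\eps^2$ is indispensable --- is that these trailing $L_3$-jobs have total size at most $\eps\Opt$. This is where the $\eps^3$-thresholds in the sub-cases of the rebalancing procedure (which govern when a job is pushed from $L_2$ into $L_3$, when jobs are moved back from $L_3$ into the round-robin, and when a job is marked instead) come into play: one shows that each of the at most $k$ trailing jobs is smaller by an $\eps$-power than the jobs on $M_2$ against which it is amortised, and since the $k$ jobs on $M_2$ have total size at most $\Opt$, the trailing mass is negligible. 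I expect verifying this claim to be the main technical obstacle; the other states are short once the invariants maintained by the procedures are in place. Granting it, $A_1 \le (1+\eps)\Opt + \eps\Opt = (1+2\eps)\Opt$, and since $M_2$ being full is exactly the situation in which the round-robin list gets emptied, the rebalancing invariant gives $A_2 \le A_1 + \eps\Opt \le (1+3\eps)\Opt$. Combining the three states, $\max\{A_1,A_2\}\le(1+3\eps)\Opt$, and adding the $\le 2\eps\Opt$ from the marked jobs yields a makespan of at most $(1+5\eps)\Opt$, as claimed.
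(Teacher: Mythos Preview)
Your overall decomposition into marked vs.\ non-marked load, and the first two states (round-robin active; or round-robin empty with $M_2$ not full), is in line with the paper. The real gap is in the third state.

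Your ``crucial claim'' that the trailing $L_3$-jobs on $M_1$ have total size at most $\eps\Opt$ is not merely hard to verify --- it is false. Take two large jobs of size $P$ (one per machine via the offline FPTAS) together with $2k-2$ small jobs of common size $s$; then $\Opt \approx P+(k-1)s$, and in the third state $M_1$ carries roughly $k-1$ of the small jobs, a load of about $(k-1)s$, which can be a constant fraction of $\Opt$ (and close to $\Opt$ when $P$ sits just above the large-job threshold $\eps\Opt$). So no absolute bound of the form $F_2\le\eps\Opt$ is available, and the $\eps^3$-thresholds in the rebalancing sub-cases do not rescue this: they control which jobs get \emph{marked}, not how much unmarked small mass can accumulate on $M_1$.

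The paper's argument for this state is different in kind: it does not bound $F_2$ on its own but compares $A_1=F_1+F_2$ directly against the optimal makespan. In the optimal schedule, one of the two machines --- call it $M$ --- already carries $L_1$-load at least $F_1/(1+\eps)$ (because the offline FPTAS on $L_1$ is a $(1+\eps)$-approximation and $|L_1|\le 2/\eps\le k$, so the cardinality bound is slack for $L_1$). The other machine holds at most $k$ jobs, so $M$ must absorb at least $j-k-2/\eps$ small jobs, where $j$ is the current total number of jobs. On the algorithm's side, $F_2$ consists of at most $j-k$ small jobs and, by the maintained structure, these are the globally smallest ones. Hence $A_1$ exceeds $(1+\eps)\Opt$ by at most the size of roughly $2/\eps$ extra small jobs, each of size at most $\Opt/(k-2/\eps)$ (since $M_2$ already holds $k-2/\eps$ small jobs that are all at least as large). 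With $k>1/\eps^2$ this surplus is $O(\eps)\cdot\Opt$. The right intuition is therefore not ``the tail is negligible'' but ``the optimal must eat almost the same tail''.

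A minor remark: in both no-round-robin states you invoke ``the rebalancing invariant'' to bound $A_2$, but that invariant is only guaranteed while at least one round-robin job exists. You do not actually need it: the rebalancing procedure terminates without restoring the $\eps\Opt$-gap only from the ``$M_1$ heavier'' branch, so in those states $A_1\ge A_2$ and it suffices to bound $A_1$.
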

 \begin{proof}
 First, observe that the total load of the marked jobs is not larger than $2\eps^2$ times the total size of jobs in the instance, thus if we compute the cost of the solution restricted only to the non-marked jobs and show that it is at most $(1+4\eps)\cdot \Opt$ the claim is followed.

Next, assume that at the end of the procedure the difference between the loads of the two machines is at most $\eps\Opt$.  Since the offline solution has makespan at most $\Opt$ and there are only two machines, we get that the makespan of the solution constructed by the algorithm is at most $(1+\frac{\eps}{2})\cdot \Opt$, so the claim holds in this case.

 If the difference between the two loads is higher than $\eps\Opt$, it means that no job is scheduled by the round-robin rule, so $M_2$ gets the largest small jobs until it has exactly $k$ jobs in total.  Furthermore, machine $M_1$ attains the makespan.  Thus, we need to analyze the load of machine $M_1$.  Observe that the number of jobs assigned to machine $M_1$ is $j-k$.  We let $F_1$ be the total size of jobs assigned to $M_1$ among the jobs of $L_1$, and we let $F_2$ be the total size of small jobs assigned to $M_1$.
 Note that the optimal solution (for the rounded instance) needs to allocate the jobs of $L_1$, and among its machines there is a machine $M$ that receives load of at least $\frac{F_1}{1+\eps}$ of jobs among these  jobs.  This machine $M$ has at least $j-k-2/\eps$ additional small jobs, while in our solution $F_2$ is at most the total size of the smallest $j-k$ small jobs.  Note that in the solution of the algorithm machine $M_2$ is assigned $k$ jobs, and at least $k-2/\eps$ of these jobs are small jobs. Thus, each of these small jobs is of size at least the size of a job contributing to $F_2$.  Thus, each small job contributing to $F_2$ has size of at most $\frac{\Opt}{k-2/\eps}$. Thus, having at most $\frac{2}{\eps}$ additional such jobs, increases the load of the machine by no more that $\frac{2\Opt}{\eps (k-2/eps)} \leq 3\eps\Opt$ where the inequality holds by the assumption on $k$.  Thus, the claim holds in this case as well.
 \end{proof}

 \subparagraph{The descriptions of the other cases of the algorithm.}
The other cases apply when the size of the new job is smaller than $\eps\Opt$.

{\bf{Case 2: the new job $\boldsymbol{j}$ has size smaller than $\boldsymbol{\eps} \Opt$ but its size is not larger than the size of the job of position $\boldsymbol{p}$ in $\boldsymbol{L_2}$ or the list $\boldsymbol{L_3}$ is empty and $\boldsymbol{p}$ points to the last job of $\boldsymbol{L_2}$.}}  In this case, we first insert  $j$ into  $L_2$ into its position that is at most $p$ and if $j$ is the smallest job in $L_2$ we increase $p$ by one (in particular $j$ is temporarily assigned to $M_2$). Then check if the load of $M_2$ among the first $p$ jobs of $L_2$ together with the jobs of $L_1$ is not larger than the load of $M_1$ (among the jobs of $L_1$).  If this required property does not hold, then we apply pushing job $p$ into the round-robin over and over again until the first time in which the load (using these jobs of $L_1$ and the prefix of $L_2$ up to position that is the current value of $p$) of $M_2$ will be not larger than the load of $M_1$ (considering only the jobs of $L_1$).  Observe that every job that is pushed into the round-robin has size of at most the size of $j$ as once we push $j$ into the round-robin the required condition is satisfied as it used to hold before the iteration. Thus the total size of jobs that is pushed to the round-robin is at most twice the size of $j$.  Using the analysis of this procedure, the total size of migrated jobs due to these applications is at most $\frac{2(1+\eps)}{\eps}$ times the size of $j$, and this quantity also upper bounds the amount in which we need to decrease the difference between the loads of the two machines (when considering all jobs).  To do that we apply the rebalancing the loads operation.  Thus, the migration factor in this case is $O(\frac{1}{\eps^5})$ and since we have applied the rebalancing the loads operation to obtain this schedule, the approximation ratio of $1+5\eps$ is maintained in this case as well using Lemma \ref{m2lemma}.

{\bf{Case 3: the new job $\boldsymbol{j}$ has size  smaller than the size of the job of position $\boldsymbol{p}$ in $\boldsymbol{L_2}$ but at least the size of the smallest job in $\boldsymbol{L_2}$.}}
In this case, we apply the push job $j$ into the round-robin operation, and this is followed by the rebalancing the loads operation.  Observe that the increase in the difference between the loads of the machines due to the pushing of $j$ into the round-robin is upper bounded by the total size of migrated jobs during this operation (together with the size of $j$) and this is at most $\frac{1+\eps}{\eps}$ times the size of $j$. By the analysis of the rebalancing operation given the fact that the size of each job that we decide to migrate during this operation is at most the size of $j$, we conclude that the total size of migrated jobs due to this step is at most $O(\frac{1}{\eps^5})$  times the size of $j$ and so the migration factor holds in this case as well.  The approximation ratio is maintained by Lemma \ref{m2lemma}.

{\bf{Case 4: Otherwise.}}  We add the new job to $L_3$ (to its position in this list).  If $L_3$ was empty prior to this iteration, then we set $p'$ to point to this job.  Otherwise, if its size is not smaller than the size of job of position $p'$, then $j$ is scheduled (at this point in time) to machine $M_2$ and if its size is smaller than the one in position $p'$, then it is scheduled on machine $M_1$.  If as a result of this step, the number of jobs scheduled to machine $M_2$ is $k+1$, then we move one job to $M_1$.  This is carried out by either  decreasing $p'$ by one with or without moving the smallest job of $L_2$ that is scheduled on $M_1$ to $L_3$, or moving the smallest two jobs of $L_2$ to $L_3$ and assign both these jobs to $M_1$ since their size is not larger than the size of job of position $p'$.  We choose among these possibilities so that the size of every job in $L_2$ will be at most the size of the largest job assigned to $M_1$ among the jobs of $L_3$.  Afterwards, we apply the rebalancing the loads procedure.  Observe that the total size of migrated jobs (including the assignment of $j$) before the rebalancing operation is at most $3p_j$, and thus using the analysis of that operation the migration factor of $O(\frac{1}{\eps^5})$ is kept in this case and by Lemma \ref{m2lemma} the approximation ratio holds as well.

By considering the implication of the initial steps on both the approximation ratio and the migration factor, and then scaling $\eps$ by a constant number, we conclude the following result.
\begin{theorem}
For the case $m=2$ and $k>\frac{1}{\eps^2}$ of \pr, there is a robust FPTAS whose migration factor is $O(\frac{1}{\eps^5})$.
\end{theorem}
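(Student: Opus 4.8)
The plan is to assemble the theorem from the algorithm and the analysis developed above, the only genuinely new step being a rescaling of $\eps$. First I would confirm that the algorithm is well-defined on every input: after the initial rounding of each arriving job to a power of $1+\eps$ and the update of $\Opt$ via the offline FPTAS, exactly one of Cases~1--4 applies (Case~1 when $p_j\ge\eps\Opt$; otherwise Cases~2--4 partition the small jobs according to where $p_j$ falls relative to the job at position $p$ of $L_2$, whether $L_3$ is empty, and whether $p_j$ is at least the smallest job of $L_2$), and that each case restores every structural invariant: $L_1$ holds at most $2/\eps$ jobs, $p'$ points to the last job of $L_3$ unless $M_2$ has exactly $k$ jobs, all jobs of $L_3$ after $p'$ are no larger than any job of $L_2$, and a job is marked only once at least $k-2/\eps$ jobs of size at least $p_t/\eps^2$ are present. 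In Cases~2--4 this is exactly what the pushing-into-the-round-robin and rebalancing-the-loads procedures are designed to maintain, while in Case~1 the invariants are re-established from scratch by recomputing $L_1$ with the offline FPTAS and reinserting the remaining (small) jobs one by one via Cases~2--4.

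Second I would track the approximation ratio. By the argument in the initial steps, the value $\Opt$ returned by the offline FPTAS on the rounded instance is kept non-decreasing and is at most $(1+\eps)^2$ times the true optimum of the original instance, and the true makespan of any maintained schedule is at most its makespan under the rounded sizes. Whenever a small job is inserted, the schedule is the output of a final call to the rebalancing procedure, so \cref{m2lemma} bounds its rounded makespan by $(1+5\eps)\Opt$. In Case~1, either the reinsertion of the small jobs does not increase the makespan, so it is bounded by the offline FPTAS guarantee on the large jobs, i.e.\ at most $(1+\eps)\Opt\le(1+5\eps)\Opt$, or it does increase and is then bounded by the guarantee of whichever of Cases~2--4 performed the last insertion. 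Hence in all cases the algorithm's makespan is at most $(1+5\eps)(1+\eps)^2$ times the true optimum, which is $1+O(\eps)$.

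Third I would collect the migration bounds. Rounding to powers of $1+\eps$ costs a multiplicative factor $1+\eps$ in migration and none in the schedule, so it suffices to bound migration on the rounded instance. A push of a job of size $p_t$ migrates total size at most $\frac{1+\eps}{\eps}p_t$ by the one-job-per-size-class/geometric-series argument, and a rebalancing step migrates at most $\frac{2}{\eps}$ times the decrease it makes to the load gap; summing over the recursive rebalancing calls triggered by one arrival together with the at most $2p_j$ (Cases~2--3) or $3p_j$ (Case~4) of size pushed or moved before rebalancing yields the $O(1/\eps^5)$ bound established per case above, while Case~1 migrates the whole instance, whose total size is at most $\frac{2}{\eps}p_j$. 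Thus the migration factor is $O(1/\eps^5)$ throughout. Finally I would replace $\eps$ by $\eps'=c\eps$ for a suitable constant $c$ so that $(1+5\eps')(1+\eps')^2\le 1+\eps$, which affects the migration factor only by a constant factor, giving the claimed robust FPTAS with migration factor $O(1/\eps^5)$. The one delicate point, already handled in the body, is bounding the recursion in the rebalancing procedure --- guaranteeing that every job it elects to migrate (except in Case~1, which is deliberately exempt from the migration constraint because the entire instance is then small relative to $p_j$) has size at most $p_j$, so that the geometric accounting over the possibly many recursive calls stays $O(1/\eps^5)\cdot p_j$; everything else is bookkeeping.
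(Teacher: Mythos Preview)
Your proposal is correct and takes essentially the same approach as the paper's own (one-sentence) proof: assemble the approximation guarantee from \cref{m2lemma} together with the $(1+\eps)^2$ loss incurred by the initial rounding and offline-FPTAS steps, collect the per-case $O(1/\eps^5)$ migration bounds already worked out in the body, and finally rescale $\eps$ by a constant.

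One small caveat on your closing remark: it is not quite true that every job the rebalancing procedure elects to migrate has size at most $p_j$. In the first sub-case of rebalancing (load of $M_2$ smaller than that of $M_1$), the jobs being moved between $L_2$ and $L_3$ may be up to $1/\eps^3$ times larger than the newly arrived job~$j$ --- that is precisely why the paper's bound for that sub-case is $\frac{2}{\eps^4}\cdot p_j$ rather than $O(1/\eps)\cdot p_j$, and why the overall migration factor ends up at $O(1/\eps^5)$ rather than $O(1/\eps^2)$. Since you explicitly defer to the body for this analysis (``already handled in the body''), this mischaracterization does not affect the validity of your assembly argument, but you should correct the stated reason for the bound.
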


\subsection{Robust EPTAS for constant values of $\boldsymbol{k}$}
 In the following, we show an EPTAS  for constant values of $k$ with migration factor $f(1/\eps, k)$, i.e., an algorithm that admits a competitive ratio of $(1+\eps)$ and runs in time $f_1(1/\eps,k)|I|^{O(1)}$ for some computable function $f_1$, an accuracy parameter $\eps$, and the encoding length $|I|$ of the instance $I$. Our scheme has running time that is polynomially bounded even for some non-constant values of~$k$ (e.g. for $k=O(\log \log n)$), but the migration factor is constant only for constant values of~$k$.

In a nutshell, we formulate the problem as a configuration integer-linear program (ILP) and then use sensitivity results for bounding the distance between a solution for the ILP corresponding to the instance before the new job arrives, and the ILP corresponding to the instance after the new job is added to the instance. A {\em configuration} is a multiplicity vector of job sizes. The configuration ILP assigns one configuration onto each machine such that all jobs, i.e., the corresponding job sizes, are covered (see below for precise statement of this configuration ILP).
When a new job arrives, the right-hand side of the ILP (corresponding to the number of present sizes) only changes by one. Due to known sensitivity results, this implies that there exists an optimal solution for the new problem close to the old one. Thus, most of the configurations stay the same. Hence, most jobs are placed as before. Setting up the configuration ILP for the few unplaced jobs yields small ILP dimensions. Thus, the new ILP is solvable efficiently and a migration factor of $f(1/\eps, k)$ is obtained. Further, if $k$ is a constant, this algorithms directly yields a robust EPTAS.

In the following, we first show the necessary preliminary results known such as sensitivity analysis. Then, we formally define the notion of configurations in our scheme together with the configuration ILP for this problem, and finally, present the algorithm together with its analysis.

\subparagraph{Sensitivity analysis of integer programs with respect to a change of the right hand side.}
For a point $x\in \mathbb{R}^{m}$ and a set $Y\subseteq \mathbb{R}^{m}$,
define $\dist(x,Y)$ as the minimal $\ell_{\infty}$-distance of $x$ to any point
in $Y$, i.\,e., $\dist(x,Y)=\min_{y\in Y}\{\lVert x-y \rVert_{\infty}\} = \min_{y\in Y}\{ \max_j \{ | x_j-y_j | \} \}$.

For a given constraint matrix $A\in \mathbb{Z}^{m\times n}$, a right-hand side
$b\in \mathbb{Z}^{m}$, and an objective
function coefficient vector $c\in \mathbb{Z}^{n}$, let $\intSol(A,b,c)$ be the set of optimal solution vectors of the corresponding ILP $ \min\{c^{\top}x\mid x\in
\mathbb{Z}^{n}_{\geq 0}, Ax = b \}$.

The sensitivity analysis with respect to a change of the right hand side of an ILP measures the distance between any optimal integral solution to the closest one for a changed right-hand side. Formally, we define $\sens(A,b,b',c)$ as $$\max_{x \in \intSol(A,b,c)}\{\dist(x, \intSol(A,b',c))\} . $$

\begin{proposition}[Theorem 5 in~\cite{DBLP:journals/mp/CookGST86}]\label{p:CookSensitivity}
  If both $\intSol(A,b,c)$ and $\intSol(A,b',c)$ are non-empty, we have $\dist(x,\intSol(A,b',c))\leq (\lVert b-b'  \rVert_{\infty}+2)\cdot n\cdot \subDet(A)$  for each
  $x\in \intSol(A,b,c)$, where $\subDet(A)$ is the maximum absolute value of a sub-determinant of $A$.
\end{proposition}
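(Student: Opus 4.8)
The statement is the right-hand-side sensitivity theorem of Cook, Gerards, Schrijver, and Tardos, so the plan is to reproduce their argument. I would actually aim for the slightly sharper claim that for every $x\in\intSol(A,b,c)$ there exists $x'\in\intSol(A,b',c)$ with $\lVert x-x'\rVert_\infty\le(\lVert b-b'\rVert_\infty+2)\cdot n\cdot\subDet(A)$, which immediately gives the bound on $\dist(x,\intSol(A,b',c))$. Fix $x\in\intSol(A,b,c)$, pick any $z\in\intSol(A,b',c)$, and set $d=z-x\in\mathbb{Z}^n$, so $Ad=b'-b$. Say that $y$ is a \emph{conformal part} of $d$ if $y_jd_j\ge 0$ and $\lvert y_j\rvert\le\lvert d_j\rvert$ for all $j$; conformal parts of a common vector add coordinate-wise without cancellation.

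The first step is to decompose $d$. I would consider the pointed polyhedron $Q=\{\,y:Ay=b'-b,\ y_j\ge 0\text{ if }d_j>0,\ y_j\le 0\text{ if }d_j<0,\ y_j=0\text{ if }d_j=0\,\}$, which contains $d$; its recession cone $C=\{\,y:Ay=0,\ y_j\ge 0\text{ if }d_j>0,\ y_j\le 0\text{ if }d_j<0,\ y_j=0\text{ if }d_j=0\,\}$ has finitely many extreme rays, and each can be chosen integral and primitive with $\ell_\infty$-norm at most $\subDet(A)$ by applying Cramer's rule to the square subsystem defining it. By Minkowski--Weyl together with Carath\'{e}odory, $d=v+\sum_{i=1}^{t}\mu_i r_i$ with $v$ a convex combination of vertices of $Q$, the $r_i$ extreme rays of $C$, $\mu_i\ge 0$, and $t\le n$. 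Writing $\mu_i=\lfloor\mu_i\rfloor+\theta_i$ with $0\le\theta_i<1$, I set $h_i=\lfloor\mu_i\rfloor r_i$ and $y_0=d-\sum_i h_i=v+\sum_i\theta_i r_i$. Then $y_0\in\mathbb{Z}^n$ (as $d$ and the $h_i$ are integral), $y_0$ is a conformal part of $d$, $Ay_0=b'-b$, and $\lVert y_0\rVert_\infty\le\max\{\lVert v_k\rVert_\infty:v_k\text{ a vertex of }Q\}+\sum_i\theta_i\lVert r_i\rVert_\infty$; bounding the vertex norms by Cramer's rule and using $t\le n$ and $\theta_i<1$ should yield exactly $\lVert y_0\rVert_\infty\le(\lVert b-b'\rVert_\infty+2)\cdot n\cdot\subDet(A)$. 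Note that each $h_i$ is an integral vector with $Ah_i=0$ that is a conformal part of $d$, and so is $\sum_{i\in T}h_i$ for every $T\subseteq\{1,\dots,t\}$.

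The second step is the cost argument, which I expect to be short. For any $T\subseteq\{1,\dots,t\}$, using that $\sum_{i\in T}h_i$ is a conformal part of $d$ and that $x,z\ge 0$, one checks coordinate-wise that $x+\sum_{i\in T}h_i\ge 0$ and $z-\sum_{i\in T}h_i\ge 0$; moreover $A(x+\sum_{i\in T}h_i)=b$ and $A(z-\sum_{i\in T}h_i)=b'$ because $Ah_i=0$. Applying this with $T=\{i\}$: optimality of $x$ for $b$ forces $c^\top h_i\ge 0$, optimality of $z$ for $b'$ forces $c^\top h_i\le 0$, hence $c^\top h_i=0$ for all $i$. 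Therefore $x':=z-\sum_{i=1}^{t}h_i=x+y_0$ is integral, nonnegative, satisfies $Ax'=b'$, and has $c^\top x'=c^\top z$, so $x'\in\intSol(A,b',c)$ and $\dist(x,\intSol(A,b',c))\le\lVert x-x'\rVert_\infty=\lVert y_0\rVert_\infty$, which is bounded as in the first step.

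The main obstacle will be the first step: establishing that the sign-restricted kernel cone $C$ has integral generators of $\ell_\infty$-norm at most $\subDet(A)$, and then carrying out the Cramer's-rule and Carath\'{e}odory bookkeeping precisely enough to land on the constant $(\lVert b-b'\rVert_\infty+2)\cdot n\cdot\subDet(A)$ rather than a looser estimate. The cost argument is the short and conceptual part, and it is precisely what makes the bound depend on $b$ and $b'$ only through $\lVert b-b'\rVert_\infty$: the potentially large portion of $z-x$ is carried by cost-zero circuits of $A$, all of which can be peeled off $z$ while remaining optimal for $b'$.
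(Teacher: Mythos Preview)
The paper does not prove this proposition at all: it is quoted verbatim as Theorem~5 of Cook, Gerards, Schrijver, and Tardos and used as a black box to derive Corollary~\ref{cor:sens}. So there is no ``paper's own proof'' to compare against.

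That said, your reconstruction is the Cook--Gerards--Schrijver--Tardos argument and is essentially correct. The decomposition of $d=z-x$ via the sign-restricted polyhedron $Q$ and its recession cone $C$, the peeling-off of the integral kernel elements $h_i=\lfloor\mu_i\rfloor r_i$, and the two-sided optimality step forcing $c^\top h_i=0$ are all right; the verification that each $h_i$ (and any partial sum) is a conformal part of $d$---hence $x+h_i\ge 0$ and $z-h_i\ge 0$---goes through because every summand in $d=v+\sum_k\mu_k r_k$ shares the sign pattern of $d$. Your own caveat is accurate: the only place requiring care is the Cramer's-rule bound on the vertices of $Q$ and on the primitive extreme rays of $C$ to recover exactly the constant $(\lVert b-b'\rVert_\infty+2)\cdot n\cdot\subDet(A)$; a sloppier estimate there would still give the qualitative result the paper needs (namely a bound polynomial in $n$, $\subDet(A)$, and $\lVert b-b'\rVert_\infty$), which is all that Corollary~\ref{cor:sens} and the subsequent EPTAS argument use.
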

This implies $\sens(A,b,b',c)\leq (\lVert b-b'
\rVert_{\infty}+2)\cdot n\cdot \subDet(A)$.
Further, by the Hadamard inequality, $\subDet(A) \leq \Delta^{m}\cdot
m^{m/2}$ where $\Delta$ denotes the largest absolute value of an entry in $A$.  We summarize the use of Proposition \ref{p:CookSensitivity} and Hadamard inequality as follows.
\begin{corollary}\label{cor:sens}
 If both $\intSol(A,b,c)$ and $\intSol(A,b',c)$ are non-empty, we get that $\sens(A,b,b',c)\leq (\lVert b-b'
\rVert_{\infty}+2)\cdot n\cdot \Delta^{m}\cdot
m^{m/2}$ where $\Delta$ denotes the largest absolute value of an entry in $A$.
  \end{corollary}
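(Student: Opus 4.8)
The plan is to simply chain together Proposition~\ref{p:CookSensitivity} with the Hadamard inequality, since the corollary is a quantitative strengthening obtained by replacing the abstract quantity $\subDet(A)$ with an explicit bound in terms of $\Delta$ and $m$. First, I would observe that Proposition~\ref{p:CookSensitivity} gives, for every $x\in\intSol(A,b,c)$, the bound $\dist(x,\intSol(A,b',c))\leq (\lVert b-b'\rVert_{\infty}+2)\cdot n\cdot \subDet(A)$; taking the maximum over all such $x$ and recalling the definition $\sens(A,b,b',c)=\max_{x\in\intSol(A,b,c)}\dist(x,\intSol(A,b',c))$ immediately yields
\[
\sens(A,b,b',c)\leq (\lVert b-b'\rVert_{\infty}+2)\cdot n\cdot \subDet(A).
\]

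Next I would bound $\subDet(A)$. Recall $A\in\mathbb{Z}^{m\times n}$, so any square submatrix $B$ of $A$ has order $t\leq m$. By the Hadamard inequality, $|\det(B)|\leq \prod_{i=1}^{t}\lVert b_i\rVert_2$, where $b_1,\dots,b_t$ are the rows of $B$. Each such row has $t\leq m$ entries, each of absolute value at most $\Delta$, so $\lVert b_i\rVert_2\leq \sqrt{t}\cdot\Delta\leq \sqrt{m}\cdot\Delta$, and therefore $|\det(B)|\leq (\sqrt{m}\,\Delta)^{t}\leq (\sqrt{m}\,\Delta)^{m}=\Delta^{m}\cdot m^{m/2}$ (using $\Delta\geq 1$, which holds as $A$ is integral and nonzero in the relevant rows; the degenerate case $\Delta=0$ is trivial). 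Taking the maximum over all square submatrices gives $\subDet(A)\leq \Delta^{m}\cdot m^{m/2}$.

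Finally, substituting this into the first displayed inequality gives
\[
\sens(A,b,b',c)\leq (\lVert b-b'\rVert_{\infty}+2)\cdot n\cdot \Delta^{m}\cdot m^{m/2},
\]
as claimed. I do not expect any genuine obstacle here: the statement is purely a matter of assembling a cited sensitivity bound with a classical determinant inequality, and the only point requiring a line of care is checking that the Hadamard estimate is applied to submatrices of order at most $m$ (the number of rows of $A$), which is where the exponent $m$ — rather than $n$ — comes from.
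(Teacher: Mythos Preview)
Your proposal is correct and follows exactly the paper's approach: the paper derives the corollary by combining Proposition~\ref{p:CookSensitivity} (which gives the bound in terms of $\subDet(A)$) with the Hadamard inequality $\subDet(A)\le \Delta^{m}\cdot m^{m/2}$. Your write-up simply makes the one-line Hadamard step explicit, including the harmless check that submatrices have order at most $m$.
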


\subparagraph{The Configuration ILP.}
Denote the current round by $t$, i.e., $t$ is the number of jobs released so far (including the newly released job).  Our scheme has a guessed value of the optimal makespan that we denote by $T$, and with respect to that guessed value we formulate a configuration ILP.
Denote the set of sizes present in round $t$ as $\mathcal P^{(t)}$.
A configuration after the $t$-th job has been released is a non-negative integer vector $\kappa = (k_1^\kappa, k_2^\kappa, \dots, k_{|\mathcal P^{(t)}|}^\kappa)$ indexed by the set of sizes of the jobs with index at most $t$.  We say that a machine has configuration $\kappa$ if for all $j$, $k_j^\kappa$ jobs with size $p_j \in \mathcal P^{(t)}$ are scheduled on that machine.

We call a configuration $\kappa \in \mathcal K^{(t)}$ \emph{valid} if $\sum_{j=1}^{|\mathcal P^{(t)}|} k_j^\kappa \leq k$ satisfying the cardinality constraints and $\sum_{j=1}^{|\mathcal P^{(t)}|} p_j k_j^\kappa \leq T$ fulfilling the makespan for some makespan guess~$T$. Otherwise, the configuration is \emph{non-valid}.
Denote the set of valid configurations of round $t$ as $\mathcal K^{(t)}$.

Define the variable $x_\kappa$ for the occurrence of valid configuration~$\kappa$. Denote by $a_j$ the number of jobs with size $p_j$ in the instance up to and including the $t$-th job. The configuration ILP~(config-ILP$)^{(t)}$ of round $t$ is the feasibility integer linear program stated as follows:

\begin{align*}
\tag{1}
\label{1}
& \sum_{\kappa \in \mathcal K^{(t)}} x_\kappa = m \\
\tag{2}
\label{2}
& \sum_{\kappa \in \mathcal K^{(t)}} x_\kappa \kappa_j^\kappa = a_j & \forall j = 1, \dots, |\mathcal P^{(t)}| \\
& x_{\kappa}\geq 0& \forall \kappa \in {\mathcal K^{(t)}}
\end{align*}

Constraint \eqref{1} assures that we use exactly one configuration for each machine. The constraint \eqref{2} is satisfied if all present jobs (job sizes) are covered, so that every job is assigned to exactly one machine. It holds that (config-ILP$)^{(t)}$ has $|\mathcal P^{(t)}| + 1$ rows and $|\mathcal K^{(t)}|$ columns.

We upper bound the number of valid configurations by $(k+1)^{|\mathcal P^{(t)}|}$, that is a valid upper bound as each job size can occur zero up to at most $k$ times. All entries in the constraint matrix are non-negative and their largest number is  $k_j^\kappa \leq k$ for some sizes $p_j$ and configuration $\kappa$. The largest component of the right-hand side vector is at most $\max\{m, \max_j\{a_j\}\} \leq t$, as we have $t$ jobs in round $t$ and cannot have more jobs of a certain size than the total number of jobs.  These upper bounds are used only for bounding the migration factor and time complexity of our scheme, the algorithm itself can compute the exact number of valid configurations, the maximum component of the constraint matrix, and for two right hand sides, it can compute the upper bound established in Proposition \ref{p:CookSensitivity}.

\subparagraph{The Algorithm.}  We next present the pseudo-code of the algorithm to schedule the $t$-th job and migrating some other jobs, then define formally the different steps of the algorithm, and later analyze its approximation ratio and its migration factor.  We let $\sigma^{(t)}$ denote the schedule after scheduling job $t$.

Place the first $m$ jobs onto different machines, the $j$-th job is assigned to machine $j$ (for $1\leq j \leq m$). Afterwards, each time a new job $t$ arrives, execute the following steps:
\begin{itemize}
\item[(1)] compute a lower and upper bound on the optimal makespan;
\item[(2)] if the size of the new job $t$ is smaller than $\delta/k$ times the current lower bound:
\begin{itemize}
\item[(2.1)] job $t$ is a small job and we place it onto an arbitrary machine among the machines that before this assignment have less than $k$ jobs.
\end{itemize}
\item[(3)] otherwise ($t$ is a large job):
\begin{itemize}
\item[(3.1)]round the sizes of the jobs;
\item[(3.2)] interpret the previous schedule (of the prefix of jobs not including job $t$) as a solution of (config-ILP$)^{(t)}$ and identify the jobs that remain untouched using the sensitivity result;
\item[(3.3)] set up and solve the reduced (config-ILP$)^{(t)}$ for the remaining jobs and assign them using the solution of the reduced (config-ILP$)^{(t)}$ onto the machines.
\end{itemize}
\end{itemize}
The step (3.3) requires that we know the desired (optimal) makespan. To circumvent this, we embed the last step in a binary search between the current lower and upper bound and take the lowest, feasible guess.

In the following, we go trough each step in more detail and directly argue their properties. Let $\eps$ be the desired approximation ratio. Further, assume some accuracy value~$\delta$ we specify in dependence of $\eps$ later on. Obviously, placing the first $m$ jobs onto different machines is optimal. Thus, let us assume that we have placed at least $m$ jobs, i.e., the next new job is the $t$-th job with $t \geq m+1$.  The following steps are then executed.

 \emph{Step (1) - Computing a lower and upper bound.} An estimation for the lower bound $\text{LB}^{(t)}$ of round~$t$ corresponds to an equal distribution of the overall size of all $t$ present jobs, but in some cases we decide to increase the lower bound beyond this value.  Since it is important that this value is not decreased, we use the following lower bound, $\text{LB}^{(t)} = \max\{ \text{LB}^{(t-1)},  \lceil \sum_{j=1}^t p_j/m \rceil , p_{\max}^{(t)}\}$ where $\text{LB}^{(0)}=0$, and $p_{\max}^{(t)}$ denotes the maximum size of a job up to and including job $t$. As the makespan is always integral, we can safely round the value up to obtain an integer one. An upper bound for round~$t$ places the largest $k$ jobs onto one machine, and we upper bound this quantity by the following upper bound $\text{UB}^{(t)} = k \cdot p_{\max}^{(t)}$. Note that the arrival of job $t$ may change these bounds, but when new jobs are released, these bounds cannot decrease, so they either stay the same or they increase. When the new job is released, we recalculate the two bounds in step (1).

 \emph{Step (2) - Assignment of small jobs.}  If the size of the new job $t$ is smaller than $\frac{\delta}{k} \cdot \text{\,LB}^{(t)}$, we say that job $t$ is a {\em small job} and we  place it onto any machine which has less than $k$ jobs. As job $t$ is small, this only produces a small error.  That is, in the solution constructed by the algorithm, the machine attaining the makespan will have a total size of small jobs assigned to it of at most $\delta$ times the current lower bound using the fact that the lower bounds cannot decrease.

Otherwise, we proceed with the following steps.

 \emph{Step (3.1) - Rounding.}  For each present job size $p_j < \delta/k \text{\,LB}^{(t)}$, set the job size to $p_j' = 0$.  Otherwise, for all $p_j \geq \delta/c \cdot \text{LB}^{(t)}$, we round the sizes geometrically, that is, job~$j$ has a rounded size of $p_j' = \lceil (1+\delta)^\ell/\delta \rceil \cdot \delta^2/c \cdot \text{LB}^{(t)}$ for the integer value of~$\ell$ satisfying $\delta/c \cdot  \text{LB}^{(t)} (1+\delta)^{\ell -1} < p_j \leq \delta/c \cdot  \text{\,LB}^{(t)} (1+\delta)^{\ell}$. Note that the rounded job size of a large job is at most $1+\delta$ times its size and at least the size of that job in the instance before the rounding.  Furthermore, in future iterations of the algorithm, the rounded size of the job stays the same until the first iteration where it is dropped to zero. Denote the set of (different) rounded job sizes of round $t$ by $\mathcal P^{(t)}$ where $|\mathcal P^{(t)}| \leq O(1/\delta \log(k/\delta))$. Further, we can use the compact representation of the jobs in the rounded instance as a vector  $(a_1, \dots, a_{|\mathcal P^{(t)}|})$ where $a_j$ states the number of jobs with sizes $p_j' \in \mathcal P^{(t)}$.

\emph{Step (3.2) - Applying sensitivity.} First, we interpret the schedule $\sigma^{(t-1)}$ of the previous iteration without inserting job~$t$ as a feasible solution of (config-ILP$)^{(t)}$ for the lower bound of iteration $t$  and for the instance after the rounding of iteration $t$. To do so, first replace the job sizes of each job in the schedule $\sigma^{(t-1)}$ by its current rounded size (that is an element of  $\mathcal P^{(t)}$). Recall that the new rounded size of a job is not larger than the rounded size of the same job in the previous iteration.  Then, we define for each machine the configuration of that machine (with respect to the rounded instance).  Last, for each configuration $\kappa \in \mathcal K^{(t)}$, we let  $x_\kappa$ be the number of machines of this configuration. As we have no objective function, feasibility corresponds to optimality (i.e., we define a dummy objective function vector $c$ that is the zero-vector, and then, every feasible solution for the ILP is also an optimal solution of that ILP).

Note that these configurations, which result from $\sigma^{(t-1)}$, i.e., from the schedule of the previous round, stay valid during the current round $t$: This is so as the lower bound can only increase and the makespan of the rounded instance (ignoring job $t$) can only decrease.

We solve the configuration ILP after introducing job $t$ with its rounded size in the desired time complexity as well.  If the configuration ILP is infeasible, then we increase the value of $\text{\,LB}^{(t)}$ by a multiplicative factor of $1+\delta$ and go back to step (2) without increasing the value of $t$.  Since the ratio between the valid upper bound on the optimal makespan and the value of the lower bound is at most $k$, we are guaranteed that within $O(1/\delta \log(k/(1+\delta)))$ iterations, we will have an iteration where the ILP is feasible and then, we continue to the remaining of this step (3.2).

Next, we compare the two configuration ILPs, both of which are with respect to the rounding carried out in step (3.1) of round $t$, the first is without job $t$, and the second takes job $t$ into account as well.  First observe that the constraint matrix is the same matrix for the two ILPs.
Due to the arrival of job $t$, the right-hand side only changes slightly: The corresponding component $a_j$ for which the $(j+1)$-th constraint in the configuration ILP counts the number of jobs of size $p'_t$ that are scheduled in the configurations of the machines,  is increased by one and all other components of the right hand side vector stay the same. Hence, by Corollary \ref{cor:sens}, we get that there exists an optimal solution for the altered problem with distance at most
\begin{align*}
&3|\mathcal K^{(t)}| \cdot k^{|\mathcal P^{(t)}|+1} \cdot (|\mathcal P^{(t)}|+1)^{(|\mathcal P^{(t)}|+1)/2}\\
& \leq  3 (k+1)^{|\mathcal P^{(t)}|} \cdot  k^{|\mathcal P^{(t)}|+1} \cdot (|\mathcal P^{(t)}|+1)^{(|\mathcal P^{(t)}|+1)/2}\\
& \leq 3 (k+1)^{(1/\delta \log(k/\delta))} \cdot (1/\delta \log(k/\delta))^{(1/\delta \log(k/\delta))+1}
 \in O((k/\delta)^{O(k/\delta \log^2(k/\delta)))} \ . \\
\end{align*}
The algorithm computes the value $Dist=3|\mathcal K^{(t)}| \cdot k^{|\mathcal P^{(t)}|+1} \cdot (|\mathcal P^{(t)}|+1)^{(|\mathcal P^{(t)}|+1)/2}$ exactly, and find a feasible solution $x$ to the configuration ILP (config-ILP$)^{(t)}$ of distance at most $Dist$ from the solution for the modified right hand side corresponding to $\sigma^{(t-1)}$.
 Next, we  leave $x_\kappa' = \max\{\lceil x_\kappa - Dist \rceil, 0\}$ many configuration $\kappa$ untouched for each $\kappa \in \mathcal K^{(t)}$.  These configurations are assigned to the machines, but in this assignment, we require that for every machine, we either keep the configuration of that machine (as it used to be based on the schedule $\sigma^{(t-1)}$) or we leave the machine temporarily without a configuration.  The jobs that were assigned in the schedule  $\sigma^{(t-1)}$ to a machine for which we have assigned a configuration are left assigned to the same machine, and these machines will not receive a new job.  The jobs previously assigned to a machine for which we temporarily leave without a configuration (at most $k$ jobs per such machine) are the jobs that may migrate.  We denote by $J'$ this subset of jobs that may migrate and we add $t$ to this set of jobs.

\emph{Step (3.3) - Scheduling the remaining jobs of $J'$ to $M'$.} We have to pack the remaining jobs of $J'$ onto $m' = |M'|$ left over machines, and we denote the set of leftover machines by $M'$.
To do that we use the solution $x$ to the configuration ILP which we found in step (3.2), and we allocate a configuration to each machine in $M'$.  We do so in a way that together with the machines for which we had configurations already, for every $\kappa$ the total number of machines assigned configuration $\kappa$ is exactly $x_{\kappa}$.  Since $x$ is a feasible solution for the ILP, $\sum_{\kappa} x_{\kappa}=m$ and so we can do that and every machine has exactly one configuration.  We allocate the  jobs of $J'$ to the machines in $M'$ based on the configurations of these machines, so that if the assigned configuration of a machine in $M'$ is $\kappa$ then for every job size in the rounded instance that is encoded in the configuration $\kappa$ we have exactly the number of jobs of this size assigned to that machine.

For the proof of the next theorem, we let $\eps=(1+\delta)^2-1 = 2\delta+\delta^2 \leq 3\delta$.

\begin{theorem}
The algorithm computing at each iteration $t$ the schedule $\sigma^{(t)}$ for the \pr\  is a robust EPTAS for constant values of $k$ with a constant migration factor.
\end{theorem}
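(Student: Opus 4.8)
The plan is to establish the three ingredients of the claim in turn: the $(1+\eps)$ approximation guarantee, the constant migration factor (for constant~$k$), and the EPTAS running time. Throughout I fix the accuracy parameter $\delta$ with $\eps=(1+\delta)^2-1$, and I use repeatedly the two monotonicity properties built into the algorithm: $\text{LB}^{(t)}$ is non-decreasing in $t$ (and across the internal $\text{LB}$-incrementing loop), while the rounded size of any fixed job is non-increasing over the iterations and eventually drops to $0$. The first consequence I would record is that the schedule $\sigma^{(t-1)}$, after replacing each job by its round-$t$ rounded size, is a feasible solution of $(\text{config-ILP})^{(t)}$ without job $t$: old valid configurations stay valid because the makespan of the rounded prefix can only shrink and the lower bound can only grow. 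This is precisely the point at which monotonicity is needed, and it is what makes the sensitivity step legitimate.

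For the approximation ratio I would argue as follows. A job declared small in step~(2) has size $<\frac{\delta}{k}\text{LB}^{(t)}\le\frac{\delta}{k}\text{LB}^{(\mathrm{final})}\le\frac{\delta}{k}\Opt$, and since every machine carries at most $k$ jobs, the total size of small jobs on any machine is at most $\delta\Opt$; so it suffices to bound the large-job load. Reinterpreting an optimal schedule of the original instance on the rounded instance (small jobs at size $0$, each large job inflated by a factor at most $1+\delta$) gives a feasible assignment with $\le k$ jobs per machine and makespan $\le(1+\delta)\Opt$, hence $(\text{config-ILP})^{(t)}$ is feasible for the guess $T=\lceil(1+\delta)\Opt\rceil$. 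Consequently the binary search over $T$ (together with the $\text{LB}$-incrementing loop, whose termination within $O(\frac1\delta\log k)$ rounds follows from $\text{UB}^{(t)}/\text{LB}^{(t)}\le k$ and from the fact that $T=\text{LB}^{(t)}$ becomes feasible once $\text{LB}^{(t)}$ reaches $(1+\delta)\Opt$) returns a guess $T\le(1+\delta)^2\Opt$ up to the discretization of $T$. Adding the small-job term $\delta\Opt$ and absorbing everything into $\eps$ by a final constant rescaling of $\delta$ yields makespan at most $(1+\eps)\Opt$.

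For the migration factor I would use \cref{cor:sens}. When a large job $\jnew$ arrives, the only change in the right-hand side of $(\text{config-ILP})^{(t)}$ is that the component $a_j$ indexed by the rounded size of $\jnew$ increases by one, so $\sens\le Dist = 3|\mathcal K^{(t)}|\cdot k^{|\mathcal P^{(t)}|+1}\cdot(|\mathcal P^{(t)}|+1)^{(|\mathcal P^{(t)}|+1)/2}$; using $|\mathcal K^{(t)}|\le(k+1)^{|\mathcal P^{(t)}|}$ and $|\mathcal P^{(t)}|=O(\frac1\delta\log\frac k\delta)$, this $Dist$ is bounded by a computable function of $k$ and $\delta$. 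Since the algorithm keeps $x'_\kappa=\max\{\lceil x_\kappa-Dist\rceil,0\}$ machines of each configuration $\kappa$ untouched, the number of machines left without a configuration is at most $|\mathcal K^{(t)}|\cdot(Dist+1)$, again a function of $k$ and $\delta$, and only the at most $k$ jobs on each such machine, plus $\jnew$, are migrated. Each migrated job has rounded size at most the guess $T\le\text{UB}^{(t)}=k\,p_{\max}^{(t)}$, whereas $\jnew$ being large means $p_{\jnew}\ge\frac{\delta}{k}\text{LB}^{(t)}\ge\frac{\delta}{k}p_{\max}^{(t)}$, so $p_{\max}^{(t)}\le\frac k\delta p_{\jnew}$ and the total migrated size is at most $\bigl(|\mathcal K^{(t)}|\cdot(Dist+1)\cdot k+1\bigr)\cdot k\,p_{\max}^{(t)}\le f(1/\delta,k)\cdot p_{\jnew}$. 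Small-job arrivals cause no migration at all, so the migration factor is $f(1/\eps,k)$, a constant whenever $k$ is.

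Finally, for the running time, per arriving job the work outside the reduced ILP (updating the bounds, rounding, reading off the configuration of every machine from $\sigma^{(t-1)}$, and evaluating $Dist$ exactly) is polynomial in the encoding length $|I|$; the reduced ILP has only $|\mathcal P^{(t)}|+1=O(\frac1\delta\log\frac k\delta)$ constraints, and both its number of variables (valid configurations) and the number $m'=|M'|$ of freed machines are bounded by a function of $k$ and $\delta$, so it is solvable in time $f(1/\delta,k)$ (e.g.\ by brute force over the $f(1/\delta,k)$ jobs of $J'$), and this is repeated at most $O(\frac1\delta\log k)$ times for the binary search and the $\text{LB}$-loop. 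Summing over the $n$ jobs gives total time $f_1(1/\eps,k)\cdot|I|^{O(1)}$, which for constant $k$ is an EPTAS running time; combined with the two bounds above, this is a robust EPTAS for constant~$k$. I expect the main obstacle to be the approximation analysis of step~(3): one must carefully reconcile the geometric rounding, the zeroing of small jobs, the binary search over $T$, and the $\text{LB}$-incrementing loop so that the returned schedule provably lies within $(1+O(\delta))\Opt$, and one must check that the freely placed small jobs never exhaust the cardinality of a machine; the migration and running-time bounds are, by comparison, bookkeeping built directly on top of \cref{cor:sens}.
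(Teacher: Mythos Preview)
Your proposal is correct and follows essentially the same three-part structure as the paper's proof (approximation ratio via feasibility of the rounded configuration ILP at $T\approx(1+\delta)\Opt$, migration via the sensitivity bound of \cref{cor:sens}, EPTAS running time from the bounded ILP dimensions), and the minor deviations---a direct argument in place of the paper's induction on~$t$, bounding the migrated load by $k\cdot T\le k^2 p_{\max}$ rather than by $(1+\delta)\,\text{LB}^{(t)}$ per machine---are immaterial. The one place where you diverge slightly is in the running-time step: the paper explicitly solves the \emph{full} configuration ILP in step~(3.2) using the few-rows ILP algorithm of Jansen--Rohwedder, whereas you only account for the reduced ILP on $M'$; your route still works (releasing $\max\{x^{\text{old}}_\kappa-\text{Dist},0\}$ machines per configuration and then brute-forcing the reduced instance of size $f(k,\delta)$ is feasible exactly when the full ILP is, by sensitivity), but you should state this equivalence rather than leave the full-ILP solve implicit.
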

\begin{proof}
We first argue that if the current value of the lower bound when we apply step (2) (and the later steps) is at least $(1+\delta)$ times the optimal objective function value of the original (not the rounded one) instance up to and including job $t$, then the configuration ILP has a feasible solution.  This is so, as in the rounded instance, the optimal solution satisfies that every machine has jobs of total size of at most $(1+\delta)$ times their sizes in the original instance (by the rounding rule) and so interpreting the set of jobs assigned to a common machine as a configuration, results in a valid configuration for this value of the lower bound.

The feasibility of the solution returned by the algorithm is guaranteed as if the configuration ILP is feasible, then the algorithms finds a feasible assignment of jobs to configuration, and otherwise, there is no feasible solution for the configuration ILP for the current value of the lower bound and this is impossible if the current lower bound is at least the value of $(1+\delta)$ times the cost of the optimal solution for the original instance.

Next, we prove the approximation ratio of our algorithm.  If $t\leq m$, then the algorithm has placed at most one job per machine, and this is an optimal solution.  Thus, assume that $t>m$, and by induction on $t$, assume that the solution $\sigma^{(t-1)}$ has makespan of at most $(1+\delta)^2$ times the optimal makespan for the instance $I^{(t)}$.  Based on the above argument and the fact that the optimal makespan of a feasible solution cannot decrease if a new job is released, it suffices to show that if the algorithm returns a solution $\sigma^{(t)}$ of a higher makespan than the one of $\sigma^{(t-1)}$, then its new makespan is at most $(1+\delta)$ times the lower bound  $\text{\,LB}^{(t)}$ for the first iteration for which the configuration ILP is feasible. Observe that the solution of the configuration ILP results in a solution to the rounded instance (of the same rounding) of cost that is upper bounded by the maximum total size of jobs assigned by a valid configuration, that is by  $\text{\,LB}^{(t)}$.  When we consider the same solution as a solution to the original instance, the size of every large job is not increased, while the size of every small job is at most $\frac{\delta}{k} \cdot \text{\,LB}^{(t)}$, but since every machine is assigned at most $k$ jobs (so it is assigned at most $k$ small jobs), the resulting makespan of $\sigma^{(t)}$ is at most $(1+\delta) \cdot \text{\,LB}^{(t)}$.  Thus, the approximation ratio is at most $(1+\delta)^2 = 1+\eps$ as required.

Next, consider the running time of our scheme, that is, the time complexity of  the algorithm in iteration $t$ for placing job $t$ and modifying the schedule of some of the other jobs. It takes time $O(t)$ to compute the bounds and round the instance, as we have to go trough each size of the $t$ present jobs once.  The number of iterations of increasing the current lower bound until we get a value for which the configuration ILP results in a feasible solution is $O(\log_{1+\delta} k)$ by the ratio between the upper bound and the lower bound. The main argument in the running time for a fixed iteration after the rounding is the solution of the configuration ILP (first of the rounded instance in step (3.2) and then the reduced one in step (3.3)) while the other steps take $O(kt)$ time.  We are going to use the algorithm of \cite{DBLP:conf/innovations/JansenR19} to do both tasks.  For both ILPs,   the right-hand side is bounded by $n$, the constraint matrix is an integer matrix where every entry is a non-negative integer of at most $t$, and there are $|\mathcal P^{(t)}|+1$ constraints.  The resulting time complexity is less than $(|\mathcal P^{(t)}| k)^{O(|\mathcal P^{(t)}|)} \log(t) +t$, and using the fact that $|\mathcal P^{(t)}| \leq O(1/\delta \log(k/\delta))$, we conclude that our scheme is indeed an EPTAS.

 It remains to analyze the migration factor of our scheme.
If $t\leq m$ or job $t$ is a small job with respect to the value of $ \text{\,LB}^{(t)}$ for which the algorithm is able to find a feasible schedule, then no job is migrated and the claim regarding the migration factor clearly holds.
Assume that the last simple cases do not hold for job $t$.  Then, the algorithm migrate the set of jobs $J'$ that were assigned to machine in $M'$.  The number of machines in $M'$ is at most $Dist \cdot | \mathcal K^{(t)}| \leq 3(|\mathcal K^{(t)}|)^2 \cdot k^{|\mathcal P^{(t)}|+1} \cdot (|\mathcal P^{(t)}|+1)^{(|\mathcal P^{(t)}|+1)/2}$, and each such machine in $M'$ used to have jobs of total size at most $(1+\delta) \cdot \text{\,LB}^{(t)}$.  Thus, the migration factor is at most $(1+\delta) \cdot (k/\delta) \cdot 3(|\mathcal K^{(t)}|)^2 \cdot k^{|\mathcal P^{(t)}|+1} \cdot (|\mathcal P^{(t)}|+1)^{(|\mathcal P^{(t)}|+1)/2}$ and this is upper bounded by a constant for every fixed value of $\eps$ (and thus, the corresponding fixed value of $\delta$, for every constant value of $k$).
\end{proof}

\section{Concluding remarks on the case of Class Constraint Scheduling\label{sec:ClCS}}
A closely related question asks to schedule $n$ jobs, where each job~$j$ admits some class $c_j \in \{1, \dots, C\}$, on $m$ machines, each with a class restriction of $k$. Thus, in contrast to before, the restriction for the machines does not limit the number of jobs scheduled on them but limits the number of classes the jobs belong to. Again, we aim to find a schedule that minimizes the makespan. This problem, called Class Constraint Scheduling (ClCS), is a generalization of \pr\ and it seems rather hopeless to handle in the pure online case.  In fact, for this problem, even constant migration factors do not assist the algorithm.

\subparagraph{Lower Bound on the competitive ratio of robust algorithms for ClCS on identical machines.}
A lower bound of $m$ holds as follows. The input starts with $m$ jobs each of size $1$ that belong to a common class. They have to be placed onto one machine so that no additional class slots are occupied on different machines which we might need later on and cannot free as following jobs might have a size smaller than $1$ over the migration factor (in any other assignment, we may end up reporting an infeasible instance although there is a feasible solution). However, if the input ends after these $m$ jobs, it would have been optimal to place one job onto each machine to obtain the optimal makespan of $1$. This yields a lower bound of $m$. The lower bound of $m$ is tight as a greedy algorithm that assigns the classes to machines so that each machine has at most $k$ classes and then, every job of a given class is assigned to the (unique) machine where this class is assigned is $m$-competitive.  This holds, as the algorithm finds a feasible solution whenever the number of classes is at most $mk$ and its makespan is not larger than the total size of all jobs that is at most $m$ times the optimal makespan.

\subparagraph{Lower Bound on the competitive ratio of robust algorithms for ClCS on uniform machines.}
Let $\beta$ be the migration factor of a given algorithm.
Let the speed values be $s_1 = 1$ for the first machine and  $s>1$ for the remaining $m-1$ ones, i.e.,  $s = s_2 =  \dots = s_m$.
The input starts with $mk$ unit-sizes jobs, each with a unique class $c_1, \dots, c_{mk}$. They have to be distributed among the machines such that each machine gets $k$ different jobs. W.l.o.g. let us suppose that the first machine is assigned the jobs of classes $c_1, \dots, c_k$ and let $k'=\min\{ k,m-1\}$.  Let $\eps>0$ be a small number and let $M>0$ be a large integer number.
Next, we receive $M\beta$ rounds. In each round, $k'$ jobs arrive each of size $1/\beta-\eps$ where for each class among $c_1, \dots, c_{k'}$, there is exactly one job of each round of the class. In an optimal solution, the classes $c_1, \dots, c_{k'}$ would be distributed among the $k'$ fastest machines. For a sufficiently large value of $M$, this would yield a makespan of at most $M/s + k/s$.
In the schedule of the algorithm, the jobs of size smaller than $1/\beta$ are all placed onto the slowest machine so the makespan is $k + k'M\cdot (1-\beta\eps)$. Due to the small sizes of the second phase jobs, no migration is allowed of the first phase jobs. Thus, the competitive ratio of an algorithm with competitive ratio $\beta$ is obtained for $\eps$ tending to zero and $M$ grow unbounded.  The resulting lower bound is  $sk'=s\cdot \min\{k,m-1\}$.

\bibliography{lib}

\end{document}